\newcommand{\crefpart}[2]{\cref{#1}(\labelcref{#1-#2})}
\newcommand{\crefsubpart}[3]{\cref{#1}(\labelcref{#1-#2})(\labelcref{#1-#2-#3})}
\newcommand{\crefsubpartonly}[3]{\cref{#1-#2}(\labelcref{#1-#2-#3})}
\newcommand{\refintitle}[1]{\texorpdfstring{\ref{#1}}{\ref*{#1}}}
\newlist{parts}{enumerate}{2}
\Crefname{partsi}{Part}{Parts}
\setlist[parts,1]{label=\alph*.,ref=\alph*}
\setlist[parts,2]{label=\roman*.,ref=\roman*}
\newlist{steps}{enumerate}{2}
\Crefname{stepsi}{Step}{Steps}
\Crefname{stepsii}{Step}{Steps}
\setlist[steps,1]{label=\arabic*.,ref=\arabic*}
\setlist[steps,2]{label=(\alph*),ref=\arabic{stepsi}(\alph*)}
\newtheorem{theorem}{Theorem}[section]
\newtheorem{lemma}[theorem]{Lemma}
\newtheorem{proposition}[theorem]{Proposition}
\newtheorem{corollary}[theorem]{Corollary}
\theoremstyle{definition}
\newtheorem{remark}[theorem]{Remark}
\newtheorem{definition}[theorem]{Definition}
\newtheorem{example}[theorem]{Example}
\DeclareMathOperator{\poly}{poly}
\DeclareMathOperator{\Rev}{Rev}
\DeclareMathOperator{\supp}{supp}
\DeclareMathOperator*{\EE}{\mathbb{E}}
\DeclareMathOperator*{\PP}{\mathbb{P}}
\DeclareMathOperator*{\Max}{Max}
\DeclareMathOperator{\MAX}{MAX}
\DeclareMathOperator{\APPROX}{APPROX}
\DeclareMathOperator{\freq}{freq}
\newcommand{\OPT}{\mathit{OPT}}
\newcommand{\RR}{\mathbb{R}_+}
\newcommand{\NN}{\mathbb{N}}
\newcommand{\eqdef}{\triangleq}
\newcommand{\epsfloor}[1]{\lfloor#1\rfloor_\varepsilon}
\newcommand{\epsceil}[1]{\lceil#1\rceil_\varepsilon}
\title{Efficient Empirical Revenue Maximization in Single-Parameter Auction Environments}
\author{Yannai A. Gonczarowski\thanks{Einstein Institute of Mathematics, Rachel \& Selim Benin School of Computer Science \& Engineering and Federmann Center for the Study of Rationality, The Hebrew University of Jerusalem, Israel; and Microsoft Research, \emph{E-mail}: \href{mailto:yannai@gonch.name}{yannai@gonch.name}.}
\and
Noam Nisan\thanks{Rachel \& Selim Benin School of Computer Science \& Engineering and Federmann Center for the Study of Rationality, The Hebrew University of Jerusalem, Israel; and Microsoft Research, \emph{E-mail}: \href{mailto:noam.nisan@gmail.com}{noam.nisan@gmail.com}.}
}
\date{April 9, 2017}
\begin{document}

\maketitle

\begin{abstract}
We present a polynomial-time algorithm that, given samples from the unknown valuation distribution of each bidder, learns an auction that approximately maximizes the auctioneer's revenue in a variety of single-parameter auction environments including matroid environments, position environments, and the public project environment. The valuation distributions may be arbitrary bounded distributions (in particular, they may be irregular, and may differ for the various bidders), thus resolving a problem left open by previous papers. The analysis uses basic tools, is performed in its entirety in value-space, and simplifies the analysis of previously known results for special cases. Furthermore, the analysis extends to certain single-parameter auction environments where precise revenue maximization is known to be intractable, such as knapsack environments.
\end{abstract}

\section{Introduction}\label{intro}

We start by considering auctions of a single item to $n$ bidders (we later extend our results beyond the single-item auction environment).
The value (i.e., maximum willingness to pay) $v_i$ for the item of each bidder $i$ is distributed according to some {\em unknown} (perhaps irregular) distribution~$F_i$
that is supported on a known bounded interval $[0,H]$, and the values of the bidders are drawn independently of each other.
An auctioneer is given $t=\poly(H,n,\nicefrac{1}{\varepsilon})$ random samples
from each $F_i$, and aims to design an auction that, with high probability, approximately
maximizes her revenue from (future bidders with values drawn from) $F_1 \times \cdots \times F_n$. (I.e., the goal is to PAC learn a revenue-maximizing auction.)

The most natural approach would be to take, for each bidder $i$, the empirical distribution
$\hat{F}_i$ defined as the uniform distribution over the (multi)set of the $t$ samples from~$F_i$, and
design an auction that maximizes the empirical revenue, i.e., the revenue from $\hat{F}_1 \times \cdots \times \hat{F}_n$.
One would hope that the resulting auction would approximately maximize the revenue from 
$F_1 \times \cdots \times F_n$ as well.

Unfortunately, such an approach may be prone to overfitting the samples, in which case the empirical-revenue-maximizing auction may
perform poorly on $F_1 \times \cdots \times F_n$. Therefore, recent papers have focused
on designing auctions that approximately maximize the empirical revenue, in a way that is robust against overfitting.\footnote{It is nonetheless still not entirely clear whether overfitting can actually occur with nonnegligible probability for bounded valuation distributions.}
These papers combine the delicate understanding of the nature
of optimal auctions due to \cite{Myerson} with various techniques of 
converting an auction that is good for variants of the empirical
distribution to one that is good for the original distribution, in
several special cases: a single buyer with a regular value distribution \citep{Huang-Mansour-Roughgarden},
multiple buyers with regular distributions (\citealp{Cole-Roughgarden}; \citealp{Devanur-Huang-Psomas}\footnote{The authors of \cite{Devanur-Huang-Psomas} have recently notified us (personal communication, April~2017) that it turns out that their techniques can be generalized to also yield a result similar to our \cref{intro-empirical}, along with a few special cases of our \cref{intro-empirical-single-parameter}. The updated version of their paper is now available on \texttt{arXiv.org}.}), and
multiple buyers with i.i.d.\ irregular distributions \citep{ironing-in-the-dark}.  The general case (i.e., that of non-i.i.d., possibly irregular, distributions) remained an open problem \citep{jamie-tutorial}, even though it was shown that, information-theoretically, it should be possible \citep{Morgenstern-Roughgarden}.

\medskip

In this paper, we resolve the above-described open problem by showing how to take the auction that is optimal for the empirical sampled distribution $\hat{F}_1 \times \cdots \times \hat{F}_n$, and efficiently \!\emph{``round''} it to integer 
multiples of $\varepsilon$ in a way that ensures that the rounded auction is near-optimal also for the original distribution $F_1 \times \cdots \times F_n$.
The analysis uses basic tools, is performed in its entirety in value-space (rather than quantile-space, or ``virtual-value''-space), and also simplifies the analysis of above-described previously known results for special cases.
We now give a high-level review of our analysis for the basic case of a single-item auction. In \cref{single-parameter}, we extend our analysis to a variety of single-parameter auction environments including matroid environments, position environments, and the public project environment, as well as to certain single-parameter auction environments where precise revenue maximization is known to be intractable, such as knapsack environments.

We start by defining what we mean by ``rounding'' an auction.
To do so, we view the domain~$[0,H]$ of the possible valuations (values) of each bidder as composed of the disjoint union of the semiopen \emph{$\varepsilon$-intervals} $\bigl[j\cdot\varepsilon,(j+1)\cdot\varepsilon\bigr)$ for $j=0,\ldots,\nicefrac{H}{\varepsilon}$.

\begin{definition}[$\varepsilon$-Coarse Auction; $\varepsilon$-Rounding]\label{intro-coarse-rounding}\leavevmode
\begin{itemize}
\item
We call an auction $\varepsilon$-\emph{coarse} if its \emph{outcome}, i.e., (item) \emph{allocation} and payment, is constant within each product of \mbox{$\varepsilon$-intervals}. That is, the outcome of an \mbox{$\varepsilon$-coarse} auction for bids $v_1,\ldots,v_n$ depends only on the indices $j_1,\ldots,j_n$ of the \mbox{$\varepsilon$-intervals} to which the bids $v_1,\ldots,v_n$ respectively belong (i.e., $j_i\cdot\varepsilon\le v_i<(j_i+1)\cdot\varepsilon$ for every~$i$).
\item
We \emph{$\varepsilon$-round} an auction $A$ to an $\varepsilon$-coarse auction $A'$ by choosing
for each $\varepsilon$-interval~$j$ of each bidder $i$ a fixed value $v^i_j \in \bigl[j\cdot\varepsilon,(j+1)\cdot\varepsilon\bigr)$ in that $\varepsilon$-interval, and defining the allocation rule of the auction $A'$ to treat all bids of bidder $i$ in that \mbox{$\varepsilon$-interval} as the auction $A$ treats $v^i_j$. In other words, $A'$ is the unique $\varepsilon$-coarse auction whose allocation on every tuple $(v^1_{j_1},\ldots,v^n_{j_n})$ of these chosen values  is identical to that of~$A$. The payments of $A'$ are defined to be the unique payments that make it incentive compatible, and are thus generally \emph{not} identical to those of $A$.
\end{itemize}
\end{definition}

Our notion of rounding auctions allows for many different ways of $\varepsilon$-rounding any given auction, depending on the choice of the values $v^i_j$.
One might hope that all of these roundings achieve approximately the same revenue as the original auction, but that
is not the case.\footnote{While it is true that the price paid by a bidder i for winning the item in a rounded auction is indeed approximately the same whether or not we round all bids of that bidder, there are two complications:  the minor one is that bids of bidder $i$ that used to win the item in the original auction
may cease to do so in the rounded auction; the major one is that in the original auction, tiny changes in the bid of bidder~$i$ may result in arbitrary changes in the
payment of another bidder $j$.}
Our main \lcnamecref{intro-round} shows that \emph{some} $\varepsilon$-rounding of any given auction does achieve approximately the revenue of the nonrounded auction.  Moreover, for an auction that is optimal for the product of uniform distributions over finite sets of (sampled) values,  finding such an $\varepsilon$-rounding can be done in polynomial time.

\begin{lemma}[See also \cref{exists,efficient}]\label{intro-round}\leavevmode
\begin{parts}
\item\label{intro-round-exists}
	For every product distribution $F_1 \times \cdots \times F_n$ and auction $A$,
	there exists an $\varepsilon$-coarse auction $A'$ that is an $\varepsilon$-rounding of $A$ and whose revenue from $F_1 \times \cdots \times F_n$ is less than an additive $\varepsilon$ smaller than that of $A$.
\item\label{intro-round-efficient}
	There exists a deterministic algorithm that runs in time $poly(H,n,\nicefrac{1}{\varepsilon},t)$, takes as input\ \ i) a product distribution $\hat{F}_1\times\cdots\times\hat{F}_n$ where each $\hat{F}_i$ is uniform over a multiset of $t$ (sampled) values,
	and\ \ ii) an auction $A$ that is optimal for $\hat{F}_1 \times \cdots \times \hat{F}_n$, and outputs an $\varepsilon$-coarse auction $A'$ that is an $\varepsilon$-rounding of $A$ and whose revenue from
	$\hat{F}_1 \times \cdots \times \hat{F}_n$ is less than an additive~$n\varepsilon$ smaller than that of $A$.
\end{parts}
\end{lemma}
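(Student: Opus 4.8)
The plan is to derive both parts from a single \emph{randomized rounding}: independently over all bidders $i$ and all $\varepsilon$-intervals $I_j\eqdef\bigl[j\varepsilon,(j+1)\varepsilon\bigr)$, draw the representative $v^i_j$ from $F_i$ conditioned on $I_j$ (from $\hat F_i$ conditioned on $I_j$ for part~(\labelcref{intro-round-efficient})), and let $A'$ be the resulting $\varepsilon$-rounding of $A$. The one observation that makes everything work is that \emph{with this rounding distribution the profile that $A'$ effectively feeds into $A$ is itself distributed as $F$}: if $v\sim F$ lands in $\varepsilon$-intervals $k_1,\dots,k_n$, then by definition $A'$'s allocation on $v$ is $A$'s allocation on $\mathbf v\eqdef(v^1_{k_1},\dots,v^n_{k_n})$, and since the interval-profile of $v$ has the same law as that of a fresh $F$-sample while, conditionally on it, each $v^i_{k_i}$ is drawn from $F_i\mid I_{k_i}$ exactly like $v_i$, we get $\mathbf v\sim F$. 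This is exactly what defuses the ``major complication'' of the footnote (a tiny change in one bid arbitrarily moving another bidder's payment): we will only ever compare $A'$'s payments on $v$ to $A$'s payments on $\mathbf v$, never to $A$'s payments on $v$.

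For part~(\labelcref{intro-round-exists}) I would then argue pointwise, bidder by bidder. For any incentive-compatible single-parameter auction, bidder $i$'s payment at a profile $w$ equals $w_i x_i(w)-\int_0^{w_i}x_i(t,w_{-i})\,\mathrm{d}t=\int_{(0,w_i]}y\,\mathrm{d}_y x_i(y,w_{-i})$, i.e.\ value integrated against the Stieltjes measure of the monotone allocation curve. In $A'$, bidder $i$'s allocation curve is the step function that on $I_m$ equals $x^A_i$ at $v^i_m$ against the rounded opponents $\mathbf v_{-i}$, so its Stieltjes measure carries over each cell $(v^i_{m-1},v^i_m]$ exactly the same mass as the Stieltjes measure of $x^A_i(\,\cdot\,,\mathbf v_{-i})$, merely relocated to the grid point $m\varepsilon$, which lies within $\varepsilon$ of every point of that cell; hence $\bigl|p^{A'}_i(v)-p^A_i(\mathbf v)\bigr|\le\varepsilon\,x^A_i(\mathbf v)$. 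Summing over $i$ and using $\sum_i x^A_i(\mathbf v)\le1$ for the single item,
\[
  \textstyle\sum_i p^{A'}_i(v)\;\ge\;\sum_i p^A_i(\mathbf v)-\varepsilon\sum_i x^A_i(\mathbf v)\;\ge\;\sum_i p^A_i(\mathbf v)-\varepsilon ,
\]
and in fact strictly. Taking expectations over $v$ and the representatives and using $\mathbf v\sim F$, so that $\EE\bigl[\sum_i p^A_i(\mathbf v)\bigr]=\Rev(A,F)$, yields $\EE[\Rev(A',F)]>\Rev(A,F)-\varepsilon$, so some realization of the representatives gives the desired $A'$.

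For part~(\labelcref{intro-round-efficient}) the same estimate holds with $\hat F$ in place of $F$, so a good rounding \emph{exists}; the work is to produce one efficiently and deterministically. Here I would round one coordinate at a time: having already made $A$ \mbox{$\varepsilon$-coarse} in coordinates $1,\dots,r-1$ (call this auction $B$), round coordinate $r$ of $B$ (call the result $B^{(r)}$). For a single-coordinate rounding, every other bidder's payment at every profile is \emph{literally} unchanged and only bidder $r$'s changes --- by at most $\varepsilon\,x^B_r(\mathbf v)\le\varepsilon$ --- so each round costs at most $\varepsilon$ in expectation and the $n$ rounds cost at most $n\varepsilon$ in total, exactly the claimed bound. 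To make a round deterministic and efficient, observe that once the earlier representatives are fixed, $\Rev\bigl(B^{(r)},\hat F\bigr)$ \emph{decomposes as a sum over $\varepsilon$-intervals $j$ of a term depending only on the representative $v^r_j$}: $v^r_j$ enters the auction only through the event $\{v_r\in I_j\}$ and, for bidder $r$'s own payment, through the allocations at the chosen $v^r_m$'s, and in each case the contribution splits per interval. One may therefore optimize each $v^r_j$ over $I_j$ independently; since $A$ is revenue-optimal for the finitely supported $\hat F$, its outcome in coordinate $r$ changes only at bidder $r$'s sample points, so $v^r_j$ has only polynomially many essentially distinct choices, each per-interval term is a sum of expected payments in a threshold-type auction over a product of finitely supported distributions and so is computable in $\poly(H,n,\nicefrac1\varepsilon,t)$ time by standard CDF manipulations, and taking the per-interval maximizers can only beat the randomized choice.

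The step I expect to be the real obstacle is \emph{identifying the conditional-distribution rounding} together with the bookkeeping it unlocks: once one sees that $\mathbf v\sim F$, the footnote's ``major complication'' evaporates and everything reduces to the one-interval estimate $\bigl|p^{A'}_i(v)-p^A_i(\mathbf v)\bigr|\le\varepsilon\,x^A_i(\mathbf v)$. The remaining delicate point, for part~(\labelcref{intro-round-efficient}), is to check that the iterated substitution of representatives leaves the auctions $B=A^{(1,\dots,r-1)}$ with enough of the optimal auction's structure for the per-interval terms to stay polynomially computable (and that ties in the optimal allocation are broken compatibly) --- where a careful writeup must do honest work.
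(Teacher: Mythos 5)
Your proposal is correct in substance and rests on the same key idea as the paper: round each $\varepsilon$-interval to a representative drawn from the \emph{conditional} distribution $F_i|_j$ (resp.\ $\hat F_i|_j$), so that the effectively-fed profile $\mathbf v$ is again distributed as $F$, and then derandomize. The differences are organizational rather than conceptual. For part~(\labelcref{intro-round-exists}), where you use the payment identity and the ``Stieltjes mass relocated by less than $\varepsilon$'' argument against the coupled profile $\mathbf v$, the paper instead does an action-by-action case analysis (\cref{one-by-one}) and sums it into \cref{randomized-rounding,exists}; the two accountings are equivalent, and yours is arguably a slicker way to see the same amortization. For part~(\labelcref{intro-round-efficient}), your per-interval separability of $\Rev(B^{(r)},\hat F)$ in the representatives $v^r_j$ is correct (for bidder $r$'s own payment, only the representative of the interval containing her true threshold matters), and replaces the paper's sequential greedy choice whose per-interval loss is bounded by \cref{coarse-pij} together with the observation that $p^i_j$ is unaffected by rounding the other intervals of the same bidder; both yield the same $n\varepsilon$ bookkeeping. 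Two caveats. First, the sentence ``every other bidder's payment at every profile is literally unchanged'' is false as written --- it is exactly the ``major complication'' of the paper's footnote --- and is only true when, as your own framework intends, you compare $B^{(r)}$ at $v$ with $B$ at the profile in which coordinate $r$ has been replaced by its representative; a careful writeup must say this. Second, the step you dispose of as ``standard CDF manipulations'' is precisely where the paper does concrete work: exact computation of $\Rev^{A'}(\hat F)$ is \#P-hard for general ($\varepsilon$-coarse) auctions even with supports of size two (\cref{sharpp-complete}), so one must use that $A$ is Myersonian, that every (partial) rounding of a Myersonian auction is again Myersonian with explicitly computable ironed virtual valuations, and that the revenue of a Myersonian auction over a discrete product distribution is determined by the top two ironed virtual bids, which is what makes the polynomial-time exact evaluation of \cref{calc-rev} possible. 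You flag this as the delicate point and your sketch (candidate representatives only at sample points, product-form CDFs of ironed virtual values) is the right reason it works, so the gap is one of spelled-out detail rather than of ideas.
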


As opposed to existing auction-rounding schemes in the literature \citep{Hart-Nisan-b,Dughmi-Li-Nisan}, the rounding in \cref{intro-round} is not ``universal'' for all distributions (i.e., the rounded auction $A'$ does not approximate the revenue of $A$ on all product distributions), but rather needs to be tailored specifically
for $F_1 \times \cdots \times F_n$.\footnote{We show that this is unavoidable by giving an example of a two-bidder auction that is optimal for 
some product of regular distributions and yet for every $\varepsilon>0$, every $\varepsilon$-coarse auction must lose $\Omega(1)$ (i.e., at least some constant independent of $\varepsilon$) revenue for some valuation profile $(v_1,v_2)$.  This
is in contrast to the case of i.i.d.\ distributions, for which we show in \cref{iid} how to $\varepsilon$-round any auction that is optimal for some $F^n=F \times\cdots\times F$
into a single $\varepsilon$-coarse auction that loses less than an additive $\varepsilon$ in revenue universally for every profile of valuations $(v_1,\ldots,v_n)$, and thus also for every distribution.
This implies a simplified proof for the recent result of \cite{ironing-in-the-dark} for the i.i.d.\ case, and highlights the difference between the general case considered here and the i.i.d.\ case.}
Furthermore, as opposed to \cite{Devanur-Huang-Psomas}, we round the optimal auction for the empirical distribution, rather than find the optimal auction for a discretized version of the empirical distribution.
Our algorithm considers a randomized rounding, where all values $j\cdot \varepsilon \le v_i < (j+1)\cdot\varepsilon$
are rounded to a fixed value $v^i_j$ that is itself randomly chosen according to the distribution~$F_i$ conditioned on being in this $\varepsilon$-interval. We show that
the \emph{expected} revenue of the resulting auction (or more accurately, distribution over auctions) is within less than an additive $\varepsilon$ of the revenue of~$A$, and so some deterministic auction within this distribution over auctions loses less than an additive $\varepsilon$ in revenue compared to $A$. When for every $i\in N$ it is the case that $F_i=\hat{F}_i$ is a uniform distribution over a multiset of $t$ (sampled) values, our algorithm deterministically and efficiently searches for a deterministic auction that loses less than an additive $n\varepsilon$, within the above-defined randomized distribution over possible roundings.

Once we have \cref{intro-round}, the algorithm for empirical revenue maximization is simple: Let $A$ be \citeauthor{Myerson}'s optimal auction for the empirically sampled product distribution $\hat{F}_1 \times \cdots \times \hat{F}_n$ 
(which is readily computed from the samples as spelled out in \citealp{Elkind}), and output the $\varepsilon$-rounded approximately optimal auction $A'$ produced by the rounding algorithm for $A$, tailored to lose less than $n\varepsilon$ revenue on $\hat{F}_1 \times \cdots \times \hat{F}_n$.
When the \mbox{$\varepsilon$-rounding} algorithm is applied
to \citeauthor{Myerson}'s (ironed-virtual-welfare maximizing) optimal auction~$A$, the resulting \mbox{$\varepsilon$-coarse} auction $A'$ turns out to be of the 
following simple form:

\begin{definition}[$(H,\varepsilon)$-simple auction]
	We call an auction $(H,\varepsilon)$-simple if there exists a sequence $P$ of distinct pairs of the form $(i,j)$ where $i\in\{1,\ldots,n\}$ and $j\in\bigl\{0,\ldots,\lfloor\nicefrac{H}{\varepsilon}\rfloor\bigr\}$, such
	that the winner is always the first bidder $i$ in the order $P$ that has $v_i \ge j \cdot \varepsilon$, and the winner's payment is always the threshold value required to win.\footnote{
	This may be viewed as an $\varepsilon$-discretized variant of the ``leveled auctions'' of \cite{Morgenstern-Roughgarden}, albeit with more freedom in tie breaking; this finite discretization is exactly what allows us to provide an efficient algorithm rather than an information-theoretic result. The way we bound the number of $(H,\varepsilon)$-simple auctions is reminiscent of the way in which \cite{Devanur-Huang-Psomas} bound the number of optimal auctions on a finite valuation space.}
\end{definition}

As $(H,\varepsilon)$-simple auctions have a concise polynomial-length description (the sequence $P$), basic learning-theory intuition implies that overfitting can be ruled out.  Specifically,
we apply an elegant concentration inequality due to \cite{Babichenko-Barman-Peretz} \citep[see also][]{Devanur-Huang-Psomas},
which shows that the revenue of any fixed auction from the product of the empirically sampled distributions 
well approximates its revenue from the product of the true distributions. We then conclude by taking a union bound over
the small number of $(H,\varepsilon)$-simple auctions to obtain our main \lcnamecref{intro-empirical}:

\begin{theorem}[See also \cref{empirical}]\label{intro-empirical}
	There exists a deterministic algorithm that runs in time $poly(H,n,\nicefrac{1}{\varepsilon},\log\nicefrac{1}{\delta})$, takes as input $t=poly(H,n,\nicefrac{1}{\varepsilon},\log\nicefrac{1}{\delta})$ random samples from unknown distributions
	$F_1,\ldots,F_n$ supported on $[0,H]$ and, with probability at least~$1\!-\!\delta$, outputs a description of
	an $n$-bidder single-item auction that maximizes the revenue from $F_1 \times \cdots \times F_n$ up to less than an additive $\varepsilon$. The produced auction is an $(\frac{\varepsilon}{n+2},H)$-simple 
	$\frac{\varepsilon}{n+2}$-rounding of the auction that maximizes revenue from the empirical product of the uniform distributions over the samples.
\end{theorem}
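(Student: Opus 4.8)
\emph{Overall plan.} The plan is to apply \cref{intro-round} with parameter $\varepsilon'\eqdef\nicefrac{\varepsilon}{n+2}$ and to distribute the additive budget $\varepsilon=(n+2)\varepsilon'$ over three sources of loss: $n\varepsilon'$ for $\varepsilon'$-rounding the \emph{empirical} optimal auction, $\varepsilon'$ for $\varepsilon'$-rounding the \emph{true} optimal auction, and $\varepsilon'$ (split into two halves) for the sampling gap between empirical and true revenue. Throughout, write $F\eqdef F_1\times\cdots\times F_n$, $\hat F\eqdef \hat F_1\times\cdots\times\hat F_n$, and let $\OPT_F$, $\OPT_{\hat F}$ be the corresponding optimal revenues.

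\emph{The algorithm.} First I would draw $t$ samples from each $F_i$, form the empirical distribution $\hat F_i$ (uniform over the resulting multiset of $t$ values), and compute \citeauthor{Myerson}'s revenue-optimal (ironed-virtual-welfare maximizing) auction $A$ for $\hat F$; this is done from the samples in time $\poly(t)$ following \citet{Elkind}, and $\Rev_{\hat F}(A)=\OPT_{\hat F}$ by definition. Next I would run the algorithm of \crefpart{intro-round}{efficient} on $A$ with parameter $\varepsilon'$ to produce an $\varepsilon'$-coarse $\varepsilon'$-rounding $A'$ with $\Rev_{\hat F}(A')>\OPT_{\hat F}-n\varepsilon'$. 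Since $A'$ is the rounding produced by this algorithm applied to \citeauthor{Myerson}'s (ironed-virtual-welfare-maximizing) auction, it is $(H,\varepsilon')$-simple, described by a sequence $P$ of $\poly(n,\nicefrac{H}{\varepsilon'})$ pairs --- which is the succinct auction description the \lcnamecref{intro-empirical} outputs.

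\emph{Ruling out overfitting.} Then I would bound the number of $(H,\varepsilon')$-simple auctions by the number of sequences of distinct pairs $(i,j)$ with $i\in\{1,\dots,n\}$ and $j\in\bigl\{0,\dots,\lfloor\nicefrac{H}{\varepsilon'}\rfloor\bigr\}$, hence by $\bigl(n(\lfloor\nicefrac{H}{\varepsilon'}\rfloor+1)\bigr)!$, so that the logarithm of this count is $\poly(H,n,\nicefrac{1}{\varepsilon})$. Applying the concentration inequality of \citet{Babichenko-Barman-Peretz} to each auction in this finite class (the revenue of a single-item auction never exceeds $H$) and taking a union bound, for a suitably large $t=\poly(H,n,\nicefrac{1}{\varepsilon},\log\nicefrac{1}{\delta})$ we get that, with probability at least $1-\delta$, \emph{every} $(H,\varepsilon')$-simple auction $B$ satisfies $\bigl|\Rev_F(B)-\Rev_{\hat F}(B)\bigr|<\nicefrac{\varepsilon'}{2}$. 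Two auctions in this class matter: the algorithm's output $A'$; and the auction $A^{*\prime}$ obtained by applying \crefpart{intro-round}{exists} to \citeauthor{Myerson}'s revenue-optimal auction $A^*$ for the \emph{true} distribution $F$, which (being an $\varepsilon'$-rounding of \citeauthor{Myerson}'s auction) is again $(H,\varepsilon')$-simple and satisfies $\Rev_F(A^{*\prime})>\OPT_F-\varepsilon'$.

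\emph{Putting it together, and the main obstacle.} On the good event I would chain
\begin{align*}
\Rev_F(A')
 &> \Rev_{\hat F}(A') - \nicefrac{\varepsilon'}{2}\\
 &> \OPT_{\hat F} - n\varepsilon' - \nicefrac{\varepsilon'}{2}\\
 &\ge \Rev_{\hat F}(A^{*\prime}) - n\varepsilon' - \nicefrac{\varepsilon'}{2}\\
 &> \Rev_F(A^{*\prime}) - (n+1)\varepsilon'\\
 &> \OPT_F - (n+2)\varepsilon' \;=\; \OPT_F - \varepsilon,
\end{align*}
where the step marked $\ge$ uses optimality of $A$ for $\hat F$. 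The running time is $\poly(t)$ to compute $A$ plus $\poly(H,n,\nicefrac{1}{\varepsilon},t)$ for the rounding, i.e.\ $\poly(H,n,\nicefrac{1}{\varepsilon},\log\nicefrac{1}{\delta})$. I expect the main obstacle to lie not in any single estimate but in the bookkeeping that makes the union bound legitimate: one must check simultaneously that the output $A'$, the benchmark $A^{*\prime}$, and in fact \emph{every} possible realization of $A'$ over the sample randomness all belong to the same class of $(H,\varepsilon')$-simple auctions, whose log-cardinality is polynomial --- it is precisely this structural fact (inherited from \cref{intro-round} together with the form of \citeauthor{Myerson}'s auction) that upgrades the per-auction concentration bound into a PAC guarantee. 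Once this is in place, the concentration inequality and the arithmetic of the $\varepsilon'$-budget are routine.
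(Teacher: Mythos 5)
Your proposal is correct, and its architecture matches the paper's: compute the empirical Myersonian optimum via \cref{elkind}, round it with the algorithm of \crefpart{intro-round}{efficient} at granularity $\varepsilon'=\nicefrac{\varepsilon}{n+2}$, bound the number of $(H,\varepsilon')$-simple (equivalently, $\varepsilon'$-coarse Myersonian) auctions, apply the \citeauthor{Babichenko-Barman-Peretz} concentration bound with a union bound, and chain inequalities; the "main obstacle" you flag (that every realization of the output lies in a single polynomial-log-cardinality class) is exactly what the paper's \cref{small} supplies. Where you genuinely diverge is in how the true optimum enters the chain. The paper does \emph{not} use \crefpart{intro-round}{exists} in its proof of \cref{empirical} at all: its uniform-convergence statement (\cref{uniform}) is taken over $S^n_{\varepsilon'}\cup\bigl\{\OPT(F)\bigr\}$, i.e., the fixed, sample-independent auction $\OPT(F)$ is appended to the union bound as a singleton, and the chain is $\Rev^{\OPT(F)}(F)<\Rev^{\OPT(F)}(\hat F)+\varepsilon'\le\Rev^{\OPT(\hat F)}(\hat F)+\varepsilon'<\Rev^{A'}(\hat F)+(n+1)\varepsilon'<\Rev^{A'}(F)+(n+2)\varepsilon'$. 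You instead first $\varepsilon'$-round $\OPT(F)$ via \crefpart{intro-round}{exists} into a benchmark $A^{*\prime}$ that lies inside the coarse class, paying an extra $\varepsilon'$ there and compensating by tightening the per-auction concentration error to $\nicefrac{\varepsilon'}{2}$; both budgets sum to $(n+2)\varepsilon'=\varepsilon$. Your route buys a cleaner union bound (no need to adjoin $\OPT(F)$, which is generally not coarse) at the price of invoking the existence half of \cref{intro-round}; the paper's route needs only the efficient half plus one extra concentration event. One immaterial nitpick: the number of sequences of distinct pairs of all lengths is $\sum_{K}\frac{M!}{(M-K)!}\le e\cdot M!$ for $M=n\bigl(\lfloor\nicefrac{H}{\varepsilon'}\rfloor+1\bigr)$, slightly more than the $M!$ you state, but the log-cardinality remains $\poly(H,n,\nicefrac{1}{\varepsilon})$, which is all the union bound requires.
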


As \citeauthor{Myerson}'s characterization of optimal auctions applies to general \emph{single-parameter} auction environments,
we are able to generalize \cref{intro-empirical} to a wide variety of computationally tractable single-parameter environments by employing similar, yet somewhat more delicate, analysis.
First, our main ingredient, \crefpart{intro-round}{exists}, readily generalizes to arbitrary single-parameter environments, with the error term multiplied by the expected ``number of winners''.
Nevertheless, it is not possible to generalize \crefpart{intro-round}{efficient} since the underlying algorithm requires the ability to compute the expected overall revenue of an auction, which may be computationally intractable 
even for computationally tractable environments. Therefore, we instead take a different approach to prove an analogue of \crefpart{intro-round}{efficient}, by applying random sampling and then derandomizing the process using random bits obtained from the order in which (polynomially many) samples are drawn. Finally, while bounding the number of resulting $\varepsilon$-coarse auctions is easy in some environments, a more sophisticated argument based on Cramer's rule is required more generally.\footnote{We are not aware of the use of any similar argument in the literature on mechanism design, and hope that this type of argument may find additional uses in similar contexts in the future.}
Putting all of these together, we derive the following generalization of \cref{intro-empirical}, which also pushes the boundary of the set of environments that previous papers handled even for the special cases of regular or i.i.d.\ distributions:

\begin{sloppypar}
\begin{theorem}[See also \cref{empirical-single-parameter,empirical-single-parameter-lt}]\label{intro-empirical-single-parameter}
	For a class of single-parameter auction environments including computationally tractable deterministic environments (such as matroid environments and the public project environment) and position environments,
	there exists a deterministic algorithm that runs in time $poly(H,n,\nicefrac{1}{\varepsilon},\log\nicefrac{1}{\delta})$, takes as input $t=poly(H,n,\nicefrac{1}{\varepsilon},\log\nicefrac{1}{\delta})$ random samples from unknown distributions
	$F_1,\ldots,F_n$ supported on $[0,H]$ and, with probability at least $1\!-\!\delta$, outputs a description of
	an $n$-bidder auction for the given auction environment that maximizes the revenue from $F_1 \times \cdots \times F_n$ up to less than an additive~$\varepsilon$.
\end{theorem}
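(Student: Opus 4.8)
The plan is to follow the three-stage template of the single-item proof --- (i) round Myerson's empirically-optimal auction into a ``simple'' $\varepsilon'$-coarse auction that loses little empirical revenue, (ii) bound the number of $\varepsilon'$-coarse auctions that can result, and (iii) rule out overfitting by a union bound over this small family together with the \citet{Babichenko-Barman-Peretz} concentration inequality --- but with each stage adapted to a general single-parameter environment. Throughout, fix a discretization parameter $\varepsilon'=\Theta(\varepsilon/n)$, chosen so that all the additive losses below sum to less than $\varepsilon$; draw $t=\poly(H,n,\nicefrac{1}{\varepsilon},\log\nicefrac{1}{\delta})$ samples from each $F_i$; let $\hat F=\hat F_1\times\cdots\times\hat F_n$ be the empirical product distribution; and let $A$ be Myerson's ironed-virtual-welfare-maximizing auction for $\hat F$ in the given environment. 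This $A$ is computable in polynomial time whenever the environment's welfare-maximization problem is, which covers matroid environments, the public project environment, and position environments.

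\emph{Stage 1 (rounding).} The generalization of \crefpart{intro-round}{exists} to an arbitrary single-parameter environment follows the same randomized-rounding argument as in the single-item case: round every value in bidder~$i$'s $\varepsilon'$-interval~$j$ to a value $v^i_j$ drawn from the relevant distribution conditioned on that interval; each winner's threshold payment then moves by at most $\varepsilon'$, so the \emph{expected} revenue loss is at most $\varepsilon'$ times the expected number of winners, hence at most $n\varepsilon'$. The obstacle is that \crefpart{intro-round}{efficient} cannot be reproduced verbatim: its derandomization searches over roundings while repeatedly evaluating \emph{exact} expected revenue, but computing the expected overall revenue of an auction can be intractable even in a computationally tractable environment. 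Instead I would \emph{implement} the randomized rounding using randomness already present in the input. Conditioned on the sample multisets --- hence on $\hat F$ --- the order in which the samples arrive is a uniformly random permutation, so setting $v^i_j$ to be the first sampled value of bidder~$i$ that falls in interval~$j$ realizes exactly the randomized rounding relative to~$\hat F$, as a deterministic function of the (ordered) input. A bounded-differences/concentration argument over the (conditionally independent) choices $\{v^i_j\}$ then upgrades the expected-loss bound to a high-probability one: with probability at least $1-\delta/2$ over the draw, the resulting deterministic $\varepsilon'$-rounding $A'$ of $A$ satisfies $\Rev_{\hat F}(A')\ge\Rev_{\hat F}(A)-O(n\varepsilon')$. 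As in the single-item case, when $A$ is Myerson's auction, $A'$ has the form of an $\varepsilon'$-discretized ``leveled'' auction: a short sequence of (bidder, level) pairs, together with --- in position environments --- a bounded table of assignment/tie-breaking data.

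\emph{Stage 2 (counting).} To bound the number of possible outputs $A'$: in purely combinatorial deterministic environments (matroids, public project) $A'$ is determined by a sequence of at most $n(\nicefrac{H}{\varepsilon'}+1)$ distinct (bidder, level) pairs, so there are at most $\bigl(n(\nicefrac{H}{\varepsilon'}+1)\bigr)!=2^{\poly(H,n,\nicefrac{1}{\varepsilon})}$ of them --- few enough, since the sample complexity scales only with the logarithm of this count. For position environments the $\varepsilon'$-coarse auction is instead parameterized by real numbers (the ironed-virtual-value breakpoints that induce the leveling), and here I would invoke Cramer's rule: each such parameter solves a linear system whose coefficients are built from the (polynomially many) sample values and the fixed position weights, and there are only polynomially many such systems, each with a solution set of bounded size; hence the number of distinct $\varepsilon'$-coarse auctions that can arise is again $2^{\poly(H,n,\nicefrac{1}{\varepsilon})}$.

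\emph{Stage 3 (no overfitting).} By the \citet{Babichenko-Barman-Peretz} inequality, for any \emph{fixed} single-parameter auction its revenue from $\hat F$ is within an additive $\varepsilon'$ of its revenue from $F$ except with probability $\exp\bigl(-\Omega(t(\varepsilon')^2/\poly(H,n))\bigr)$. A union bound over the $2^{\poly}$-size family of possible outputs $A'$ from Stage~2, plus the single fixed auction $B$ that is optimal for $F$, together with a large enough $t=\poly(H,n,\nicefrac{1}{\varepsilon},\log\nicefrac{1}{\delta})$, gives that with probability at least $1-\delta/2$, simultaneously $\Rev_F(A')\ge\Rev_{\hat F}(A')-\varepsilon'$ and $\Rev_{\hat F}(B)\le\Rev_F(B)+\varepsilon'$. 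Combining with $\Rev_{\hat F}(A')\ge\Rev_{\hat F}(A)-O(n\varepsilon')$ (Stage~1), $\Rev_{\hat F}(A)\ge\Rev_{\hat F}(B)$ (optimality of $A$ for $\hat F$ among feasible auctions), and $\Rev_F(B)=\OPT$ yields $\Rev_F(A')\ge\OPT-O(n\varepsilon')>\OPT-\varepsilon$ with probability at least $1-\delta$ for the chosen $\varepsilon'$. The main obstacle is Stage~1 --- obtaining an \emph{efficient}, essentially deterministic rounding without the ability to compute expected revenue, via the sample-order derandomization and the attendant concentration bound --- together with the Cramer's-rule count in Stage~2 for position environments, where the coarse auctions form a continuum rather than a discrete family.
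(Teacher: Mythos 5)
Your Stage~1 is where the argument breaks. The sample-order trick does yield one legitimate draw from the $\hat F$-randomized $\varepsilon'$-rounding (conditioned on the multisets, the first sample in each interval is uniform over that interval's samples, and the choices are independent across intervals and bidders), but \cref{randomized-rounding-single-parameter} controls the loss of that draw only \emph{in expectation}, and no bounded-differences argument can upgrade this to a $1-\delta/2$ statement at scale $O(n\varepsilon')$: a single coordinate $v^i_j$ can have influence of order $H$. Concretely, take one bidder with $\hat F$ supported on two values $a<b$ in the same $\varepsilon'$-interval, $\Pr[b]=p$ with $pb>a$, so the optimal auction posts $b$; with probability $1-p$ (which can be a fixed constant, e.g.\ of order $\nicefrac{\varepsilon'}{H}$, independent of $\delta$) the rounding picks $v^1_j=a$ and the rounded auction never sells, losing $pb\approx H$, even though the expected loss is indeed below $\varepsilon'$. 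So a single realized rounding fails with probability that does not decay in $\delta$; Markov-type fixes force $\varepsilon''=\Theta(\varepsilon\delta/n)$ and hence $\poly(\nicefrac{1}{\delta})$ (not $\poly(\log\nicefrac{1}{\delta})$) running time and auction-class size. This is exactly why the paper's \cref{efficient-randomized-single-parameter} draws \emph{many} independent roundings, estimates each one's revenue on $\hat F$ by sampling valuation profiles (exact computation can be \#P-hard even for $\varepsilon$-coarse auctions, cf.\ \cref{sharpp-complete}), and selects the best; and why \cref{efficient-single-parameter} derandomizes with von~Neumann bits extracted from \emph{additional} sample pairs, with a separate fallback (a welfare-extracting auction at the modal values, flagged by the output bit $e$) for near-atomic distributions from which no bits can be extracted. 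Your proposal has no selection/estimation step and no low-entropy fallback, and the high-probability claim it rests on is false.

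Stage~2 also misallocates the tools. Encoding the rounded auction as a sequence of (bidder, level) pairs is valid precisely for environments solvable greedily by ironed virtual value (single item, matroids, \emph{and} position environments), but it fails for the public project environment, which your statement explicitly covers: there the allocation depends on the sign of $\sum_i\phi_i(v_i)$, which is not determined by the pairwise order of the $\phi_i(v_i)$'s and $0$. That is exactly where the paper's Cramer's-rule argument (\cref{small-single-parameter}) is needed, replacing the coarse virtual values by rationals of magnitude at most $M!$ that preserve all comparisons between subset sums of virtual values, giving $\bigl|S^X_{\varepsilon}\bigr|\le\exp\bigl(\poly(H,n,\nicefrac{1}{\varepsilon})\bigr)$ for every deterministic environment. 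You instead invoke Cramer's rule for position environments (where the order encoding already suffices), and your sketch there builds the linear systems from the realized sample values, which would make the auction family sample-dependent and thus unusable in the Stage~3 union bound, since that family must be fixed before the samples are drawn. Stage~3 itself, and the final chain of inequalities with $\varepsilon'=\Theta(\varepsilon/n)$, match the paper's \cref{empirical-single-parameter} and are fine once Stages~1 and~2 are repaired.
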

\end{sloppypar}

Finally, for single-parameter auction environments where precise revenue maximization is known to be intractable, but where efficient approximate revenue maximization up to some multiplicative factor $C$ is possible, we prove, under certain assumptions on this \mbox{up-to-$C$} maximization algorithm (that are satisfied, e.g., in the case of knapsack environments), a generalization of \cref{empirical-single-parameter}, providing a deterministic polynomial-time algorithm that learns from polynomially many samples a \emph{tractable} auction that with high probability approximates the maximum revenue from $F_1 \times \cdots \times F_n$ up to the same multiplicative factor of $C$, plus less than an additive $\varepsilon$. (See \cref{empirical-approx-single-parameter}.) As with \cref{intro-empirical-single-parameter}, this result also pushes the boundary of the set of environments that previous papers handled even for the special cases of regular or i.i.d.\ distributions.

\subsection{Structure}

The remainder of this paper is structured as follows. \cref{definitions} provides definitions and some background, \cref{existence} provides the derivation of \crefpart{intro-round}{exists}, \cref{efficiency} provides the derivation of \crefpart{intro-round}{efficient}, and \cref{convergence} provides the remainder of the derivation of \cref{intro-empirical}.
In \cref{single-parameter}, we extend the above single-item analysis to more general single-parameter auction environments, and derive \cref{intro-empirical-single-parameter} and its generalization for single-parameter auction environments where precise revenue maximization is intractable.
Some examples referenced in this paper are relegated to \cref{examples}, and the proofs of some of the results stated throughout this paper are relegated to \cref{proofs}. As mentioned above, \cref{iid} presents an analogous (yet simpler) analysis for the special case of i.i.d.\ distributions, where distribution-independent rounding of optimal auctions is possible.

\section{Model, Definitions, and Background}\label{definitions}

\subsection{Auctions}

The auctions we consider are deterministic direct-revelation dominant-strategy incentive-compatible (DSIC) and ex-post individually rational (IR) single-item auctions among $n$ bidders numbered $N\eqdef\{1,2,\ldots,n\}$, each having a valuation (value) in $[0,H]$ for some $H\in\RR$ that is known to the auctioneer.  We denote 
by $r^A(v_1,\ldots,v_n)$ the revenue of auction $A$ when the bidders have values $v_1,\ldots,v_n$, and for every $i\in N$, denote by $r^A_i(v_1,\ldots,v_n)$
the revenue of auction $A$ from bidder $i$ (i.e., the payment of bidder~$i$) when the bidders have values $v_1,\ldots,v_n$. (So $r^A(v_1,\ldots,v_n)=\sum_{i} r^A_i(v_1,\ldots,v_n)$.)
We furthermore denote by $r^A_i(v_i)\eqdef\EE_{\forall j\ne i: v_j\sim F_j} r^A_i(v_1,\ldots,v_n)$ the \emph{expected} revenue of auction~$A$ from bidder $i$, when bidder $i$ has value $v_i$ and the valuations of all other bidders $j$ are independently drawn from the respective distributions $F_j$, which will be clear from context.
The overall revenue of auction $A$ from a product distribution $F_1\times\cdots\times F_n$ is denoted by
$\Rev^A(F_1 \times \cdots \times F_n)\eqdef\EE_{v_1 \sim F_1,\ldots,v_n \sim F_n} r^A(v_1,\ldots,v_n)$.

\subsection{Rounding}

We now make \cref{intro-coarse-rounding} from \cref{intro} precise, and provide some supporting notation.

\begin{definition}[$\varepsilon$-Interval]
Let $\varepsilon>0$.
An \emph{$\varepsilon$-interval} is a semiopen interval of the form $\bigl[j\cdot\varepsilon,(j+1)\cdot\varepsilon\bigr)$ for some integer $j\in\bigl\{0,\ldots,\lfloor\nicefrac{H}{\varepsilon}\rfloor\bigr\}$. We say that $j$ is the \emph{index} of this $\varepsilon$-interval.
\end{definition}

\begin{sloppypar}
\begin{definition}[$\varepsilon$-Coarse Auction]
Let $\varepsilon>0$.
An $n$-bidder auction $A$ (for valuations in $[0,H]$) is said to be \emph{$\varepsilon$-coarse}, if for every pair of valuations profiles $(v_1,\ldots,v_n),(w_1,\ldots,w_n)\in[0,H]^n$ such that for every $i\in N$ there exists $j_i$ such that $v_i,w_i\in\bigl[j_i\cdot\varepsilon,(j_i+1)\cdot\varepsilon\bigr)$, the outcome (allocation and payment) of $A$ is the same for $(v_1,\ldots,v_n)$ and for $(w_1,\ldots,w_n)$.
\end{definition}
\end{sloppypar}

We next define how auctions can be ``rounded'' into coarse ones, using a sequence of ``rounding actions.''

\begin{definition}[$\varepsilon$-Rounding]
Let $A$ be an $n$-bidder auction and let $\varepsilon>0$.
\begin{itemize}
\item An \emph{$\varepsilon$-rounding action} is a triplet $(i,j,v^i_j)$ where $i\in N$ is a bidder, $j$ is an index of an $\varepsilon$-interval,
and $v^i_j \in \bigl[j\cdot\varepsilon,(j+1)\cdot\varepsilon\bigr)$ is  a value in that $\varepsilon$-interval.
Applying the $\varepsilon$-rounding action $(i,j,v^i_j)$  to the auction $A$ yields an auction $A'$ having the following allocation rule: given bids $v_1,\ldots,v_n$, 
if $v_i \in \bigl[j\cdot\varepsilon,(j+1)\cdot\varepsilon\bigr)$, then change the bid of $v_i$ to $v^i_j$ (for other bidders, or for the $i$th bidder
if her value is not in this $\varepsilon$-interval, keep the original bid unchanged), and then run the allocation rule of $A$ on
the (possibly) updated values. The payment rule of $A'$ is \emph{not} directly taken from~$A$, but is rather what is needed to
ensure truthfulness: each winning bidder pays her \emph{minimal winning bid} (i.e., the infimum of all bids that would have still allowed her to win when the bids of all other bidders are unchanged).
\item An \emph{$\varepsilon$-rounding rule} is a collection of $\varepsilon$-rounding actions $(i,j,v^i_j)$, one action for each pair of bidder $i\in N$ and index $j\in\bigl\{0,\ldots,\lfloor\nicefrac{H}{\varepsilon}\rfloor\bigr\}$ of an $\varepsilon$-interval.
Applying an $\varepsilon$-rounding rule to the auction $A$ yields an auction $A'$ obtained by applying all \mbox{$\varepsilon$-rounding} actions in the rule, in arbitrary order, to $A$.  I.e., given bids $v_1,\ldots,v_n$, for every bidder $i$ the bid $v_i$ is changed to $v^i_j$ where $j$ is the index of the $\varepsilon$-interval that contains the original bid $v_i$. Each winning bidder pays her minimal winning bid. Such an auction $A'$ is called an \emph{$\varepsilon$-rounding} of $A$.
\end{itemize}
\end{definition}

\begin{remark}
Any $\varepsilon$-rounding of any auction is $\varepsilon$-coarse.
\end{remark}

\subsection{Optimal Auctions}

Our analysis makes very weak use of \citeauthor{Myerson}'s (\citeyear{Myerson}) characterization of optimal single-item auctions, which we therefore only present to the extent required by our analysis. (We emphasize this weak use of \citeauthor{Myerson}'s characterization of optimal single-item auctions by using the nonstandard term ``Myersonian Auction.'')

\begin{definition}[Myersonian Auction, \citealp{Myerson}]\label{myersonian}
An $n$-bidder \emph{Myersonian auction} (for valuations in $[0,H]$) is a tuple $(\phi_i)_{i\in N}$, where for every $i\in N$,\ \ $\phi_i:[0,H]\rightarrow\mathbb{R}$ is a nondecreasing function called the \emph{ironed virtual valuation} of bidder $i$. In this auction, there is a winner unless $\phi_i(v_i)<0$ for all $i\in N$, and the winner is the bidder with lowest index among those whose bid $v_i$ maximizes $\phi_i(v_i)$; the winner pays her minimal winning bid.\footnote{This auction can also be made symmetric, by choosing the winner uniformly at random between all bidders whose bid $v_i$ maximizes $\phi_i(v_i)$, and adapting the payments accordingly. The revenue is the same either way. In this paper we use lexicographic ordering for simplicity.}
\end{definition}

\cite{Myerson} proved that for every continuous product distribution, there exists a Myersonian auction that obtains the optimal revenue. \cite{Elkind} showed the same for discrete product distributions, giving an efficient algorithm for computing this optimal auction.

\begin{theorem}[\citealp{Myerson}]\label{myerson}
For every product distribution $F=F_1\times\cdots\times F_n$, there exists a Myersonian auction $(\phi_i)_{i\in N}$, denoted $\OPT(F)$, that achieves maximum revenue from $F$ among all possible auctions. Moreover, for every $i\in N$, the ironed virtual valuation $\phi_i$ depends only on $F_i$.
\end{theorem}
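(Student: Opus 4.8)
The plan is to follow the classical argument of \cite{Myerson}, with the extension to discrete distributions due to \cite{Elkind}. First I would invoke the standard characterization of DSIC single-item auctions: such an auction is determined, up to an additive constant in each bidder's payments that is pinned down by IR, by a monotone allocation rule $(x_i)_{i\in N}$ — i.e.\ $x_i(v_i;v_{-i})\in\{0,1\}$ is nondecreasing in $v_i$ for every fixed $v_{-i}$ — together with threshold payments (each winner pays her minimal winning bid). Integrating the payment identity against $F_i$ and summing over $i$ yields the virtual-value formula
\[
\Rev^A(F)=\sum_{i\in N}\EE_{v\sim F}\bigl[\varphi_i(v_i)\,x_i(v)\bigr],
\]
where $\varphi_i$ is the (possibly non-monotone) virtual valuation of $F_i$ — in the continuous case $\varphi_i(v)=v-\nicefrac{\bigl(1-F_i(v)\bigr)}{f_i(v)}$, and in the discrete case the corresponding finite-difference quantity of \cite{Elkind} — which depends only on $F_i$.

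Next I would handle irregularity by \emph{ironing}: define $\phi_i$ as the derivative of the convex hull of the revenue curve of $F_i$ in quantile space (in the discrete case, the slopes of the lower convex envelope of the relevant finitely many points). The key facts, which I would cite from \cite{Myerson} for the continuous case and \cite{Elkind} for the discrete case, are that $\phi_i$ is nondecreasing, depends only on $F_i$, and satisfies $\EE[\varphi_i(v_i)x_i(v)]\le\EE[\phi_i(v_i)x_i(v)]$ for every monotone allocation rule, with equality whenever $x_i(\cdot;v_{-i})$ is constant on every maximal ironed interval of $\varphi_i$, for every $v_{-i}$. Substituting gives $\Rev^A(F)\le\sum_{i\in N}\EE_v[\phi_i(v_i)x_i(v)]$ for \emph{every} auction $A$.

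Then I would maximize the right-hand side pointwise. For each profile $v$, to maximize $\sum_i\phi_i(v_i)x_i(v)$ subject to $\sum_i x_i(v)\le1$ and $x_i(v)\in\{0,1\}$, one allocates to a bidder attaining $\max_i\phi_i(v_i)$ if this value is nonnegative and to no one otherwise; breaking ties by lowest index produces exactly the allocation rule of the Myersonian auction of \cref{myersonian}. Since each $\phi_i$ is nondecreasing, this allocation is monotone in every $v_i$, hence implementable via threshold payments — so it genuinely is a DSIC, IR auction — and it is constant on ironed intervals, so the inequality above is tight for it. Therefore this Myersonian auction, which we call $\OPT(F)$, attains revenue $\EE_v\bigl[\max_i\phi_i(v_i)\cdot\mathbf{1}[\max_i\phi_i(v_i)\ge0]\bigr]\ge\Rev^A(F)$ for every auction $A$; and each $\phi_i$ depends only on $F_i$ by construction.

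The main obstacle is the ironing step: checking that $\phi_i$ is well-defined for arbitrary bounded (irregular, possibly discrete) distributions, that the inequality $\EE[\varphi_i x_i]\le\EE[\phi_i x_i]$ with its equality condition holds for all monotone rules, and that the pointwise maximizer above is constant on ironed intervals, so that passing from $\varphi_i$ to $\phi_i$ loses no revenue. In the discrete case this additionally requires care with atoms, with the finite-difference definition of virtual values, and with tie-breaking at ironed-interval boundaries; these are exactly the points addressed by \cite{Myerson} (continuous) and \cite{Elkind} (discrete, with the accompanying efficient algorithm), which I would invoke rather than re-derive.
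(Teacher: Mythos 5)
Your proposal is correct: it reconstructs the classical virtual-value/ironing argument — revenue equals expected virtual surplus for monotone DSIC allocations, ironing preserves this as an upper bound with equality for rules constant on ironed intervals, and pointwise maximization of ironed virtual welfare (monotone, hence implementable with threshold payments) attains the bound, with each $\phi_i$ built from $F_i$ alone — deferring the delicate steps to \cite{Myerson} and, for discrete distributions, \cite{Elkind}. This matches the paper, which states \cref{myerson} purely as a cited background result of \cite{Myerson} (with \cref{elkind} for the discrete/algorithmic counterpart) and gives no proof of its own beyond those same references.
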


\begin{theorem}[\citealp{Elkind}]\label{elkind}
Let $t\in\mathbb{N}$. There exists an algorithm that runs in time $\poly(t)$, such that given
a discrete distribution $\hat{F}_i$ with support of size at most $t$, outputs a nondecreasing function $\phi_i:\supp{\hat{F}_i}\rightarrow\mathbb{R}$ (so, $\phi_i$ is a nondecreasing sequence of at most~$t$ real numbers), such that for every product $\hat{F}=\hat{F}_1\times\cdots\times\hat{F}_n$ of discrete distributions each having support of size at most $t$, the Myersonian auction $(\phi_i)_{i\in N}$ (where $\phi_i$ is the output of the algorithm given $\hat{F}_i$)
achieves maximum revenue from $\hat{F}$ among all possible auctions. 
\end{theorem}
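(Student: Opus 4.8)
The plan is to reconstruct, in the discrete setting, Myerson's virtual-welfare/ironing machinery (following \cite{Elkind}) and to check that every step of the construction runs in $\poly(t)$ time. Fix a bidder $i$ and write her support as $0\le x_1<\cdots<x_k$ (so $k\le t$) with probabilities $q_1,\ldots,q_k$, and set $Q_\ell\eqdef\sum_{m=\ell}^{k}q_m=\PP[v_i\ge x_\ell]$. First I would define the (unironed) \emph{revenue curve} $R_i$ of bidder $i$ as the piecewise-linear function on $[0,1]$ interpolating the points $(0,0)$ and $\bigl(Q_\ell,\,Q_\ell x_\ell\bigr)$ for $\ell=1,\ldots,k$, i.e.\ the revenue of the posted price $x_\ell$ viewed as a function of the induced sale probability; then let $\bar R_i$ be its least concave majorant on $[0,1]$ and let $\phi_i$ be the slope of $\bar R_i$, transported to value-space (so $\phi_i(x_\ell)$ is the slope of $\bar R_i$ just to the left of the breakpoint $Q_\ell$). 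Since $\bar R_i$ is concave, its slopes are nonincreasing in the quantile, hence $\phi_i$ is nondecreasing in the value, as \cref{myersonian} requires. A one-line Abel-summation computation identifies the slope of the \emph{unironed} curve $R_i$ just left of $Q_\ell$ with the textbook discrete virtual value $\psi_i(x_\ell)\eqdef x_\ell-(x_{\ell+1}-x_\ell)\cdot\nicefrac{Q_{\ell+1}}{q_\ell}$ (and $\psi_i(x_k)\eqdef x_k$), and one checks $R_i(0)=\bar R_i(0)=0$ and $R_i(1)=\bar R_i(1)=x_1$.

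The core is the virtual-welfare upper bound together with the ironing inequality. For any deterministic DSIC, IR auction $A$, let $x^A_i:[0,H]\to[0,1]$ be the interim allocation of bidder $i$ (the probability she wins, averaged over the independent values of the others); DSIC forces $x^A_i$ to be nondecreasing, and summing the discrete incentive inequalities upward from the lowest type (using IR there) yields $\Rev^A(\hat F)\le\sum_{i\in N}\EE_{v_i\sim\hat F_i}\bigl[\psi_i(v_i)\,x^A_i(v_i)\bigr]$, with equality when each winner pays her minimal winning bid. Summation by parts, using that the partial sums $\sum_{m\ge\ell}q_m\bigl(\psi_i(x_m)-\phi_i(x_m)\bigr)=(R_i-\bar R_i)(Q_\ell)\le 0$ are nonpositive and vanish at the extremes, combined with monotonicity of $x^A_i$, then gives $\EE[\psi_i\,x^A_i]\le\EE[\phi_i\,x^A_i]$, with equality precisely when $x^A_i$ is constant on every maximal interval of values on which $\phi_i$ is constant (an ``ironed interval''). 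Chaining these and using only single-item feasibility ($x^A_i(v)\ge 0$, $\sum_{i}x^A_i(v)\le 1$) pointwise,
\[
\Rev^A(\hat F)\ \le\ \EE_{v}\Bigl[\textstyle\sum_{i\in N}\phi_i(v_i)\,x^A_i(v)\Bigr]\ \le\ \EE_{v}\bigl[\max\{0,\ \max_{i\in N}\phi_i(v_i)\}\bigr].
\]

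It remains to match this bound, and the natural candidate is the Myersonian auction $(\phi_i)_{i\in N}$ of \cref{myersonian} itself. Because each $\phi_i$ is nondecreasing, raising $v_i$ can only weakly help bidder $i$ win, so this auction's allocation rule is monotone and is implementable as a DSIC, IR mechanism by charging each winner her minimal winning bid; pointwise, the virtual value it allocates is exactly $\max\{0,\max_{i}\phi_i(v_i)\}$, so the second inequality above is tight for it. Moreover, within an ironed interval of bidder $i$ the value $\phi_i(v_i)$ is a fixed constant while the other bidders' values are independent, so $i$'s winning probability does not vary across the interval; hence $x_i$ is constant on each ironed interval, the ironing inequality is tight, and the payment identity is tight, giving $\Rev^{(\phi_i)}(\hat F)=\sum_i\EE[\phi_i\,x_i]=\EE_v[\max\{0,\max_i\phi_i(v_i)\}]$, which matches the upper bound. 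Finally everything is efficient: the $Q_\ell$ are suffix sums, $R_i$ has at most $t+1$ breakpoints already sorted by quantile, and its least concave majorant---hence the nondecreasing list of slopes $\phi_i$---is computed by a single convex-hull sweep; since $R_i,\bar R_i,\phi_i$ were built from $\hat F_i$ alone, $\phi_i$ depends only on $\hat F_i$, and the optimality argument decomposed over bidders, so the product auction $(\phi_i)_{i\in N}$ is optimal for every product $\hat F_1\times\cdots\times\hat F_n$.

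The main obstacle I anticipate is the discrete bookkeeping: getting the summation-by-parts in the ironing inequality exactly right and pinning down its equality case, together with the boundary handling in the payment identity---IR binding at the lowest type, the subtlety that for deterministic DSIC auctions it suffices to have the inequality $\Rev^A(\hat F)\le\sum_i\EE[\psi_i x^A_i]$ with equality for minimal-winning-bid payments, and the endpoint values $\bar R_i(0)=0$, $\bar R_i(1)=x_1$ of the revenue curve. None of this is deep, but it is where the care is needed.
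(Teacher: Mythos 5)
The paper does not actually prove this statement: it is imported as a black box from \cite{Elkind}, so there is no internal proof to compare against. Your reconstruction is, as far as I can check, correct, and it is essentially the standard discrete-Myerson argument underlying the cited result: build the piecewise-linear revenue curve in quantile space, take its least concave majorant, read off the ironed virtual values $\phi_i$ as its slopes (a convex-hull sweep, hence $\poly(t)$ per bidder and independent of the other bidders), bound the revenue of any DSIC, IR auction by its expected ironed virtual surplus via the payment inequality plus an Abel-summation ironing inequality, and verify that the Myersonian auction $(\phi_i)_{i\in N}$ of \cref{myersonian} attains the pointwise bound $\max\{0,\max_i\phi_i(v_i)\}$ with the ironing step tight because its interim allocation is constant on each ironed interval. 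One point needs more care than you give it: the claim ``with equality when each winner pays her minimal winning bid'' is not right in general, since a threshold lying strictly between support points makes the revenue strictly smaller than the virtual surplus even with minimal-winning-bid payments; what you actually need is only the inequality for arbitrary auctions, plus the fact that for the candidate auction the minimal winning bid is always the lowest \emph{support} value in the winning set, which holds under the right-continuous step extension of $\phi_i$ --- exactly the convention the paper fixes in \cref{extend-beyond-support}. With that detail pinned down, your per-bidder construction and the bidder-by-bidder optimality argument give correctness for every product $\hat{F}_1\times\cdots\times\hat{F}_n$ and the claimed running time.
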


\begin{remark}\label{extend-beyond-support}
The algorithm of \cref{elkind} outputs an ironed virtual valuation $\phi_i$ whose domain is $\supp\hat{F}_i$ rather than~$[0,H]$.
To be completely formal and avoid any ambiguities with regard to the definition of, e.g., the minimal winning bid of any bidder (and later on, with regard to the revenue of this auction from distributions with support larger than that of $\hat{F}$), we emphasize that whenever we consider $(\phi_i)_{i\in N}$ as a Myersonian auction, we do so by interpreting each~$\phi_i$ as specifying only the ``steps/jumps'' in the right-continuous step function that constitutes the ironed virtual valuation. That is, we implicitly extend the definition of each such~$\phi_i$ over all of~$[0,H]$ by defining ${\phi_i(v_i)\eqdef\Max\bigl\{\phi_i(w_i)~\big|~ w_i\le v_i \And \mbox{$\phi_i(w_i)$ is defined}\bigr\}}$ for every~$v_i$, where this maximum is defined to be~$-\infty$ (or any sufficiently small number) if it is taken over the empty set (i.e., this is the level of the leftmost ``plateau'' of $\phi_i$). In particular, we note that the minimal winning bid of any bidder $i$ in the Myersonian auction defined in \cref{elkind} is always in $\supp\hat{F}_i$.
\end{remark}

\section{Rounding Arbitrary Auctions}\label{existence}

In this \lcnamecref{existence}, we derive \crefpart{intro-round}{exists} from \cref{intro}. In fact, we prove a stronger result for arbitrary (not necessarily optimal) auctions. Fix an arbitrary auction $A$, fix $\varepsilon>0$, and fix a product distribution over valuation profiles $F_1\times\cdots\times F_n\in\Delta\bigl([0,H]\bigr)^n$. In this \lcnamecref{existence}, we show that there exists an $\varepsilon$-rounding of $A$ whose revenue from $F_1\times\cdots\times F_n$ is less than an additive $\varepsilon$ smaller than that of $A$.

We note that in \cref{triangle} in \cref{examples}, we show that the desired $\varepsilon$-rounding~$A'$ of $A$ has to be constructed specifically for the target distribution $F_1\times\cdots\times F_n$, i.e., that in some scenarios, every $\varepsilon$-rounding of $A$ must lose $\Omega(1)$ (i.e., at least some constant independent of~$\varepsilon$) revenue for some product distribution. We show this by showing that in such scenarios, every $\varepsilon$-rounding of $A$ must lose $\Omega(1)$ revenue for some valuation profile. We show this even for the case where $A$ is restricted to be the optimal auction for some $F_1\times\cdots\times F_n$, even when $F_1,\ldots,F_n$ are regular\footnote{Regularity is an assumption on distributions over valuations, which simplifies many analyses. We do not elaborate on its definition, as we do not require it for our analysis beyond remarking that all distributions in \cref{triangle} are regular.} distributions, even when the ``rounded'' auction $A'$ may be any $\varepsilon$-coarse auction (whether or not an $\varepsilon$-rounding of~$A$), and already for $n\!=\!2$~bidders.
In contrast, in \cref{iid} we show that for the special case in which $A$ is restricted to be the optimal auction for some i.i.d.\ product distribution $F^n=F\times\cdots\times F$, it is always possible to construct an $\varepsilon$-rounding of $A$ that loses less than an additive $\varepsilon$ over any valuation profile, and thus over any distribution, by simply rounding-down the parameters representing the optimal auction.\footnote{The construction underlying \cref{triangle} in \cref{examples} also demonstrates that for non-i.i.d.\ product distributions, simply rounding-down all bids and applying the allocation rule of the optimal auction to the rounded-down bids (while adapting the payments to ensure truthfulness, of course) may lose $\Omega(1)$ in overall revenue. See \cref{triangle-round-down} in \cref{examples} for more details.}
As shown in \cref{iid}, this implies a simplified proof for the recent result of \cite{ironing-in-the-dark} for the i.i.d.\ case, and highlights the difference between the general case considered here and the i.i.d.\ case.

Our proof strategy for \crefpart{intro-round}{exists} is a probabilistic one: Given an auction $A$ and a target distribution $F=F_1\times\cdots\times F_n$, we will construct a distribution over $\varepsilon$-roundings of $A$ that \emph{in expectation} approximates the revenue of $A$ from $F$, and hence deduce that at least one (deterministic) auction in the support of this distribution does not lose much revenue compared to $A$. We begin by defining this distribution over auctions, which rounds the bids in each $\varepsilon$-interval to a value that is randomly picked according to the restriction of the target distribution to that $\varepsilon$-interval.

\begin{definition}[Conditional Distribution on an $\varepsilon$-Interval]
For a distribution $F\in\Delta\bigl([0,H]\bigr)$ and an index $j$ of an $\varepsilon$-interval, we denote the conditional distribution of $v\sim F$ restricted to the $\varepsilon$-interval $\bigl[j\cdot\varepsilon,(j+1)\cdot\varepsilon\bigr)$ by $F|_j$.\footnote{If this conditional distribution is ill defined, i.e., if $F\bigl([j\cdot\varepsilon,(j+1)\cdot\varepsilon)\bigr)=0$, then we say that $F|_j$ is not defined.}
\end{definition}

\begin{definition}[Randomized $\varepsilon$-Rounding]
Let $\varepsilon>0$.
\begin{itemize}
\item
Let $F_i\in\Delta\bigl([0,H]\bigr)$. The \emph{$F_i$-randomized $\varepsilon$-rounding action} on bidder $i$'s $\varepsilon$-interval~$j$ is a distribution over $\varepsilon$-rounding actions $(i,j,v^i_j)$ obtained by randomly choosing $v^i_j \in \bigl[j\cdot\varepsilon,(j+1)\cdot\varepsilon\bigr)$ according to the conditional distribution $F_i|_j$ of $F_i$ restricted to this $\varepsilon$-interval.\footnote{If $F_i|_j$ is not defined, then we pick $v^i_j\in\bigl[j\cdot\varepsilon,(j+1)\cdot\varepsilon\bigr)$ arbitrarily, say, $v^i_j\eqdef j\cdot\varepsilon$.}
Note that applying the $F_i$-randomized $\varepsilon$-rounding action to an auction $A$ yields a distribution over deterministic auctions~$A'$.
\item
Let $F=F_1\times\cdots\times F_n\in\Delta\bigl([0,H]\bigr)^n$.
The \emph{$F$-randomized $\varepsilon$-rounding rule} is given by the collection of $F_i$-randomized $\varepsilon$-rounding actions for all bidders $i$ and all \mbox{$\varepsilon$-intervals}~$j$.
Note that applying the $F$-randomized $\varepsilon$-rounding action to an auction~$A$ yields a distribution over deterministic $\varepsilon$-coarse auctions~$A'$.
This $A'$ is called the \emph{$F$-randomized $\varepsilon$-rounding} of $A$.
\end{itemize}
\end{definition}

We start by analyzing the impact on the revenue from each bidder following the application of a single $F_i$-randomized $\varepsilon$-rounding action to some given auction. This is done in the following \lcnamecref{one-by-one}, whose proof, once this \lcnamecref{one-by-one} is stated, is quite straightforward (but is nonetheless spelled out in \cref{proofs}, to which all omitted proofs are relegated).

\begin{lemma}\label{one-by-one}
Let $F_1\times\cdots\times F_n \in \Delta\bigl([0,H]\bigr)^n$, let $A$ be an $n$-bidder auction, let $\varepsilon>0$, and let $i\in N$.
Let $j$ be an index of an $\varepsilon$-interval, and $A'$ be the result of applying the $F_i$-randomized $\varepsilon$-rounding action on bidder $i$'s interval $j$ to $A$.
Let $v_{-i}\in[0,H]^{N\setminus\{i\}}$ be a profile of valuations for all bidders other than $i$.
\begin{parts}
\item\label{one-by-one-others}
Let $i'\in N\setminus\{i\}$.
\begin{parts}
\item\label{one-by-one-others-always}
For every $v_i\notin\bigl[j\cdot\varepsilon,(j+1)\cdot\varepsilon\bigr)$, it is surely\footnote{I.e., for every realization in the support of $A'$.} the case that
\[r_{i'}^{A'}(v_1,\ldots,v_n)=r_{i'}^A(v_1,\ldots,v_n).\]
\item\label{one-by-one-others-amortized}
If $F_i|_j$ is defined, then
\[\EE_{v_i\sim F_i|_j}\left[\EE_{A'}r_{i'}^{A'}(v_1,\ldots,v_n)\right]=\EE_{v_i\sim F_i|_j}\left[r_{i'}^A(v_1,\ldots,v_n)\right].\]
\end{parts}
\item\label{one-by-one-i}
Let $w_i$ be the minimal winning bid of $i$ in $A$ when all other bidders bid $v_{-i}$.\footnote{The fact that $A$ is a direct-revelation DSIC and ex-post IR auction implies that a winning bidder pays her minimal winning bid, which is determined by $v_{-i}$.}
\begin{parts}
\item\label{one-by-one-i-notin}
If $w_i \notin\bigl(j\cdot\varepsilon,(j+1)\cdot\varepsilon\bigr)$,\footnote{\label{open-interval}This is indeed an \emph{open} interval. There is no typo here.} then for every $v_i\in[0,H]$, it is surely the case that
\[r_i^{A'}(v_1,\ldots,v_n)=r_i^A(v_1,\ldots,v_n).\]
\item\label{one-by-one-i-in}
Assume that $w_i\in \bigl(j\cdot\varepsilon,(j+1)\cdot\varepsilon\bigr)$.\textsuperscript{\rm{\labelcref{open-interval}}}
\begin{enumerate}
\item\label{one-by-one-i-in-always}
For every $v_i\notin\bigl[j\cdot\varepsilon,(j+1)\cdot\varepsilon\bigr)$, it is surely the case that $v_i$ wins against~$v_{-i}$ in $A'$ if and only if $v_i$~wins against $v_{-i}$ in $A$.
\item\label{one-by-one-i-in-amortized}
If $F_i|_j$ is defined, then
the winning probability of a bid~$v_i\sim F_i|_j$ against~$v_{-i}$ is the same in $A'$ and in $A$ (where in the former, the probability is taken over the randomness of both $v_i$ and $A'$, and in the latter --- over the randomness of $v_i$).
\item\label{one-by-one-i-in-payment}
Let $w'_i$ be the minimal winning bid of $i$ in $A'$ when all other bidders bid~$v_{-i}$ (so $w'_i$ is a random variable). It is surely the case that
\[\left|w'_i-w_i\right|<\varepsilon.\]
\end{enumerate}
\end{parts}
\end{parts}
\end{lemma}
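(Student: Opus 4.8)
The plan is to describe exactly how one rounding action transforms the allocation rule of $A$, and then read off all six claims from that description; the only ingredient beyond bookkeeping is that when $v^i_j$ is drawn from $F_i|_j$ it has the same law as an independent copy of a value $v_i\sim F_i|_j$. I would fix $v_{-i}$ throughout. The structural fact to record first is: $A'$ runs the allocation rule of $A$ on the profile obtained from $(v_1,\dots,v_n)$ by leaving $v_i$ unchanged when $v_i\notin[j\varepsilon,(j+1)\varepsilon)$ and replacing $v_i$ by $v^i_j$ when $v_i\in[j\varepsilon,(j+1)\varepsilon)$, and this substitution never touches the bid of any $i'\ne i$. Consequently, for $i'\ne i$, the allocation of $A'$ viewed as a function of $v_{i'}$ (all else fixed) is literally that of $A$ when $v_i\notin[j\varepsilon,(j+1)\varepsilon)$, and is that of $A$ with $v_i$ replaced by $v^i_j$ when $v_i\in[j\varepsilon,(j+1)\varepsilon)$; in either case the win/lose status and the minimal winning bid of $i'$, hence $r^{A'}_{i'}$, are inherited accordingly.

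Part~(a) follows immediately. When $v_i\notin[j\varepsilon,(j+1)\varepsilon)$ nothing is substituted, so $r^{A'}_{i'}(v_1,\dots,v_n)=r^A_{i'}(v_1,\dots,v_n)$ surely, which is the first claim. For the amortized claim, the same reasoning gives, surely, $r^{A'}_{i'}(v_1,\dots,v_i,\dots,v_n)=r^A_{i'}(v_1,\dots,v^i_j,\dots,v_n)$ whenever $v_i\in[j\varepsilon,(j+1)\varepsilon)$; taking $\EE_{A'}$ and using that the right-hand side is independent of $v_i$ inside the interval gives $\EE_{A'}r^{A'}_{i'}(\dots,v_i,\dots)=\EE_{v^i_j\sim F_i|_j}r^A_{i'}(\dots,v^i_j,\dots)$, and averaging this over $v_i\sim F_i|_j$ and renaming $v^i_j$ as $v_i$ on the right yields the asserted equality.

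For part~(b) I would set $W\eqdef\{v_i\in[0,H] : i\text{ wins against }v_{-i}\text{ in }A\}$, an up-set with infimum $w_i$, and let $W'$ be the analogous set for $A'$. By the structural fact, $W'$ agrees with $W$ on $[0,j\varepsilon)\cup[(j+1)\varepsilon,H]$, while on $[j\varepsilon,(j+1)\varepsilon)$ the set $W'$ is all of the interval if $v^i_j\in W$ and is empty otherwise. If $w_i\le j\varepsilon$ then the interval lies in $W$ and $v^i_j\in W$, so $W'=W$; if $w_i\ge(j+1)\varepsilon$ then the interval is disjoint from $W$ and $v^i_j\notin W$, so again $W'=W$; in both cases $i$'s allocation and minimal winning bid are unchanged, which is the claim for $w_i\notin(j\varepsilon,(j+1)\varepsilon)$. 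If instead $w_i\in(j\varepsilon,(j+1)\varepsilon)$: the claim that $v_i$ wins in $A'$ exactly when it wins in $A$, for $v_i$ outside the interval, is immediate from the structural fact; the probability that $i$ wins in $A'$ for $v_i\sim F_i|_j$ equals $\PP_{v^i_j\sim F_i|_j}[v^i_j\in W]=F_i|_j(W)$, which is also the probability that $v_i\in W$ for $v_i\sim F_i|_j$, i.e.\ the winning probability in $A$; and the description of $W'$ forces $w_i'=\inf W'$ to be $j\varepsilon$ (if $v^i_j\in W$) or $(j+1)\varepsilon$ (if $v^i_j\notin W$), so from $j\varepsilon<w_i<(j+1)\varepsilon$ we get $|w_i'-w_i|<\varepsilon$ in both cases — and this strictness is precisely why the interval in the hypothesis of that sub-part must be the open one.

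The bookkeeping is elementary, and the one place that genuinely calls for care is the extreme-boundary instance of the claim for $w_i\notin(j\varepsilon,(j+1)\varepsilon)$: when $w_i$ coincides with an endpoint of the interval and $v^i_j$ may itself be that endpoint, one must pin down whether the threshold bid wins (equivalently, whether $W$ contains its infimum). I expect to handle this by taking the allocation rule of $A$ in its right-continuous representative, so that winning sets are closed from below — this holds automatically for the Myersonian auctions and their roundings to which this lemma is ultimately applied — under which the case analysis above goes through verbatim. Apart from this boundary issue, the only non-routine idea is the distributional symmetry between $v_i$ and $v^i_j$ invoked in the two amortized claims.
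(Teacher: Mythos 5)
Your proof is correct and takes essentially the same route as the paper's: fix $v_{-i}$, observe that the rounding action only substitutes bidder $i$'s bid inside the interval, deduce the two ``sure'' and two amortized claims from the distributional identity between $v^i_j$ and $v_i\sim F_i|_j$, and handle part~(b) by the case analysis on the winning set $W$ with $\inf W=w_i$, giving $w'_i\in\{j\varepsilon,(j+1)\varepsilon\}$ in the open-interval case. The boundary subtlety you flag (namely $w_i=j\varepsilon$ with the threshold bid losing and $v^i_j$ realized at $j\varepsilon$) is precisely the one point the paper's terser argument for part~(b)(\ref{one-by-one-i-notin}) passes over silently, and your fix --- working with the closed-from-below representative of the allocation rule, which holds automatically for the Myersonian auctions and their roundings to which the lemma is applied --- is a legitimate way to dispose of it.
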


\cref{one-by-one} implies the following \lcnamecref{randomized-rounding}, which we believe to be of independent interest.

\begin{theorem}\label{randomized-rounding}
For every $F=F_1\times\cdots\times F_n \in \Delta\bigl([0,H]\bigr)^n$, for every $n$-bidder auction $A$, and for every $\varepsilon>0$, it is the case that
\[\left|\EE_{A'} \Rev^{A'}(F) - \Rev^A(F) \right|<p\cdot\varepsilon,\]
where $A'$ is the $F$-randomized $\varepsilon$-rounding of $A$, and where $p$ is the probability that some bidder wins in $A$ when the profile of bids is distributed according to $F$.\footnote{In fact, a slightly stronger statement also holds, where $p$ is replaced with the probability that some bidder wins in $A$ \emph{and pays a price that is not an integer multiple of $\varepsilon$} when the profile of bids is distributed according to~$F$. See the proof of \cref{randomized-rounding} for details. This implies, in a sense, that the revenue loss due to $\varepsilon$-rounding is smaller if $A$ already behaves similarly to an $\varepsilon$-coarse auction for a set of valuation profiles that has positive probability.}
\end{theorem}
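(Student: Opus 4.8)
The plan is to apply the $F$-randomized $\varepsilon$-rounding one bidder at a time, telescope the revenue, and control each step with \cref{one-by-one}. Fix the processing order $1,\ldots,n$ (it is immaterial), set $A^{(0)}\eqdef A$, and let $A^{(i)}$ be the (random) auction obtained from $A^{(i-1)}$ by applying, with fresh independent randomness, all of bidder $i$'s $F_i$-randomized $\varepsilon$-rounding actions; since all these actions are mutually independent, $A^{(n)}$ is distributed as the $F$-randomized $\varepsilon$-rounding $A'$. Each $A^{(i)}$ is again a DSIC, ex-post IR auction (a rounding action only composes the allocation rule with a fixed monotone relabeling of one bidder's bids), so the ``minimal winning bid of bidder $i'$ against $v_{-i'}$'' is well defined throughout. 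By the triangle inequality it suffices to bound $\bigl|\EE\Rev^{A^{(i)}}(F)-\EE\Rev^{A^{(i-1)}}(F)\bigr|$ for each $i$ and add the bounds up.

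By \crefsubpart{one-by-one}{others}{always} and \crefsubpart{one-by-one}{others}{amortized} (the latter integrated over $v_{-i}\sim F_{-i}$, the former handling the event $v_i\notin[j\varepsilon,(j+1)\varepsilon)$), each single action of bidder $i$ leaves the expected revenue from every \emph{other} bidder unchanged; iterating, $\EE\Rev^{A^{(i)}}(F)-\EE\Rev^{A^{(i-1)}}(F)=\EE\bigl[r_i^{A^{(i)}}\bigr]-\EE\bigl[r_i^{A^{(i-1)}}\bigr]$, where $r_i^B\eqdef\EE_{v\sim F}r_i^B(v_1,\ldots,v_n)$. I would estimate the right-hand side by conditioning on a realization of the rounding of bidders $1,\ldots,i-1$ (fixing the auction $A^{(i-1)}$) and on $v_{-i}$, writing $w$ for bidder $i$'s minimal winning bid against $v_{-i}$ in $A^{(i-1)}$. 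If $w$ is an integer multiple of $\varepsilon$, then by \crefsubpart{one-by-one}{i}{notin} ($w$ lies in no \emph{open} $\varepsilon$-interval) none of bidder $i$'s actions changes bidder $i$'s outcome against $v_{-i}$. Otherwise $w$ lies in a unique open $\varepsilon$-interval, of index $j^\ast$; by \crefsubpart{one-by-one}{i}{notin} every action on an interval $j\ne j^\ast$ leaves that outcome --- and hence $w$ --- unchanged, while the action on interval $j^\ast$ either leaves $w$ alone or replaces it by an endpoint of interval $j^\ast$ (an $\varepsilon$-multiple within $\varepsilon$ of $w$), which is then frozen under all later actions. So only the interval-$j^\ast$ action is relevant, and by \crefsubpart{one-by-one}{i}{in} it preserves bidder $i$'s winning probability against $v_{-i}$ and moves the price bidder $i$ pays upon winning by strictly less than $\varepsilon$. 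A short computation --- writing the post-rounding expected payment as the $F_i|_{j^\ast}$-weighted combination of the two possible new thresholds --- then bounds the change in $\EE_{v_i\sim F_i}r_i(v_i,v_{-i})$, in absolute value, by $\varepsilon\cdot\PP_{v_i\sim F_i}[v_i\text{ wins against }v_{-i}]$, strictly so when that probability is positive (and by $0$ when $w$ is an $\varepsilon$-multiple). Integrating over $v_{-i}\sim F_{-i}$ and the rounding of bidders $1,\ldots,i-1$ gives
\[
\bigl|\EE\Rev^{A^{(i)}}(F)-\EE\Rev^{A^{(i-1)}}(F)\bigr|\;\le\;\varepsilon\cdot\PP\bigl[\text{bidder }i\text{ wins in }A^{(i-1)}\text{ and pays a non-integer multiple of }\varepsilon\bigr],
\]
strictly unless the right-hand probability vanishes.

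To add up these bounds, I would first note that applying any rounding action of a bidder $i'\ne i$ preserves, in expectation over its randomness, the probability of ``bidder $i$ wins and pays a non-integer multiple of $\varepsilon$'': in the rounded auction, bidder $i$'s allocation and minimal winning bid against a profile with $v_{i'}\in[j\varepsilon,(j+1)\varepsilon)$ coincide with those against the profile obtained by replacing $v_{i'}$ with $v^{i'}_j$, and averaging $v^{i'}_j\sim F_{i'}|_j$ reconstitutes $F_{i'}$ restricted to that interval (the bookkeeping behind \crefsubpart{one-by-one}{others}{amortized}). Hence the probability on the right-hand side above equals $\PP_{v\sim F}[\text{bidder }i\text{ wins in }A\text{ and pays a non-integer multiple of }\varepsilon]$. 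Since at most one bidder wins in a single-item auction, these events are pairwise disjoint over $i$, so their probabilities sum to $\PP_{v\sim F}[\text{some bidder wins in }A\text{ and pays a non-integer multiple of }\varepsilon]\le p$. Summing the per-step bounds --- strict at any step whose probability is positive --- yields the sharper estimate of the footnote and hence $\bigl|\EE_{A'}\Rev^{A'}(F)-\Rev^A(F)\bigr|<p\,\varepsilon$ whenever $p>0$ (when $p=0$ no bidder ever wins and both sides are $0$).

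I expect the main obstacle to be the bookkeeping of the last two paragraphs rather than any single estimate (those are immediate from \cref{one-by-one}). Pinning the constant to $p$ rather than, say, $np$ rests on the two structural points above: within a single bidder only the rounding action on the interval currently containing that bidder's threshold can change revenue, and once applied it freezes the threshold at an $\varepsilon$-multiple; and rounding the other bidders does not disturb the per-bidder ``wins-and-pays-a-non-multiple'' probabilities, so that the single-item constraint of at most one winner collapses the sum over bidders.
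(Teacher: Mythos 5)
Your proposal is correct and follows essentially the same route as the paper: the paper's proof also reduces everything to \cref{one-by-one}, either amortizing per bidder and per $\varepsilon$-interval (\cref{coarse-amortized}) or telescoping over rounding actions with the observation that the relevant win-and-pay probabilities are unchanged, in expectation, by rounding the other bidders (\cref{coarse-pij} and the note in its use), and then summing over bidders using the disjointness of the winning events. Your bidder-by-bidder telescoping merely regroups these same steps, and it also recovers the strengthened ``pays a non-integer multiple of $\varepsilon$'' bound stated in the footnote.
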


\begin{sloppypar}
It is useful, though, not to merely consider \cref{randomized-rounding} as a consequence of \cref{one-by-one}, but to consider ``intermediate results'' between these two, both for instructive purposes (indeed, the deduction of \cref{randomized-rounding} from \cref{one-by-one} involves quite a few summations), but also as these results are of independent value. For example, \cref{randomized-rounding} may be obtained by \cref{one-by-one} by first considering the overall effect of a single randomized rounding action on the overall expected revenue.
\end{sloppypar}

\begin{lemma}\label{coarse-pij}
Let $F=F_1\times\cdots\times F_n \in \Delta\bigl([0,H]\bigr)^n$, let $A$ be an $n$-bidder auction, let $\varepsilon>0$, let $i\in N$, and let $j$ be an index of an $\varepsilon$-interval.
Then,
\[\left|\EE_{A'} \Rev^{A'} - \Rev^A\right| < \varepsilon\cdot p^i_j,\]
where $A'$ is the result of applying the $F_i$-randomized $\varepsilon$-rounding action on bidder $i$'s \mbox{$\varepsilon$-interval} $j$ to $A$,
and where $p^i_j$ is the probability that $i$ both wins and pays a price in $\bigl(j\cdot\varepsilon,(j+1)\cdot\varepsilon\bigr)$ in $A$ when the profile of bids is distributed according to $F$.
\end{lemma}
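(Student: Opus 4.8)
The plan is to peel the revenue $\Rev^A=\sum_{i'\in N}r^A_{i'}$ apart bidder by bidder and use \cref{one-by-one} to show that every bidder $i'\neq i$ contributes nothing to the revenue change in expectation, so that the whole discrepancy comes from bidder $i$'s payment and is governed by precisely the event defining $p^i_j$. Writing $I_j\eqdef\bigl[j\cdot\varepsilon,(j+1)\cdot\varepsilon\bigr)$ and $\EE_{v_{-i}}$ for the expectation over $v_{i'}\sim F_{i'}$ for all $i'\neq i$, we have
\[
\EE_{A'}\Rev^{A'}-\Rev^A=\EE_{v_{-i}}\,\EE_{v_i\sim F_i}\,\EE_{A'}\sum_{i'\in N}\bigl(r^{A'}_{i'}(v)-r^A_{i'}(v)\bigr).
\]
First I would dispose of the bidders $i'\neq i$: for a fixed $v_{-i}$, \crefsubpart{one-by-one}{others}{always} gives $r^{A'}_{i'}(v)=r^A_{i'}(v)$ (hence equality after $\EE_{A'}$) whenever $v_i\notin I_j$, while \crefsubpart{one-by-one}{others}{amortized} gives $\EE_{v_i\sim F_i|_j}\EE_{A'}r^{A'}_{i'}=\EE_{v_i\sim F_i|_j}r^A_{i'}$ (vacuous when $F_i|_j$ is undefined, as then $F_i(I_j)=0$). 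Averaging these two sub-cases with weights $F_i\bigl([0,H]\setminus I_j\bigr)$ and $F_i(I_j)$ yields $\EE_{v_i\sim F_i}\EE_{A'}r^{A'}_{i'}(v)=\EE_{v_i\sim F_i}r^A_{i'}(v)$ for every $v_{-i}$, so all $i'\neq i$ summands drop out and the difference above equals $\EE_{v_{-i}}\bigl[\EE_{v_i\sim F_i}\EE_{A'}\bigl(r^{A'}_i(v)-r^A_i(v)\bigr)\bigr]$.

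Next I would fix $v_{-i}$, let $w_i=w_i(v_{-i})$ be bidder $i$'s minimal winning bid in $A$ against $v_{-i}$, and split on where $w_i$ lies. If $w_i\notin\bigl(j\cdot\varepsilon,(j+1)\cdot\varepsilon\bigr)$ (the \emph{open} interval), \crefsubpart{one-by-one}{i}{notin} gives $r^{A'}_i(v)=r^A_i(v)$ surely for every $v_i$, so this $v_{-i}$ contributes $0$. Otherwise $w_i\in\bigl(j\cdot\varepsilon,(j+1)\cdot\varepsilon\bigr)$; let $w'_i$ be bidder $i$'s (random) minimal winning bid in $A'$ against $v_{-i}$, so $\lvert w'_i-w_i\rvert<\varepsilon$ surely by the last clause of \crefsubpart{one-by-one}{i}{in}. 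Now split $v_i$ into three regimes. For $v_i<j\cdot\varepsilon$ (hence $v_i<w_i$, a loser in $A$), the allocation-preservation clause of \crefsubpart{one-by-one}{i}{in}, which covers $v_i\notin I_j$, shows $v_i$ also loses in $A'$, so $r^{A'}_i(v)=r^A_i(v)=0$. For $v_i\ge(j+1)\cdot\varepsilon$ (hence $v_i>w_i$, a winner in $A$ paying $w_i$), the same clause shows $v_i$ wins in $A'$ as well, there paying $w'_i$, so $r^{A'}_i(v)-r^A_i(v)=w'_i-w_i$. For $v_i\in I_j$: since the rounding action makes $A'$ treat \emph{every} bid in $I_j$ as $v^i_j$, a winning bid $v_i\in I_j$ pays exactly $w'_i$ regardless of $v_i$, and it wins if and only if $v^i_j$ wins against $v_{-i}$; combining this with the winning-probability clause of \crefsubpart{one-by-one}{i}{in} (a bid $v_i\sim F_i|_j$ wins with the same probability in $A'$ as in $A$) gives $\EE_{v_i\sim F_i|_j}\EE_{A'}r^{A'}_i=\EE_{A'}\bigl[w'_i\cdot\mathbf{1}[v^i_j\text{ wins}]\bigr]$ and $\EE_{v_i\sim F_i|_j}r^A_i=w_i\cdot\PP_{A'}[v^i_j\text{ wins}]$, whose difference is $\EE_{A'}\bigl[(w'_i-w_i)\,\mathbf{1}[v^i_j\text{ wins}]\bigr]$ and hence has absolute value $<\varepsilon\cdot\PP_{v_i\sim F_i|_j}[v_i\text{ wins in }A]$.

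Then I would reassemble the three regimes for the fixed $v_{-i}$ by decomposing $\EE_{v_i\sim F_i}$ accordingly: the $v_i<j\cdot\varepsilon$ regime contributes $0$, the $v_i\ge(j+1)\cdot\varepsilon$ regime contributes $\PP_{v_i\sim F_i}[v_i\ge(j+1)\cdot\varepsilon]\cdot\EE_{A'}(w'_i-w_i)$, and the $v_i\in I_j$ regime contributes $F_i(I_j)$ times the difference computed above. Using $\lvert w'_i-w_i\rvert<\varepsilon$ (hence $\lvert\EE_{A'}(w'_i-w_i)\rvert<\varepsilon$ and $\lvert\EE_{A'}[(w'_i-w_i)\mathbf{1}[v^i_j\text{ wins}]]\rvert<\varepsilon\,\PP_{A'}[v^i_j\text{ wins}]$) and the identity $F_i(I_j)\cdot\PP_{v_i\sim F_i|_j}[v_i\text{ wins in }A]=\PP_{v_i\sim F_i}[w_i\le v_i<(j+1)\cdot\varepsilon]$, we obtain
\[
\Bigl\lvert\EE_{v_i\sim F_i}\EE_{A'}\bigl(r^{A'}_i(v)-r^A_i(v)\bigr)\Bigr\rvert<\varepsilon\Bigl(\PP_{v_i\sim F_i}[v_i\ge(j+1)\cdot\varepsilon]+\PP_{v_i\sim F_i}[w_i\le v_i<(j+1)\cdot\varepsilon]\Bigr),
\]
and since $j\cdot\varepsilon<w_i<(j+1)\cdot\varepsilon$ the two probabilities on the right sum to $\PP_{v_i\sim F_i}[v_i\ge w_i]$, i.e.\ to the probability that bidder $i$ wins against $v_{-i}$ in $A$. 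Finally, integrating over $v_{-i}$ and recalling that only those $v_{-i}$ with $w_i(v_{-i})\in\bigl(j\cdot\varepsilon,(j+1)\cdot\varepsilon\bigr)$ contribute, the difference is bounded by $\varepsilon\cdot\EE_{v_{-i}}\bigl[\mathbf{1}[w_i(v_{-i})\in(j\cdot\varepsilon,(j+1)\cdot\varepsilon)]\cdot\PP_{v_i\sim F_i}[v_i\ge w_i(v_{-i})]\bigr]=\varepsilon\cdot p^i_j$, with strict inequality whenever $p^i_j>0$ (and when $p^i_j=0$ the same case analysis shows $\EE_{A'}\Rev^{A'}=\Rev^A$).

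The step I expect to be the main obstacle is the $v_i\in I_j$ regime: one has to observe that $A'$ charges \emph{every} winning bid in $I_j$ the \emph{common} random price $w'_i$, reduce the ``$v_i$ wins'' event there to ``$v^i_j$ wins'', and then line up the resulting winning-probability term with the $v_i\ge(j+1)\cdot\varepsilon$ term so that the two add up to \emph{exactly} $\PP_{v_i\sim F_i}[v_i\ge w_i]$ --- this exact matching, rather than a crude union bound, is what makes the final error collapse to precisely $\varepsilon\cdot p^i_j$. The remainder is bookkeeping, together with the already-granted fact that a bid $v_i\notin I_j$ wins in $A'$ if and only if it wins in $A$, which is what keeps the $v_i<j\cdot\varepsilon$ regime silent and pins the payment change in the other two regimes to $w'_i-w_i$.
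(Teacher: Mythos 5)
Your proof is correct and follows essentially the same route as the paper's own (much terser) argument: use \crefpart{one-by-one}{others} to kill the contribution of every bidder $i'\neq i$, then \crefsubpart{one-by-one}{i}{notin} to restrict to profiles $v_{-i}$ with $w_i(v_{-i})\in\bigl(j\cdot\varepsilon,(j+1)\cdot\varepsilon\bigr)$, and \crefsubpart{one-by-one}{i}{in} to bound the remaining payment change by less than $\varepsilon$ on the event that $i$ wins, which integrates to less than $\varepsilon\cdot p^i_j$. You have simply spelled out the bookkeeping (the three regimes for $v_i$ and the exact matching of the winning-probability terms, plus the $p^i_j=0$ edge case) that the paper leaves implicit.
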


\cref{coarse-pij} will also be useful in \cref{efficiency}. Another way to obtain \cref{randomized-rounding} from \cref{one-by-one} is to first consider the impact of the randomized rounding rule (i.e., all randomized rounding actions together), albeit only on the revenue from a single bidder $i$ and only when her bid is restricted to some $\varepsilon$-interval.

\begin{lemma}\label{coarse-amortized}
Let $F=F_1\times\cdots\times F_n \in \Delta\bigl([0,H]\bigr)^n$, let $A$ be an $n$-bidder auction, and let $\varepsilon>0$.
Let $A'$ be the $F$-randomized $\varepsilon$-rounding of $A$.
For every $i\in N$ and for every index~$j$ of an $\varepsilon$-interval such that $F_i|_j$ is defined, it is the case that
\[\left|\EE_{v_i\sim F_i|_j}\left[\EE_{A'} r_i^{A'}(v_i) - r_i^A(v_i)\right]\right| < \varepsilon\cdot p,\]
where $p$ is the probability that $i$ wins in $A$ when bidding $v_i\sim F_i|_j$, when the remaining bids are distributed according to $F_{-i}$.\footnote{In fact, a slightly stronger statement also holds, where $p$ is replaced with the probability that $i$ wins in~$A$ \emph{and pays a price that is not an integer multiple of $\varepsilon$} when bidding $v_i\sim F_i|_j$, when the remaining bids are distributed according to $F_{-i}$. See the proof of \cref{coarse-amortized} for details.}
\end{lemma}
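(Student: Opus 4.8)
The plan is to derive the bound from \cref{one-by-one} by applying the constituent $\varepsilon$-rounding actions of the $F$-randomized $\varepsilon$-rounding rule in a convenient order and telescoping. Since the rule is order-independent, I would first apply all $F_{i'}$-randomized actions for the bidders $i'\neq i$ (in any order), obtaining a distribution over auctions $A_1$, and only afterwards apply the $F_i$-randomized actions on bidder $i$'s $\varepsilon$-intervals, obtaining $A'$; the rounding randomness is treated as independent of the bid profile $(v_i,v_{-i})$. The first observation is that the actions on the other bidders are ``invisible'' to the revenue from $i$: for a single $F_{i'}$-randomized action on interval $j'$ of a bidder $i'\neq i$, \crefpart{one-by-one}{others} gives that $r_i$ is surely unchanged when $v_{i'}\notin[j'\varepsilon,(j'+1)\varepsilon)$ and that its conditional expectation is unchanged when $v_{i'}$ lies inside; combining these and telescoping, $\EE_{v_{-i}\sim F_{-i}}\EE_{A_1}r_i^{A_1}(v_i,v_{-i})=r_i^A(v_i)$ for every fixed $v_i$. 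A parallel argument---replacing, for one such action, the bid $v_{i'}$ conditioned on lying in interval $j'$ by an independent copy $v^{i'}_{j'}\sim F_{i'}|_{j'}$ leaves the marginal probability that $i$ wins unchanged---shows that the probability that $i$ wins when bidding $v_i\sim F_i|_j$ against $v_{-i}\sim F_{-i}$ is the same in $A_1$ as in $A$, namely $p$.

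Next, among bidder $i$'s own interval-actions, at most one affects $r_i$. Fix a realization of $v_{-i}$ and of $A_1$; by DSIC the set of bids with which $i$ wins against $v_{-i}$ is an up-ray whose infimum is $i$'s minimal winning bid $w_i$, and at most one interval's open interior contains $w_i$---call its index $j^\star$, if it exists. By \crefpart{one-by-one}{i} in the ``$w_i$ not in the open interval'' case, the action on any interval whose open interior misses the current threshold leaves $r_i$ pointwise unchanged, hence also leaves the threshold unchanged; in the ``$w_i$ in the open interval'' case, the action on $j^\star$ shifts the threshold by less than $\varepsilon$, i.e.\ onto an endpoint of $[j^\star\varepsilon,(j^\star+1)\varepsilon)$---an integer multiple of $\varepsilon$---after which no later action can move it. So the cumulative effect of all of bidder $i$'s interval-actions on $r_i$ equals that of the single action on $j^\star$, and is null if $w_i$ is already an integer multiple of $\varepsilon$.

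To assemble the bound, restrict to $v_i\in[j\varepsilon,(j+1)\varepsilon)$ and examine the effect of the $j^\star$-action: it is $0$ if $j<j^\star$ (such $v_i$, and its rounded value, both lie below $w_i$, so $i$ loses before and after); if $j>j^\star$, then (using \cref{one-by-one}: ``wins iff won before'' and $|w'_i-w_i|<\varepsilon$) $i$ wins throughout and only the payment moves, by less than $\varepsilon$; and if $j=j^\star$, then (using \cref{one-by-one}: the win probability of $v_i\sim F_i|_j$ is preserved by this action, and the winning payment drops from $w_i\in(j\varepsilon,(j+1)\varepsilon)$ to $j\varepsilon$) the expected change over $v_i\sim F_i|_j$ has absolute value below $\varepsilon$ times the conditional probability that $i$ wins in $A_1$ against this $v_{-i}$. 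In every case the conditional expectation of the absolute change, given the realization of $v_{-i}$ and $A_1$, is at most $\varepsilon$ times the conditional probability that $i$ wins in $A_1$ when bidding $v_i\sim F_i|_j$. Writing $\EE_{v_i\sim F_i|_j}[\EE_{A'}r_i^{A'}(v_i)-r_i^A(v_i)]=\EE[r_i^{A'}-r_i^{A_1}]+\EE[r_i^{A_1}-r_i^A]$ (all expectations over $v_i\sim F_i|_j$, $v_{-i}\sim F_{-i}$, and the rounding), the second summand vanishes by the first paragraph, and the first is at most $\EE|r_i^{A'}-r_i^{A_1}|\le\varepsilon\cdot(\text{overall probability }i\text{ wins in }A_1\text{ against }F_{-i}\text{ when bidding }F_i|_j)=\varepsilon p$. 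Observing that the only profiles contributing to this bound are those where $i$'s payment $w_i$ is strictly inside an $\varepsilon$-interval, hence not an integer multiple of $\varepsilon$, yields the sharper estimate of the footnote.

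The main obstacle is the bookkeeping of the second paragraph: arguing that once the $j^\star$-action has pushed bidder $i$'s threshold onto a grid point $j^\star\varepsilon$ or $(j^\star+1)\varepsilon$, no subsequent interval-action of bidder $i$ can perturb $r_i$---so that the many interval-actions collapse to a single relevant one whose effect is controlled by \cref{one-by-one}---together with the careful handling of boundary conventions (thresholds lying exactly on a grid point $k\varepsilon$, and possible atoms of $F_i$ at interval endpoints), which is precisely where the open- versus closed-interval distinction emphasized in \cref{one-by-one} must be invoked.
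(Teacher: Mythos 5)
Your proposal is correct and takes essentially the same route as the paper's proof: both split the $F$-randomized $\varepsilon$-rounding into the other bidders' actions, which are revenue-neutral for bidder $i$ by \crefpart{one-by-one}{others} since each rounded value is distributed exactly as the conditional bid, and bidder $i$'s own actions, which collapse to the single action on the interval containing the threshold and are controlled by \crefpart{one-by-one}{i}. The only difference is bookkeeping order --- you round the other bidders first and bound via the win probability in $A_1$, then identify it with $p$, whereas the paper strips the other bidders' actions from $A'$ and applies the single relevant action directly to $A$ (so the bound is stated for $A$'s threshold and win probability without the extra identification step) --- but the underlying argument is the same.
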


Recall that we show in \cref{triangle} that an $\varepsilon$-coarse, deterministic valuation-by-valuation approximation for $A$ cannot exist for a product of general (even regular) non-i.i.d.\ distributions. \cref{coarse-amortized} precisely reveals the gap between the approximation developed in this \lcnamecref{existence} and this (unattainable) goal: The approximation developed in this \lcnamecref{existence} provides a randomized (rather than deterministic) approximation that is amortized over each $\varepsilon$-interval of valuations (rather than attained on each valuation profile separately). 
I.e., the difference between the revenues of $A$ and of its $F$-randomized $\varepsilon$-rounding from any bidder $i$, in expectation over $v_i\sim F_i|_j$ for any index $j$ of an $\varepsilon$-interval, is small.

Finally, by \cref{randomized-rounding}, at least one deterministic auction in the support of $A'$ loses less than an additive $p\cdot\varepsilon\le\varepsilon$ in (overall) revenue compared to $A$, as formalized in \cref{exists} below, which is a restatement of \crefpart{intro-round}{exists}.

\begin{proposition}[Restatement of \crefpart{intro-round}{exists}]\label{exists}
For every product distribution $F=F_1\times\cdots\times F_n \in \Delta\bigl([0,H]\bigr)^n$ and for every auction $A$, 
there exists a (deterministic, $\varepsilon$-coarse) $\varepsilon$-rounding $A'$ of $A$ whose revenue from $F$ is less than an additive $\varepsilon$ smaller than that of $A$.
\end{proposition}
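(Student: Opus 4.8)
The plan is to deduce \cref{exists} directly from \cref{randomized-rounding} by a simple averaging argument (the probabilistic method). Fix the product distribution $F=F_1\times\cdots\times F_n$ and the auction $A$, and let $A'$ denote the $F$-randomized $\varepsilon$-rounding of $A$. By construction this is a distribution over deterministic auctions, each of which is $\varepsilon$-coarse and is an $\varepsilon$-rounding of $A$ (it applies, for each pair $(i,j)$, some $\varepsilon$-rounding action $(i,j,v^i_j)$ with $v^i_j$ in the $j$th $\varepsilon$-interval). \cref{randomized-rounding} gives $\left|\EE_{A'}\Rev^{A'}(F)-\Rev^A(F)\right|<p\cdot\varepsilon$ with $p\le 1$ the probability that some bidder wins in $A$ under $F$; in particular $\EE_{A'}\Rev^{A'}(F)>\Rev^A(F)-\varepsilon$.

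Next I would invoke the elementary fact that a random variable attains, with positive probability, a value that is at least its expectation: if $\PP\bigl[\Rev^{A'}(F)>\Rev^A(F)-\varepsilon\bigr]$ were zero, then $\Rev^{A'}(F)\le\Rev^A(F)-\varepsilon$ would hold surely, forcing $\EE_{A'}\Rev^{A'}(F)\le\Rev^A(F)-\varepsilon$ and contradicting the previous display. Hence there is at least one realization in the support of $A'$, i.e., a concrete choice of rounding values $(v^i_j)_{i,j}$ yielding a deterministic $\varepsilon$-coarse $\varepsilon$-rounding $A'$ of $A$, whose revenue from $F$ strictly exceeds $\Rev^A(F)-\varepsilon$; that is, $A'$ loses less than an additive $\varepsilon$ in revenue compared to $A$, which is exactly the assertion of \cref{exists}.

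I do not anticipate any real obstacle at this stage: essentially all of the work has already been done in establishing \cref{one-by-one} and its corollary \cref{randomized-rounding}. The only mild subtlety worth flagging in the write-up is that the $F$-randomized $\varepsilon$-rounding is a priori supported on a continuum of auctions (each $v^i_j$ ranges over an $\varepsilon$-interval), so the averaging step should be phrased measure-theoretically as above rather than as a finite pigeonhole argument; since we need only the existence of a single good realization, the one-line argument suffices and no discretization of the $F_i$'s is needed.
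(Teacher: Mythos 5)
Your proposal is correct and is essentially identical to the paper's own proof: both deduce \cref{exists} from \cref{randomized-rounding} by noting that the expected revenue of the $F$-randomized $\varepsilon$-rounding exceeds $\Rev^A(F)-\varepsilon$ (since $p\le1$), and then picking a realization in the support attaining at least this expectation. Your measure-theoretic remark about the continuum of realizations is a fine clarification but does not change the argument.
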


\begin{proof}
Let $A'$ be the $F$-randomized $\varepsilon$-rounding of~$A$.
By \cref{randomized-rounding}, we have that
$\EE_{A'} \Rev^{A'}(F) > \Rev^A(F) - \varepsilon$. Therefore, at least one of the deterministic ($\varepsilon$-coarse, $\varepsilon$-roundings of $A$) realizations of $A'$ has revenue greater than $\Rev^A(F) - \varepsilon$.
\end{proof}

We note that in \cref{tight} in \cref{examples}, we show that the bound of ``less than~$\varepsilon$'' on the revenue difference in \crefpart{intro-round}{exists}/\cref{exists} (and hence also in \cref{randomized-rounding}, etc.)\ cannot be unconditionally tightened any further, even when $A'$ may be any $\varepsilon$-coarse auction (whether or not an $\varepsilon$-rounding of~$A$), and already for $n\!=\!1$~bidder.

\section{Efficiently Rounding Myersonian Auctions for\texorpdfstring{\\}{ }Empirical Distributions}\label{efficiency}

In the previous \lcnamecref{existence}, we have seen that for any target distribution $F=F_1\times\cdots\times F_n$, the revenue-maximizing auction~$A$ can be $\varepsilon$-rounded into an $\varepsilon$-coarse auction $A'$ that approximately maximizes the revenue from $F$. In this \lcnamecref{efficiency}, we derive \crefpart{intro-round}{efficient} from \cref{intro}, showing how to deterministically and efficiently find a ``good enough'' $\varepsilon$-rounding of~$A$ when for every $i\in N$, it is the case that $F_i=\hat{F}_i$ is a uniform distribution over a multiset of $t$ (sampled) values. (Recall that in such a scenario, the revenue-maximizing auction~$A$ can be computed in polynomial time using \cref{elkind}.) 
First, we note that if $A$ is a Myersonian auction, then any rounding of $A$ is also Myersonian, and its ironed virtual valuations can be efficiently computed from those of $A$, given the rounding actions.

\begin{remark}
Let $A=(\phi_i)_{i\in N}$ be an $n$-bidder Myersonian auction and let $\varepsilon>0$.
\begin{itemize}
\item
Applying an $\varepsilon$-rounding action $(i,j,v^i_j)$ to $A$ yields the Myersonian auction $A'$ obtained by modifying $\phi_i$ to have the constant value $\phi_i(v^i_j)$ on the $\varepsilon$-interval $\bigl[j\cdot\varepsilon,(j+1)\cdot\varepsilon\bigr)$.
\item
Applying an $\varepsilon$-rounding rule to $A$ (recall that an $\varepsilon$-rounding rule is a collection of $\varepsilon$-rounding actions $(i,j,v^i_j)$, one action for each pair of bidder $i$ and index $j$ of an $\varepsilon$-interval) yields the Myersonian auction $A'$ obtained by modifying, for each bidder $i\in N$, the function $\phi_i$ to be constant on each $\varepsilon$-interval, having the constant value $\phi_i(v^i_j)$ on $\bigl[j\cdot\varepsilon,(j+1)\cdot\varepsilon\bigr)$ for each $j$.
\item
In particular, any $\varepsilon$-rounding of any Myersonian auction is also Myersonian (and $\varepsilon$-coarse).
\end{itemize}
\end{remark}

So, how do we find a ``good enough'' rounding of the optimal Myersonian auction~$A$?
One option, which would lead to a \emph{randomized} polynomial-time algorithm for finding a suitable rounding with high probability, is to simply draw a polynomial number of deterministic roundings from the $F$-randomized $\varepsilon$-rounding of $A$, and pick the one that gives highest approximate (empirical) revenue among all drawn roundings, where the (empirical) revenue from a given rounding can be estimated by drawing a polynomial number of valuation profiles from $F$ and averaging the (empirical) revenue from the given rounding over them. We next show that when for every $i\in N$, it is the case that $F_i=\hat{F}_i$ is a uniform distribution over a multiset of $t$ (sampled) values, this process can be derandomized to yield an efficient deterministic algorithm that finds a suitable rounding.\footnote{Another, in some sense less elegant, approach to derandomizing this algorithm would be to carefully use the randomness inherent in the order of the $t$ given samples. We indeed use such an approach later when analyzing more general single-parameter auction environments in \cref{efficiency-single-parameter}.}

We start with the question of deterministically computing the (empirical) revenue $\Rev^{A'}(\hat{F})$ of a given auction $A'$, where $\hat{F}=\hat{F}_1\times\cdots\times\hat{F}_n$. We note that in \cref{sharpp-complete} in \cref{examples}, we show that for general (even $\varepsilon$-coarse) auctions $A'$, this is a computationally hard problem, even when for every $i$ it is the case that $|\supp\hat{F}_i|=2$. (This indeed motivates restricting our attention to Myersonian auctions in this \lcnamecref{efficiency}.) Nonetheless, we now show that for a Myersonian auction $A'$, its empirical revenue may be precisely computed in polynomial time. The key idea behind this result is that the revenue of a Myersonian auction from a given valuation profile only depends on the two highest ironed virtual bids; therefore, the algorithm efficiently calculates the overall revenue by enumerating over all of the possibilities for these two bids. 

\begin{lemma}\label{calc-rev}
Let $n,t\in\mathbb{N}$. There exists a deterministic algorithm that runs in time $\poly(n,t)$, such that given
$n$ discrete distributions $\hat{F}_1,\ldots,\hat{F}_n$, each with support of size at most $t$, and for each $i\in N$ given a nondecreasing $\phi_i:\supp\hat{F}_i\rightarrow\mathbb{R}$ (so, $\phi_i$ is a nondecreasing sequence of at most $t$ real numbers), outputs
the revenue $\Rev^{(\phi_i)_{i\in N}}(\hat{F})$ of the Myersonian auction $(\phi_i)_{i\in N}$ over the product distribution $\hat{F}=\hat{F}_1\times\cdots\times\hat{F}_n$.
\end{lemma}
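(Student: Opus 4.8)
The plan is to exploit the observation that the outcome (winner and payment) of a Myersonian auction on a valuation profile $(v_1,\ldots,v_n)$ is an extremely coarse function of that profile: it is completely determined by the identity $w$ of the winning bidder together with the two quantities $m_1\eqdef\max_{i<w}\phi_i(v_i)$ and $m_2\eqdef\max_{i>w}\phi_i(v_i)$ (with the convention that a maximum over the empty set equals $-\infty$). Since each $\phi_i$ attains at most $t$ distinct values, the pair $(m_1,m_2)$ ranges over only polynomially many possibilities, so the algorithm will compute $\Rev^{(\phi_i)_{i\in N}}(\hat{F})$ by summing, over all admissible triples $(w,m_1,m_2)$, the probability that the auction lands in that configuration times the revenue it then produces.

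First I would record the structural facts this relies on, all immediate from \cref{myersonian} (and the step-function convention of \cref{extend-beyond-support}). Fix $w\in N$ and a profile $v_{-w}$ of the other bids, and set $m_1,m_2$ as above. By lexicographic tie-breaking, bidder $w$ bidding $v_w$ wins against $v_{-w}$ if and only if $\phi_w(v_w)>m_1$ (strict, to beat lower-indexed bidders), $\phi_w(v_w)\ge m_2$ (non-strict, as ties against higher-indexed bidders are resolved in $w$'s favor), and $\phi_w(v_w)\ge0$. Since $\phi_w$ is nondecreasing and jumps only at points of $\supp\hat{F}_w$, the set of winning bids of $w$ is upward closed, and its minimum --- when it is nonempty --- is the value
\[
p_w(m_1,m_2)\ \eqdef\ \min\bigl\{\,v\in\supp\hat{F}_w \ :\ \phi_w(v)>m_1 \text{ and } \phi_w(v)\ge\Max\{m_2,0\}\,\bigr\}\ \in\ \supp\hat{F}_w,
\]
which is exactly the price $w$ pays upon winning; in particular this price depends on $v_{-w}$ only through $(m_1,m_2)$, and $w$ has no winning bid against $v_{-w}$ precisely when the displayed set is empty (this includes the case of no winner at all). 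Now let $M_1,M_2$ denote the random variables $\max_{i<w}\phi_i(v_i)$ and $\max_{i>w}\phi_i(v_i)$ when $v_{-w}\sim\hat{F}_{-w}$; these are independent (disjoint sets of bidders), and $v_w$ is independent of $v_{-w}$. Decomposing the expectation defining $\Rev^{(\phi_i)_{i\in N}}(\hat{F})$ first by the winner $w$ (profiles with no winner contribute $0$) and then by the value of $(M_1,M_2)$ gives
\begin{align*}
\Rev^{(\phi_i)_{i\in N}}(\hat{F})=\sum_{w\in N}\ \sum_{m_1,m_2}\ &\PP[M_1=m_1]\cdot\PP[M_2=m_2]\\
&\quad{}\cdot\ \hat{F}_w\bigl(\{\,v : v\ge p_w(m_1,m_2)\,\}\bigr)\cdot p_w(m_1,m_2),
\end{align*}
where $(m_1,m_2)$ ranges over the (at most $(nt+1)^2$) attainable value pairs and a summand is read as $0$ whenever $p_w(m_1,m_2)$ is undefined.

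The algorithm just evaluates this formula. For each $w$ and each attainable $m_1$, one computes $\PP[M_1\le m_1]=\prod_{i<w}\hat{F}_i\bigl(\{v:\phi_i(v)\le m_1\}\bigr)$ in time $\poly(n,t)$, and then $\PP[M_1=m_1]$ as the difference of this and the analogous product with ``$\le$'' replaced by ``$<$''; likewise for $M_2$ over the bidders $i>w$. For each triple $(w,m_1,m_2)$ the threshold bid $p_w(m_1,m_2)$ and the tail mass $\hat{F}_w\bigl(\{v\ge p_w(m_1,m_2)\}\bigr)$ are read off from $\phi_w$ and $\hat{F}_w$ in time $\poly(t)$. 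Summing the $n\cdot(nt+1)^2$ resulting terms yields the claimed $\poly(n,t)$ running time.

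The only genuinely delicate part is the justification of the two structural claims underpinning the displayed identity --- that against any $v_{-w}$ the win condition for $w$ is exactly $\phi_w(v_w)>m_1$, $\phi_w(v_w)\ge m_2$, $\phi_w(v_w)\ge0$, with the strict/non-strict asymmetry forced by lexicographic tie-breaking, and that $w$'s payment upon winning equals $p_w(m_1,m_2)\in\supp\hat{F}_w$ and hence depends on $v_{-w}$ only through $(m_1,m_2)$ --- together with the bookkeeping that the double sum partitions, without overlap or omission, exactly those profiles on which some bidder wins. The boundary cases ($w=1$ or $w=n$, handled by the $-\infty$ placeholder; a zero payment; no winner, i.e.\ $\Max\{m_1,m_2,0\}$ strictly above every $\phi_w(v)$) are all subsumed by this description, and the remaining steps are routine bookkeeping and arithmetic.
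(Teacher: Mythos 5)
Your proposal is correct, and it rests on the same core observation as the paper's proof: the outcome of a Myersonian auction on a profile is determined by the winner together with the top ironed virtual value(s) among the remaining bidders (and the zero threshold), so the expected revenue decomposes into polynomially many configurations, each of whose probability factors by independence and each of whose payment is a fixed support value of the winner's distribution. The bookkeeping differs, though. The paper enumerates triples consisting of the winner $i$, the identity $j$ of the bidder who ``precedes'' all others besides $i$, and $j$'s bid $v_j$; the payment charged is the lowest bid of $i$ beating $\phi_j(v_j)$ (with the strict/weak asymmetry encoding lexicographic tie-breaking), and the weight is $\hat{f}_j(v_j)$ times a product of per-bidder precedence probabilities. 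You instead enumerate the winner $w$ together with the pair $\bigl(m_1,m_2\bigr)=\bigl(\max_{i<w}\phi_i(v_i),\max_{i>w}\phi_i(v_i)\bigr)$, compute $\PP[M_1=m_1]$ and $\PP[M_2=m_2]$ by CDF-difference products, and charge $p_w(m_1,m_2)$; tie-breaking is handled by splitting the other bidders by index rather than by naming the runner-up. Both routes give $\poly(n,t)$ terms, each computable in $\poly(n,t)$ time, and both partition exactly the profiles on which a sale occurs. A small point in your favor: by folding the reserve into $\Max\{m_2,0\}$ and insisting on $\phi_w(v_w)\ge0$, you explicitly cover the no-winner event and the case where the winner's payment is set by the zero threshold rather than by a competing bid (including $n=1$), details that the paper's pseudocode leaves implicit; and your claim that the payment lies in $\supp\hat{F}_w$ and depends on $v_{-w}$ only through $(m_1,m_2)$ is correctly grounded in the extension convention of \cref{extend-beyond-support}.
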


\begin{proof}
The algorithm is as follows:
\begin{enumerate}
\item
For every $i\in N$, for every $j\in N\setminus\{i\}$, and for every $v_j\in \supp\hat{F}_j$:
\begin{enumerate}
\item
If $j<i$, then let $L^i_j(v_j)=\PP_{v_i\sim\hat{F}_i}\bigl(\phi_j(v_j)\ge\phi_i(v_i)\bigr)$,
\item
If $j>i$, then let $L^i_j(v_j)=\PP_{v_i\sim\hat{F}_i}\bigl(\phi_j(v_j)>\phi_i(v_i)\bigr)$.
\item[]
(So $L^i_j(v_j)$ is the probability that the value $v_j$ of $j$ precedes the value of $i$.)
\end{enumerate}
\item
Initialize: $r\leftarrow0$.
\item
For every $i\in N$, for every $j\in N\setminus\{i\}$, and for every $v_j \in \supp\hat{F}_j$:
\begin{enumerate}
\item
If $i<j$, then let $w_i\in\hat{F}_i$ be the lowest such that $\phi_i(w_i)\ge\phi_j(v_j)$,
\item
If $i>j$, then let $w_i\in\hat{F}_i$ be the lowest such that $\phi_i(w_i)>\phi_j(v_j)$.
\item[]
(So $w_i$ is the lowest value of $i$ that precedes the value $v_j$ of $j$. Thus, $w_i$ is the payment of $i$ when both $i$ wins and $j$ is the second-highest bidder with bid $v_j$.)
\item
Update:
$r\leftarrow r + w_i\cdot\bigl(1-L^i_j(v_j)\bigr)\cdot\hat{f}_j(v_j)\cdot\smashoperator{\prod_{k\in N\setminus\{i,j\}}} L^k_j(v_j)$.
\item[]
(The added amount is the contribution to the expected revenue from valuation profiles where $i$ wins and $j$ is the second-highest bidder with bid $v_j$.)
\end{enumerate}
\item
Output: $r$.\qedhere
\end{enumerate}
\end{proof}

We are now ready to present a deterministic algorithm that, given a Myersonian auction $A=(\phi_i)_{i\in N}$ that maximizes (empirical) revenue, finds an $\varepsilon$-rounding of $A$ that maximizes (empirical) revenue up to less than an additive $n\varepsilon$. This algorithm sequentially and greedily searches for the values $v^i_j$ for $\varepsilon$-rounding $A$, calculating the interim revenues (following each rounding action) via the algorithm of \cref{calc-rev}, and bounding the revenue loss via \cref{coarse-pij}. From this algorithm, \crefpart{intro-round}{efficient} follows.

\begin{proposition}\label{efficient}
Let $n,t\in\mathbb{N}$ and let $\varepsilon>0$. There exists a deterministic algorithm that runs in time $\poly(H,n,\nicefrac{1}{\varepsilon},t)$, such that given
$n$ discrete distributions $\hat{F}_1,\ldots,\hat{F}_n$, each with support of size at most $t$, and for each $i\in N$ given a nondecreasing $\phi_i:\supp\hat{F}_i\rightarrow\mathbb{R}$ (so, $\phi_i$ is a nondecreasing sequence of at most $t$ real numbers), outputs for every $i\in N$ a function $\phi'_i$, such that $(\phi'_i)_{i\in N}$ is an $\varepsilon$-rounding of $(\phi_i)_{i\in N}$,
and such that \[\Rev^{(\phi'_i)_{i\in N}}(\hat{F})>\Rev^{(\phi_i)_{i\in N}}(\hat{F})-n\varepsilon,\]
where $\hat{F}=\hat{F}_1\times\cdots\times\hat{F}_n$.
\end{proposition}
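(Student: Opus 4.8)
The plan is to flatten the ironed virtual valuations one $\varepsilon$-interval at a time by a greedy search, controlling the revenue loss of each step via \cref{coarse-pij} and implementing each step via \cref{calc-rev}. Concretely, I would process the bidders $1,\dots,n$ one at a time, and for each bidder~$i$ process its $\varepsilon$-intervals in increasing order of index $j=0,\dots,\lfloor\nicefrac H\varepsilon\rfloor$, maintaining a ``current'' Myersonian auction $B$ initialized to $(\phi_i)_{i\in N}$. When handling $(i,j)$, I would try every candidate value $v^i_j\in\bigl([j\varepsilon,(j+1)\varepsilon)\cap\supp\hat F_i\bigr)\cup\{j\varepsilon\}$ (at most $t+1$ of them), for each forming the auction obtained from $B$ by replacing $\phi_i$ on $[j\varepsilon,(j+1)\varepsilon)$ by the constant $\phi_i(v^i_j)$ and computing its revenue from $\hat F$, then committing to the revenue-maximizing candidate and updating $B$. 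The output $(\phi'_i)_{i\in N}$ is, by construction, the result of applying the $\varepsilon$-rounding rule $\{(i,j,v^i_j)\}$ to $(\phi_i)_{i\in N}$, hence an $\varepsilon$-rounding of it. The revenue evaluations would be done via \cref{calc-rev}, applied to the $\hat F_i$ after augmenting each with zero probability mass on all grid points $0,\varepsilon,\dots,\lfloor\nicefrac H\varepsilon\rfloor\varepsilon$: this leaves every revenue unchanged but makes the support contain every point at which a partially-flattened $\phi_i$ can jump, so that minimal winning bids --- hence payments --- are computed correctly; since a partially-flattened $\phi_i$ restricted to the augmented support is still a nondecreasing sequence of $\poly(t,\nicefrac H\varepsilon)$ reals, each call runs in $\poly(n,t,\nicefrac H\varepsilon)$ time, and with $\poly(n,\nicefrac H\varepsilon)$ pairs and $O(t)$ candidates per pair the total running time is $\poly(H,n,\nicefrac1\varepsilon,t)$.

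For the loss bound I would list the pairs in processing order, set $A_0=(\phi_i)_{i\in N}$, and let $A_k$ be the auction after step $k$, which acts on some pair $(i_k,j_k)$; let $p^{i_k}_{j_k}$ be the probability that $i_k$ both wins and pays a price in the open interval $\bigl(j_k\varepsilon,(j_k+1)\varepsilon\bigr)$ in $A_{k-1}$ under $\hat F$. If $\hat F_{i_k}|_{j_k}$ is defined, then \cref{coarse-pij} applied to $A_{k-1}$ shows that applying the $\hat F_{i_k}$-randomized $\varepsilon$-rounding action on $i_k$'s interval $j_k$ yields expected revenue greater than $\Rev^{A_{k-1}}(\hat F)-\varepsilon\,p^{i_k}_{j_k}$, so some value in $\supp\hat F_{i_k}|_{j_k}$ does at least as well; as this set lies among our candidates and we take the best candidate, $\Rev^{A_k}(\hat F)>\Rev^{A_{k-1}}(\hat F)-\varepsilon\,p^{i_k}_{j_k}$. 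If $\hat F_{i_k}|_{j_k}$ is undefined, the original $\phi_{i_k}$ is already constant on interval $j_k$, so flattening to $j_k\varepsilon$ leaves $B$ unchanged. Telescoping over all steps yields $\Rev^{(\phi'_i)}(\hat F)>\Rev^{(\phi_i)}(\hat F)-\varepsilon\sum_{i\in N}\sum_j p^i_j$, so the result follows once we show $\sum_j p^i_j\le1$ for every bidder $i$.

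The hard part will be precisely this bound $\sum_j p^i_j\le1$, because the auction changes between the steps it involves. Fix $i$; while bidder $i$'s intervals are processed the other bidders' (possibly already-flattened) $\phi$'s are frozen, so for each realization $v_{-i}\sim\hat F_{-i}$ there is a fixed threshold level $\ell(v_{-i})$ and comparator ${\bowtie}\in\{>,\ge\}$ with ``$i$ wins iff $\phi_i(v_i)\bowtie\ell$'', and $i$'s payment is the infimum $w^{(j)}$ of $\{v:\phi_i^{(j)}(v)\bowtie\ell\}$, where $\phi_i^{(j)}$ denotes $\phi_i$ with exactly intervals $0,\dots,j-1$ flattened. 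Since $p^i_j\le\PP_{v_{-i}}\bigl(w^{(j)}\in(j\varepsilon,(j+1)\varepsilon)\bigr)$, it suffices to show that for each fixed $v_{-i}$ at most one $j$ has $w^{(j)}\in(j\varepsilon,(j+1)\varepsilon)$. The key observation is that flattening an interval can only move the threshold to a grid point $m\varepsilon$, never into the interior of some later interval: if $w^{(j)}\in(j\varepsilon,(j+1)\varepsilon)$, then --- using that $\phi_i^{(j)}$ is unflattened from interval $j$ onward and that the original $\phi_i$ is nondecreasing --- flattening interval $j$ makes the threshold $j\varepsilon$ (if the flattened value still satisfies $\bowtie\ell$) or $(j+1)\varepsilon$ (otherwise), and it stays at that grid point under every later flattening of an interval $>j$; hence $w^{(j')}$ is a grid point $\le(j+1)\varepsilon$ and lies in no open interval $(j'\varepsilon,(j'+1)\varepsilon)$ for $j'>j$, and symmetrically no $j'<j$ can be interior either. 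Therefore $\sum_j p^i_j\le\EE_{v_{-i}}\bigl[\#\{j:w^{(j)}\in(j\varepsilon,(j+1)\varepsilon)\}\bigr]\le1$, and summing over $i$ completes the proof.
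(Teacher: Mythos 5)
Your proposal is correct and takes essentially the same approach as the paper: the same greedy bidder-by-bidder, interval-by-interval flattening with candidates from $\supp\hat F_i$ (plus the grid point), revenues evaluated via \cref{calc-rev}, and the per-step loss bounded via \cref{coarse-pij}. Your pointwise argument that for each fixed $v_{-i}$ at most one index $j$ can have the partially-flattened minimal winning bid interior to $\bigl(j\varepsilon,(j+1)\varepsilon\bigr)$ is just an unpacked version of the paper's remark that $p^i_j$ is unchanged while bidder $i$'s other intervals are flattened, yielding the same bound $\sum_j p^i_j\le 1$ per bidder and hence a total loss below $n\varepsilon$.
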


\begin{proof}
The algorithm is as follows:
\begin{enumerate}
\item
Initialize: $\phi'_i\leftarrow\phi_i$ for every $i$.
\item
For every $j\in\bigl\{0,\ldots,\lfloor\nicefrac{H}{\varepsilon}\rfloor\bigr\}$: initialize\footnote{See \cref{extend-beyond-support}.} $\phi'_i(j\cdot\varepsilon)\gets \Max_{w_i\in \supp \hat{F}_i \cap [0,j\cdot\varepsilon]}\bigl\{\phi_i(w_i)\bigr\}$.
\item
For every $i\in N$:
\begin{enumerate}
\item
For every $j\in\bigl\{0,\ldots,\lfloor\nicefrac{H}{\varepsilon}\rfloor\bigr\}$:
\begin{enumerate}
\item
Let $\phi''_i\leftarrow\phi'_i$.
\item
For every $v^i_j\in \supp\hat{F}_i\cap\bigl[j\cdot\varepsilon,(j+1)\cdot\varepsilon\bigr)$:
\begin{enumerate}
\item
Set $\phi''_i(v_i)\leftarrow\phi'(v^i_j)$ for every $v_i\in\Bigl(\supp\hat{F}_i\cap\bigl[j\cdot\varepsilon,(j+1)\cdot\varepsilon\bigr)\Bigr)\cup\{j\cdot\varepsilon\}$.
\item[]
(So $\phi''$ is the result of applying the $\varepsilon$-rounding action $(i,j,v^i_j)$ to $\phi'$.)
\item
Compute $r_{v^i_j}\leftarrow\Rev^{(\phi_i'',\phi'_{-i})}(\hat{F})$ via the algorithm of \cref{calc-rev}.
\end{enumerate}
\item
For $v^i_j$ for which $r_{v^i_j}$ is highest:\\
Update $\phi_i'(v_i)\leftarrow\phi'(v^i_j)$ for every $v_i\in\Bigl(\supp\hat{F}_i\cap\bigl[j\cdot\varepsilon,(j+1)\cdot\varepsilon\bigr)\Bigr)\cup\{j\cdot\varepsilon\}$.
\item[]
(So $\phi'$ is updated by applying the $\varepsilon$-rounding action $(i,j,v^i_j)$ that yields the highest revenue for $\phi'$ (after updating) among all $\varepsilon$-rounding actions for bidder~$i$'s $\varepsilon$-interval $j$.)
\end{enumerate}
\end{enumerate}
\item[]
(\cref{coarse-pij} guarantees that the revenue loss is less than $\varepsilon\cdot p^i_j$ for every pair $(i,j)$, and so less than~$\varepsilon\cdot p_i\le\varepsilon$ for every $i$, where $p_i$ is the probability that $i$ wins, for a total of less than $n\varepsilon$ over all bidders. Note that $p^i_j$ does not change during our modification of $\phi_i$ at any $\varepsilon$-interval with index $k\ne j$; on the other hand, the probabilities $p_i$ may well change during our modification of $\phi_{i'}$ for $i'\ne i$, and therefore our bound on the overall revenue loss is $n\varepsilon$ rather than $\varepsilon$ as in the case of randomized rounding studied in \cref{existence}.)
\item
Output: $(\phi'_i)_{i\in N}$.\qedhere
\end{enumerate}
\end{proof}

Recall that we have shown in \cref{existence} that there exists an $\varepsilon$-rounding of any given Myersonian auction for any given target distribution, that loses less than an additive $\varepsilon$ in revenue. It is unclear to us whether one can bridge the gap between this (tight) less-than-$\varepsilon$ loss and the less-than-$n\varepsilon$ loss of \cref{efficient} via an efficient deterministic algorithm.

\section{Uniform Convergence over the Set of\texorpdfstring{\\}{ }Rounded Myersonian Auctions}\label{convergence}

In this \lcnamecref{convergence}, we conclude the derivation of \cref{intro-empirical} from \cref{intro}.
While most previous analyses \citep{Morgenstern-Roughgarden,ironing-in-the-dark} restricted the set of possible ``output auctions'' to an infinite set and used advanced statistical tools to obtain that all auctions in this set perform similarly well on the empirical distribution and on the true distribution, in this paper we restrict the set of possible ``output auctions'' to a finite set, for which such a uniform-convergence result may be shown via an elegant concentration inequality due to \cite{Babichenko-Barman-Peretz} \citep[see also][]{Devanur-Huang-Psomas}.

\begin{definition}[$S^n_{\varepsilon}$]
For every $\varepsilon>0$,
we denote the set of all $\varepsilon$-coarse Myersonian $n$-bidder auctions (for valuations in $[0,H]$) by $S^n_{\varepsilon}$.
\end{definition}

\begin{lemma}\label{small}
$\bigl|S^n_{\varepsilon}\bigr|\le\exp\bigl(\poly(H,n,\nicefrac{1}{\varepsilon})\bigr)$.
\end{lemma}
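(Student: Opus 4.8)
The plan is to bound the number of $\varepsilon$-coarse Myersonian $n$-bidder auctions by a direct counting argument over the finitely many data that such an auction can carry. Recall that an $\varepsilon$-coarse Myersonian auction is a tuple $(\phi_i)_{i\in N}$ of nondecreasing functions $\phi_i:[0,H]\to\mathbb{R}$, each of which is constant on every $\varepsilon$-interval; there are $m\eqdef\lfloor\nicefrac{H}{\varepsilon}\rfloor+1$ such intervals. So each $\phi_i$ is determined by a nondecreasing sequence of $m$ real values, one per interval. The key observation is that the \emph{revenue} (and more broadly, the entire outcome function) of a Myersonian auction depends on the functions $\phi_i$ only through the \emph{order type} they induce on the combined multiset of ``ironed virtual bids'' — i.e., the allocation on a profile $(v_1,\dots,v_n)$ is decided by which $\phi_i(v_i)$ is largest (with lexicographic tie-breaking), whether it is $\ge 0$, and the winner's payment is the threshold bid, which again only depends on comparisons among the $\phi$-values. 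Two $\varepsilon$-coarse Myersonian auctions that induce the same relative order on the $nm$ values $\{\phi_i(\text{interval }j) : i\in N,\ 0\le j<m\}$ — together with the same sign pattern relative to $0$ — define the same auction as elements of $S^n_\varepsilon$ (in the sense that matters: they have the same outcome function on $[0,H]^n$).

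First I would make precise this equivalence: define a map from $S^n_\varepsilon$ to the set of ``augmented weak orders'' on the $N\times\{0,\dots,m-1\}$ grid of interval-slots, where an augmented weak order records, for each pair of slots, whether the first $\phi$-value is $<$, $=$, or $>$ the second, and additionally records for each slot whether its value is $<0$, $=0$, or $>0$; but subject to the monotonicity constraint that within a fixed bidder $i$ the slots $j=0,1,\dots,m-1$ are weakly increasing. I would then argue this map is injective onto its image up to auction-equivalence: the outcome of the Myersonian auction on any profile is a function of only this combinatorial data, as sketched above (the winner is the slot maximizing $\phi$, present iff its value is $\ge 0$, and the payment is the smallest bid of the winner that still keeps her slot maximal and nonnegative — all decided by the recorded comparisons). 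Hence $|S^n_\varepsilon|$ is at most the number of such augmented weak orders.

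Next I would bound that count crudely. A weak order on $k$ elements is counted by the ordered Bell (Fubini) number, which is at most $k! \cdot 2^k \le k^k$, so certainly at most $\exp(k\log k)$ weak orders; here $k = nm = n(\lfloor\nicefrac{H}{\varepsilon}\rfloor+1)$. The extra sign data multiplies by at most $3^{k}$. Therefore $|S^n_\varepsilon| \le 3^{nm}\cdot (nm)^{nm} \le \exp\bigl(O(nm\log(nm))\bigr) = \exp\bigl(\poly(H,n,\nicefrac{1}{\varepsilon})\bigr)$, since $nm = \poly(H,n,\nicefrac{1}{\varepsilon})$. (One could get a cleaner bound by noting each $\phi_i$ is a nondecreasing sequence of $m$ values, so all $n$ sequences are determined by choosing, for each of the $nm$ slots, its rank among the $nm$ slots consistently with monotonicity and its sign — at most $(nm)^{nm}\cdot 3^{nm}$ choices — but the weak-order phrasing is the same bound.)

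The main obstacle I anticipate is not the counting arithmetic, which is routine, but making airtight the claim that the auction \emph{as an element of $S^n_\varepsilon$} is determined by this finite combinatorial data rather than by the actual real values $\phi_i(v_i)$. In particular one must be careful that the payment rule (minimal winning bid) and the ``reserve'' behavior (no winner when all $\phi_i(v_i)<0$) are indeed invariant under replacing the $\phi$-values by any other real values inducing the same augmented weak order — this is where the lexicographic tie-breaking convention and the right-continuity/extension convention of \cref{extend-beyond-support} need to be invoked so that ``minimal winning bid'' is well defined and depends only on comparisons. Once that invariance is established, the bound follows immediately, and any overcounting (distinct augmented weak orders giving the same auction) only helps. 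This gives $\bigl|S^n_\varepsilon\bigr|\le\exp\bigl(\poly(H,n,\nicefrac{1}{\varepsilon})\bigr)$, as claimed.
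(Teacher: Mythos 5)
Your proposal is correct and follows essentially the same route as the paper: both bound $\bigl|S^n_{\varepsilon}\bigr|$ by showing that the auction's allocation and threshold payments depend only on the comparison data among the $n\cdot(\lfloor\nicefrac{H}{\varepsilon}\rfloor+1)$ per-interval ironed virtual values and their signs, and then counting the possible order structures, which is $\exp\bigl(\poly(H,n,\nicefrac{1}{\varepsilon})\bigr)$. The paper records this data as a sequence of distinct bidder--interval pairs sorted by decreasing nonnegative value (with the auction's own tie-breaking), while you record a full weak order plus sign pattern; this is only a cosmetic difference in the encoding and yields the same bound.
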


\begin{proof}[Proof.\footnotemark]\footnotetext{This proof is reminiscent of the way \cite{Devanur-Huang-Psomas} bound the number of optimal auctions on a finite valuation space.}
Let  $I\eqdef\bigl\{0,\ldots,\lfloor\nicefrac{H}{\varepsilon}\rfloor\bigr\}$. We claim that every $\varepsilon$-coarse Myersonian auction $A'$ (for valuations in $[0,H]$) is completely specified by a sequence $\bigl((i_1,j_1),(i_2,j_2),\ldots,(i_K,j_K)\bigr)$ of distinct pairs from $N\times I$ (so $K\le|N|\cdot|I|$). The correspondence is as follows: Given such a sequence,
the allocation rule of the corresponding auction is specified by the following algorithm:
\begin{enumerate}
\item
If $v_{i_1} \ge j_1\cdot\varepsilon$, then $i_1$ wins; else, continue.
\item
If $v_{i_2} \ge j_2\cdot\varepsilon$, then $i_2$ wins; else, continue.
\item[]
$\vdots$
\item[$K$.]
If $v_{i_K} \ge j_K\cdot\varepsilon$, then $i_K$ wins; else, continue.
\item[$K\!+\!1$.]
No bidder wins.
\item[]
The winner (if any) pays her minimal winning bid.
\end{enumerate}
Indeed, such a sequence may be constructed for any $\varepsilon$-coarse Myersonian auction $A'=(\phi'_i)_{i\in N}$ by taking all pairs $(i,j)$ such that $\phi'_i$ attains nonnegative value on the $\varepsilon$-interval with index $j$, and sorting these pairs in decreasing order of this nonnegative value, breaking ties in favor of lower $i$, and for the same $i$ in favor of higher $j$.\footnote{This results in a sequence with the additional property that for every $k\in\{1,\ldots,K\}$ and for every $j\in\bigl\{j_k+1,\ldots,\lfloor\nicefrac{H}{\varepsilon}\rfloor\bigr\}$, there exists $\ell<k$ such that $i_\ell=i_k$ and $j_\ell=j$. The above correspondence between $\varepsilon$-coarse Myersonian auctions and sequences of bidder-index pairs, when the sequences are restricted to have this additional property, is in fact one-to-one.}
\end{proof}

By \cref{small}, we obtain the following corollary of \crefpart{intro-round}{exists}/\cref{exists}, which may be of independent interest.

\begin{corollary}
For every $\varepsilon>0$, there exists a finite set of auctions $S^n_{\varepsilon}$ (of size at most $\exp\bigl(\poly(H,n,\nicefrac{1}{\varepsilon})\bigr)$), such that for every product distribution $F=F_1\times\ldots\times F_n\in\Delta\bigl([0,H]\bigr)^n$ there exists an auction $A\in S^n_{\varepsilon}$ that approximates the maximum possible revenue from $F$ up to less than an additive $\varepsilon$.
\end{corollary}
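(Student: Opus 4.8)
The plan is to take $S^n_\varepsilon$ to be precisely the set already named $S^n_\varepsilon$ in the definition preceding \cref{small}, namely the set of all $\varepsilon$-coarse Myersonian $n$-bidder auctions for valuations in $[0,H]$. The cardinality requirement is then immediate: \cref{small} asserts exactly that $\bigl|S^n_\varepsilon\bigr|\le\exp\bigl(\poly(H,n,\nicefrac{1}{\varepsilon})\bigr)$.

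For the approximation claim I would fix an arbitrary product distribution $F=F_1\times\cdots\times F_n\in\Delta\bigl([0,H]\bigr)^n$ and let $A\eqdef\OPT(F)$ be the Myersonian auction guaranteed by \cref{myerson} that maximizes revenue from $F$ among all auctions. Applying \crefpart{intro-round}{exists} (equivalently \cref{exists}) to $A$ and $F$ produces an $\varepsilon$-rounding $A'$ of $A$ with $\Rev^{A'}(F)>\Rev^A(F)-\varepsilon$. Since $A$ is Myersonian and, by the remark at the beginning of \cref{efficiency}, every $\varepsilon$-rounding of a Myersonian auction is again Myersonian and $\varepsilon$-coarse, we get $A'\in S^n_\varepsilon$. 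Because $\Rev^A(F)$ is by construction the maximum revenue obtainable from $F$ over all auctions, the inequality $\Rev^{A'}(F)>\Rev^A(F)-\varepsilon$ says exactly that this $A'\in S^n_\varepsilon$ approximates the maximum possible revenue from $F$ up to less than an additive $\varepsilon$, as required.

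There is essentially no obstacle: all the content has already been established, in \cref{exists} (existence of a revenue-preserving $\varepsilon$-rounding tailored to $F$) and in \cref{small} (the cardinality bound on $S^n_\varepsilon$). The one point worth stating explicitly is the structural observation — recorded in the remark opening \cref{efficiency} — that $\varepsilon$-rounding preserves the Myersonian form, which is what places $A'$ inside the \emph{finite} set $S^n_\varepsilon$ rather than merely inside the a priori larger class of all $\varepsilon$-coarse auctions.
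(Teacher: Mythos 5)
Your proposal is correct and matches the paper's intended argument exactly: the paper derives this corollary from \cref{small} together with \cref{exists}, with the (stated) observation that any $\varepsilon$-rounding of the Myersonian auction $\OPT(F)$ of \cref{myerson} is again an $\varepsilon$-coarse Myersonian auction and hence lies in $S^n_{\varepsilon}$. Nothing further is needed.
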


The exponential size of $S^n_{\varepsilon}$ means that if we can show for any given single auction that the number of samples required for it to perform similarly well on the empirical distribution $\hat{F}$ and on the true distribution $F$ is polynomial in $H,n,$ and $\nicefrac{1}{\varepsilon}$, but has logarithmic dependence on the desired success probability $\delta$ (as in the Chernoff-Hoeffding Inequality), then a union bound over $S^n_{\varepsilon}$ can show that a polynomial number of samples suffices to guarantee that with high probability all $\varepsilon$-coarse Myersonian auctions perform similarly well on the empirical distribution $\hat{F}$ and on the true distribution $F$.
Fortunately, an elegant concentration inequality due to \cite{Babichenko-Barman-Peretz} \citep[see also][]{Devanur-Huang-Psomas} shows precisely this (for any given single auction). The following \lcnamecref{converge-one} is a special case of this concentration inequality. (See also \cref{converge-one-iid} in \cref{iid} for an analogous yet somewhat more delicate concentration inequality for i.i.d.\ distributions, which does not follow from the analysis of \citealp{Babichenko-Barman-Peretz} or of \citealp{Devanur-Huang-Psomas}.)

\begin{proposition}[Follows from \citealp{Babichenko-Barman-Peretz}; see also \citealp{Devanur-Huang-Psomas}]\label{converge-one}
For every $\varepsilon>0$ and $\delta>0$, there
exists $t=\poly(H,n,\nicefrac{1}{\varepsilon},\log\nicefrac{1}{\delta})$ such that the following holds.
Fix $F_1,\ldots,F_n\in\Delta([0,H])$ and fix an auction $A$. Draw $t$ samples from each $F_i$, and let $\hat{F}_i$ be the empirical
uniform distribution over the $t$ samples from $F_i$.  Then, with probability at least~$1\!-\!\delta$ it is the case that
\[
\bigl|\Rev^A(\hat{F}) - \Rev^A(F)\bigr|<\varepsilon,
\]
where $F=F_1\times\cdots\times F_n$ and $\hat{F}=\hat{F}_1\times\cdots\times\hat{F}_n$.
\end{proposition}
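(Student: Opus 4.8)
The plan is to prove this as a direct application of McDiarmid's bounded differences inequality; this is essentially the special case of the Babichenko--Barman--Peretz inequality that is needed here, and if one prefers, one can instead invoke \cite{Babichenko-Barman-Peretz} as a black box once it is checked that $r^A$ is a $[0,H]$-valued function on a product domain.

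First I would record that the empirical revenue is an \emph{unbiased} estimator of the true revenue. Writing $x^i_1,\ldots,x^i_t$ for the $t$ samples drawn from $F_i$ and unfolding the definition of $\Rev$,
\[
\Rev^A(\hat F)=\frac{1}{t^n}\sum_{(k_1,\ldots,k_n)\in\{1,\ldots,t\}^n} r^A\bigl(x^1_{k_1},\ldots,x^n_{k_n}\bigr).
\]
Since the sample pools of distinct bidders are drawn independently, for each fixed index tuple $(k_1,\ldots,k_n)$ the vector $\bigl(x^1_{k_1},\ldots,x^n_{k_n}\bigr)$ is distributed exactly as $F=F_1\times\cdots\times F_n$, so its expected $r^A$-value equals $\Rev^A(F)$; averaging over the $t^n$ tuples gives $\EE\bigl[\Rev^A(\hat F)\bigr]=\Rev^A(F)$. (It matters here that $\hat F$ is the \emph{product} of the empirical marginals rather than the empirical of $F$: this is exactly what makes the estimator unbiased, with no cross-coordinate correlations to control.)

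Next I would verify the bounded-differences property of $\Rev^A(\hat F)$, viewed as a function of the $tn$ independent samples. Replacing one sample $x^i_k$ by another value changes the empirical marginal $\hat F_i$ by at most $\nicefrac{1}{t}$ in total variation and leaves the other $n-1$ marginals untouched; hence $\hat F$ changes by at most $\nicefrac{1}{t}$ in total variation, and since $r^A$ takes values in $[0,H]$ (revenue is at most the winner's value by individual rationality, and there is at most one winner), the quantity $\Rev^A(\hat F)=\EE_{\hat F}\,r^A$ changes by at most $\nicefrac{H}{t}$. McDiarmid's inequality applied to a function of $tn$ independent coordinates each with bounded difference $\nicefrac{H}{t}$, combined with the unbiasedness above, then gives
\[
\PP\Bigl(\bigl|\Rev^A(\hat F)-\Rev^A(F)\bigr|\ge\varepsilon\Bigr)\le 2\exp\!\left(-\frac{2\varepsilon^2 t}{nH^2}\right),
\]
and choosing $t\ge\frac{nH^2}{2\varepsilon^2}\ln\frac{2}{\delta}=\poly(H,n,\nicefrac{1}{\varepsilon},\log\nicefrac{1}{\delta})$ drives the right-hand side below $\delta$.

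The only genuinely delicate point is getting the bounded-differences constant and the unbiasedness claim exactly right simultaneously --- one has to be sure that a single altered sample perturbs only one of the $n$ empirical marginals, and that taking the product of the empirical marginals introduces no bias. Everything else is a routine concentration estimate; the citation to \cite{Babichenko-Barman-Peretz} is useful mainly because it packages exactly this (and somewhat more general statements) in reusable form.
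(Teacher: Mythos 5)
Your proof is correct, but it takes a genuinely different route from the paper's. You apply McDiarmid's bounded-differences inequality once to $\Rev^A(\hat{F})$ viewed as a function of all $tn$ independent samples, after checking unbiasedness (each fixed index tuple $(x^1_{k_1},\ldots,x^n_{k_n})$ is distributed exactly as $F$) and the sensitivity bound $\nicefrac{H}{t}$ (changing one sample affects only the $t^{n-1}$ of the $t^n$ terms that use it, equivalently perturbs one empirical marginal by $\nicefrac{1}{t}$ in total variation); both checks are right, and the resulting bound $2\exp\bigl(-\nicefrac{2\varepsilon^2 t}{nH^2}\bigr)$ with $t\ge\frac{nH^2}{2\varepsilon^2}\ln\frac{2}{\delta}$ is valid. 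The paper instead stays with the plain Chernoff--Hoeffding inequality: it partitions the $t^n$ index tuples into $t^{n-1}$ ``shifted diagonals'' $(v^{k_1+j}_1,\ldots,v^{k_n+j}_n)$, $j=1,\ldots,t$ (indices wrapping around), observes that each diagonal consists of $t$ \emph{independent} draws from $F$, applies Hoeffding to each diagonal average, and finishes with a union bound over the $t^{n-1}$ diagonals. Your argument is shorter, avoids the union bound over diagonals, and gives a marginally cleaner choice of $t$ (the paper needs $t^{n-1}\cdot 2\exp(-\nicefrac{2t\varepsilon^2}{H^2})\le\delta$, which is of the same polynomial order); what the paper's decomposition buys is that it uses only the most elementary i.i.d.\ concentration bound and is the template that, with additional combinatorial care about repeated indices, is adapted to the i.i.d.\ setting of \cref{converge-one-iid}, where the product-of-empirical-marginals trick alone no longer suffices. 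Also note that, as the paper stresses in \cref{converge-product-space}, the key point is exactly the one you flag: $\hat{F}$ is the product of the empirical marginals, not the empirical distribution of sampled profiles, and your unbiasedness computation is where that distinction is handled.
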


For completeness, we provide in the \lcnamecref{proofs} a short proof of \cref{converge-one} via the Chernoff-Hoeffding Inequality and a union bound.

\begin{sloppypar}
\begin{remark}\label{converge-product-space}
By the Chernoff-Hoeffding Inequality,
when we draw $t$ tuples $(v^1_1,\ldots,v^1_n),\cdots,(v^t_1,\ldots,v^t_n)$ from $F$,
the empirical average value of the revenue of $A$ is  concentrated around its expectation $\Rev^A(F)$.  This is \emph{not} what \cref{converge-one} (and \citealp{Babichenko-Barman-Peretz} / \citealp{Devanur-Huang-Psomas}) shows, since this shows that
$\EE_{j\sim U(\{1,2,\ldots,t\})} r^A(v^j_1,\ldots,v^j_n)$ is close to $\Rev^A(F)$, while \cref{converge-one} shows that $\Rev^A(\hat{F})=\EE_{j_1 \sim U(\{1,\ldots,t\}),\cdots,j_n \sim U(\{1,\ldots,t\})} r^A(v^{j_1}_1,\ldots,v^{j_n}_n)$ is close to $\Rev^A(F)$.
\end{remark}
\end{sloppypar}

As noted above, the logarithmic dependence of $t$ on $\nicefrac{1}{\delta}$ in \cref{converge-one}, in conjunction with the exponential size of $S^n_{\varepsilon}$, allows us to use the union bound to obtain the required uniform-convergence result over $S^n_{\varepsilon}$.

\begin{lemma}\label{uniform}
For every $\varepsilon>0$ and $\delta>0$, there
exists $t=\poly(H,n,\nicefrac{1}{\varepsilon},\log\nicefrac{1}{\delta})$ such that the following holds.
Fix $F_1,\ldots,F_n\in\Delta\bigl([0,H]\bigr)$,
draw $t$ samples from each $F_i$, and let $\hat{F}_i$ be the empirical
uniform distribution over the $t$ samples from $F_i$.  
With probability at least~$1\!-\!\delta$, it is the case that
\[
\bigl|\Rev^A(F) - \Rev^A(\hat{F})\bigr|<\varepsilon
\]
holds simultaneously for every $A \in S^n_{\varepsilon}\cup\bigl\{\OPT(F)\bigr\}$, where $F=F_1\times\cdots\times F_n$ and $\hat{F}=\hat{F}_1 \times \cdots \times \hat{F}_n$.
\end{lemma}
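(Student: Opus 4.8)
The plan is to combine the single-auction concentration bound of \cref{converge-one} with the cardinality bound of \cref{small} via a union bound, choosing the per-auction failure probability small enough to pay for the union while keeping the sample complexity polynomial. Crucially, although $\OPT(F)$ depends on the underlying distribution~$F$, the statement fixes $F_1,\ldots,F_n$ \emph{before} the $t$~samples are drawn, so $\OPT(F)$ --- like every member of the fixed finite set $S^n_\varepsilon$ --- is an auction that does not depend on the randomness of the samples, and hence \cref{converge-one} is applicable to it.

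Concretely, first I would set $M\eqdef\bigl|S^n_\varepsilon\cup\{\OPT(F)\}\bigr|$ and note that $M\le\bigl|S^n_\varepsilon\bigr|+1\le\exp\bigl(\poly(H,n,\nicefrac{1}{\varepsilon})\bigr)$ by \cref{small}. Setting $\delta'\eqdef\nicefrac{\delta}{M}$, we get $\log\nicefrac{1}{\delta'}=\log\nicefrac{1}{\delta}+\log M\le\log\nicefrac{1}{\delta}+\poly(H,n,\nicefrac{1}{\varepsilon})$. Feeding $\delta'$ into \cref{converge-one} produces a sample size $t=\poly\bigl(H,n,\nicefrac{1}{\varepsilon},\log\nicefrac{1}{\delta'}\bigr)$; since a polynomial of a polynomial is again a polynomial, this $t$ is still $\poly\bigl(H,n,\nicefrac{1}{\varepsilon},\log\nicefrac{1}{\delta}\bigr)$, as required by the statement. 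Fix this~$t$.

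Next I would apply \cref{converge-one} once for each $A\in S^n_\varepsilon\cup\{\OPT(F)\}$, with the common sample size~$t$ and failure probability~$\delta'$: for each such~$A$, the probability over the $t$~draws from each~$F_i$ that $\bigl|\Rev^A(\hat F)-\Rev^A(F)\bigr|\ge\varepsilon$ is at most~$\delta'$. A union bound over these $M$ events then shows that the probability that $\bigl|\Rev^A(F)-\Rev^A(\hat F)\bigr|\ge\varepsilon$ for \emph{some} $A$ in the set is at most $M\cdot\delta'=\delta$, so with probability at least $1-\delta$ the desired inequality holds simultaneously for every such~$A$, which is exactly the claim.

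The only real obstacle is bookkeeping the polynomial dependence: one must verify that the exponential bound on $\bigl|S^n_\varepsilon\bigr|$ contributes only a polynomial additive term $\poly(H,n,\nicefrac{1}{\varepsilon})$ inside the ``$\log$'' appearing in the sample complexity of \cref{converge-one}, rather than blowing it up --- this is exactly why \cref{small} is stated as a bound of the form $\exp(\poly(\cdot))$ and why \cref{converge-one} is stated with only logarithmic dependence on $\nicefrac{1}{\delta}$. Everything else is a routine union bound.
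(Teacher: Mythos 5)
Your proposal is correct and matches the paper's own proof essentially verbatim: the paper likewise sets $\delta'\eqdef\frac{\delta}{|S^n_{\varepsilon}|+1}$, invokes \cref{small} to keep $\log\nicefrac{1}{\delta'}=\poly(H,n,\nicefrac{1}{\varepsilon},\log\nicefrac{1}{\delta})$, applies \cref{converge-one} to each auction in $S^n_{\varepsilon}\cup\{\OPT(F)\}$ separately, and concludes with a union bound. Your explicit remark that $\OPT(F)$ is fixed before the samples are drawn (so \cref{converge-one} applies to it) is a correct observation that the paper leaves implicit.
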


Combining \cref{uniform,efficient,elkind}, we obtain the following \lcnamecref{empirical}, providing a polynomial-time algorithm for learning an approximately optimal auction from samples from an arbitrary unknown bounded product distribution.

\begin{theorem}[Detailed version of \cref{intro-empirical}]\label{empirical}
There exists $t=\poly(H,n,\nicefrac{1}{\varepsilon},\log\nicefrac{1}{\delta})$ such that the following holds.
Let $F_1,\ldots,F_n$ be arbitrary distributions on $[0,H]$.
For every $i\in N$, draw $t$ samples from $F_i$, and let $\hat{F}_i$ be the empirical distribution over the $t$~samples from $F_i$.
Then, with probability at least~$1\!-\!\delta$, the optimal auction for $\hat{F}=\hat{F}_1\times\cdots\times\hat{F}_n$ (which can be deterministically computed in time $\poly(t)$ via \cref{elkind}),
when \mbox{$\frac{\varepsilon}{n+2}$-rounded} via the deterministic polynomial-time algorithm of \cref{efficient}, approximates the maximum possible revenue from $F=F_1\times\cdots\times F_n$ up to less than an additive~$\varepsilon$.
\end{theorem}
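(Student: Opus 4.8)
The plan is to set $\varepsilon'\eqdef\frac{\varepsilon}{n+2}$ and to chain together three sources of revenue loss whose total is exactly $(n+2)\varepsilon'=\varepsilon$: a \emph{rounding} loss of at most $n\varepsilon'$ coming from \cref{efficient}, an \emph{empirical-to-true} transfer loss of at most $\varepsilon'$ incurred by moving the output auction's revenue from $\hat{F}$ to $F$, and a \emph{true-to-empirical} transfer loss of at most $\varepsilon'$ incurred by moving $\OPT(F)$'s revenue from $F$ to $\hat{F}$; the optimality of $\OPT(\hat{F})$ on $\hat{F}$ is what glues the latter two transfers together.

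First I would invoke \cref{uniform} with error parameter $\varepsilon'$ and confidence $\delta$, obtaining $t=\poly(H,n,\nicefrac{1}{\varepsilon'},\log\nicefrac{1}{\delta})=\poly(H,n,\nicefrac{1}{\varepsilon},\log\nicefrac{1}{\delta})$ such that, with probability at least $1-\delta$, the following event $\mathcal{E}$ holds: $\bigl|\Rev^A(F)-\Rev^A(\hat{F})\bigr|<\varepsilon'$ simultaneously for every $A\in S^n_{\varepsilon'}\cup\{\OPT(F)\}$. I would condition on $\mathcal{E}$ for the remainder of the argument. Let $A^*\eqdef\OPT(\hat{F})$ be the revenue-maximizing Myersonian auction for $\hat{F}$; since $\bigl|\supp\hat{F}_i\bigr|\le t$ for every $i$, the ironed virtual valuations of $A^*$ are computed in time $\poly(t)$ by \cref{elkind}, in exactly the input format required by \cref{efficient}. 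Let $A'=(\phi'_i)_{i\in N}$ be the $\varepsilon'$-rounding of $A^*$ produced by the deterministic $\poly(H,n,\nicefrac{1}{\varepsilon},t)$-time algorithm of \cref{efficient} (run with parameter $\varepsilon'$); by the opening remark of \cref{efficiency}, an $\varepsilon'$-rounding of a Myersonian auction is again Myersonian and is $\varepsilon'$-coarse, so $A'\in S^n_{\varepsilon'}$.

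The core of the proof is then the chain
\begin{align*}
\Rev^{A'}(F)
&>\Rev^{A'}(\hat{F})-\varepsilon'
&&\text{[by $\mathcal{E}$, since $A'\in S^n_{\varepsilon'}$]}\\
&>\Rev^{A^*}(\hat{F})-n\varepsilon'-\varepsilon'
&&\text{[by \cref{efficient}]}\\
&\ge\Rev^{\OPT(F)}(\hat{F})-(n+1)\varepsilon'
&&\text{[as $A^*=\OPT(\hat{F})$ is optimal for $\hat{F}$]}\\
&>\Rev^{\OPT(F)}(F)-(n+2)\varepsilon'
&&\text{[by $\mathcal{E}$, applied to $\OPT(F)$]}\\
&=\Rev^{\OPT(F)}(F)-\varepsilon,
\end{align*}
and since $\Rev^{\OPT(F)}(F)$ is, by \cref{myerson}, the largest revenue any auction obtains from $F$, this is precisely the claimed additive-$\varepsilon$ guarantee, which holds whenever $\mathcal{E}$ does, i.e.\ with probability at least $1-\delta$.

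The one genuine subtlety — and the reason the rounding granularity is $\frac{\varepsilon}{n+2}$ and that \cref{uniform} is stated for $S^n_\varepsilon\cup\{\OPT(F)\}$ rather than for $S^n_\varepsilon$ alone — is the third inequality: in order to use that $A^*$ is revenue-optimal on $\hat{F}$, one must compare it against $\OPT(F)$ \emph{on the empirical distribution}, which forces one to control $\bigl|\Rev^{\OPT(F)}(F)-\Rev^{\OPT(F)}(\hat{F})\bigr|$ even though $\OPT(F)$ itself need not be $\varepsilon'$-coarse. Throwing $\OPT(F)$ into the uniform-convergence set of \cref{uniform} is exactly what legitimizes this step. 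Everything else is routine: verifying that composing the $\poly(t)$ cost of \cref{elkind} with the $\poly(H,n,\nicefrac{1}{\varepsilon},t)$ cost of \cref{efficient} is polynomial, and that the substitution $\varepsilon'=\frac{\varepsilon}{n+2}$ keeps $t$ and the overall running time polynomial in $H,n,\nicefrac{1}{\varepsilon},$ and $\log\nicefrac{1}{\delta}$.
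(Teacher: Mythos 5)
Your proposal is correct and follows essentially the same argument as the paper: the same choice of granularity $\tfrac{\varepsilon}{n+2}$, the same invocation of \cref{elkind,efficient}, and the same chain of inequalities via \cref{uniform} (with $\OPT(F)$ included in the uniform-convergence set), merely written from the opposite end. No gaps.
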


\begin{proof}
Let $t$ be as in \cref{uniform}, for $S^n_{\varepsilon/(n+2)}$.
By \cref{elkind,efficient}, all that we have to show is that the $\frac{\varepsilon}{n+2}$-rounded auction $A'\in S^n_{\varepsilon/(n+2)}$ that maximizes the revenue from $\hat{F}$ up to less than an additive $n\cdot\frac{\varepsilon}{n+2}$ also maximizes the revenue from $F$ up to less than an additive $\varepsilon$.
By \cref{uniform}, by definition of $A'$, and by \cref{uniform} again, we have that
\begin{align*}
\Rev^{\OPT(F)}(F)
&< \Rev^{\OPT(F)}(\hat{F}) + \tfrac{\varepsilon}{n+2}\le \\
&\le \Rev^{\OPT(\hat{F})}(\hat{F}) + \tfrac{\varepsilon}{n+2}< \\
&< \Rev^{A'}(\hat{F}) + (n+1)\cdot\tfrac{\varepsilon}{n+2}< \\
&< \Rev^{A'}(F) + (n+2)\cdot\tfrac{\varepsilon}{n+2}=\\
&=\Rev^{A'}(F) + \varepsilon,
\end{align*}
as required.
\end{proof}

\section{Beyond Single-Item Auctions}\label{single-parameter}

In this \lcnamecref{single-parameter}, we extend the single-item analysis of the previous sections to more general single-parameter auction environments, deriving \cref{intro-empirical-single-parameter} from \cref{intro}, as well as its generalization for intractable single-parameter auction environments.
A \emph{single-parameter (auction) environment} is defined by a set of \emph{possible outcomes} $X\subseteq[0,1]^n$, where a possible outcome $(x_1,\ldots,x_n)\in X$ indicates that bidder $1$ wins a fraction $x_1\in[0,1]$, bidder $2$ wins a fraction $x_2\in[0,1]$, etc. The maximum willingness of a bidder $i$ with valuation $v_i$ to pay for a fraction $x_i$ is $x_i\cdot v_i$. The single-item environment is obtained as a special case of single-parameter environment with 
\[X_{\text{single-item}}=\bigl\{(0,\ldots,0),(1,0,\ldots,0),(0,1,0,\ldots,0),\cdots,(0,\ldots,0,1)\bigr\}.\]
Additional important examples of single-parameter environments include \emph{matroid environments} \citep[see, e.g.][]{HartlineBook}, where the possible sets of winners correspond to independent sets in some matroid $(N,I)$ over the ground set $N$ of all bidders:
\[X_{(N,I)}=\bigl\{(x_1,\ldots,x_n)\in\{0,1\}^n~\big|~\{i\in N\mid x_i=1\}\in I\bigr\};\]
and the \emph{(nonexcludable) public project environment} \citep[see, e.g.][]{HartlineBook}, where the auctioneer chooses whether or not to produce a public project from which all bidders benefit:
\[X_{\text{public-project}}=\bigl\{(0,\ldots,0),(1,\ldots,1)\bigr\}.\]
All of the above examples are special cases of \emph{deterministic environments}, in which $X\subseteq\{0,1\}^n$, i.e., each bidder either wins a fraction $1$ or loses. An important example of a single-parameter environment that is not deterministic is that of \emph{position environments} (see, e.g., \citealp{HartlineBook}; see also \emph{sponsored search auctions} in \citealp{RoughgardenBook}), where $n$ numbers $x^{(1)}\ge\cdots\ge x^{(n)}\in[0,1]$ are given, and each possible outcome corresponds to an ordering of the bidders, where the $i$th bidder in this order wins a fraction $x^{(i)}$.

\subsection{Optimal Auctions}

We once again make very weak use of \citeauthor{Myerson}'s (\citeyear{Myerson}) characterization of optimal auctions for single-parameter environments, and once again only present this characterization, which generalizes that presented in \cref{definitions}, to the extent required by our analysis.

\begin{sloppypar}
\begin{definition}[Myersonian Auction for a Single-Parameter Environment, \citealp{Myerson}]\label{myersonian-single-parameter}
An $n$-bidder \emph{single-parameter Myersonian auction} $\bigl((\phi_i)_{i\in N};\MAX_X)\bigr)$ for the single-parameter environment $X\subseteq[0,1]^n$ (and valuations in $[0,H]$) has, for every $i\in N$, a nondecreasing function $\phi_i:[0,H]\rightarrow\mathbb{R}$ called the \emph{ironed virtual valuation} of bidder $i$. In this auction,
the chosen outcome is $(x_1,\ldots,x_n)\in X$ that maximizes the \emph{ironed virtual welfare} $\sum_{i=1}^n x_i\cdot \phi_i(v_i)$, where ties between maximizing outcomes are broken lexicographically;\footnote{Once again, we use lexicographic tie breaking for simplicity.} the payments are the unique\footnote{For details, see, e.g., \cite{HartlineBook,RoughgardenBook}.} nonnegative payments that make this auction truthful for all valuation profiles in $[0,H]^n$.
\end{definition}
\end{sloppypar}

For arbitrary single-parameter environments, \cite{Myerson} proves the following generalization of \cref{myerson}.

\begin{theorem}[\citealp{Myerson}]\label{myerson-single-parameter}
For every single-parameter environment $X\subseteq[0,1]^n$ and product distribution $F=F_1\times\cdots\times F_n$, the Myersonian auction $\OPT(F;X)=\bigl((\phi_i)_{i\in N};\MAX_X\bigr)$, where for every $i\in N$ the ironed virtual valuation $\phi_i$ is as in \cref{myerson} (and so depends only on $F_i$, and can be efficiently computed for discrete $F_i$ as in \cref{elkind}), achieves maximum revenue from $F$ among all possible auctions for $X$.
\end{theorem}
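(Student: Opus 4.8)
The plan is to run the classical \citeauthor{Myerson} virtual-welfare argument, keeping everything except the very last maximization step localized to a single bidder, so that the feasibility set $X$ enters only at the end. First I would fix an arbitrary DSIC, ex-post IR auction $A$ for $X$ and, for each bidder $i$, consider its interim allocation $x_i(v_i)\eqdef\EE_{v_{-i}\sim F_{-i}}\bigl[x_i(v_1,\ldots,v_n)\bigr]$, which incentive compatibility forces to be nondecreasing on $[0,H]$; together with ex-post IR, the standard single-parameter payment (envelope) characterization pins the expected payment of bidder $i$ at value $v_i$ to $v_i\,x_i(v_i)-\int_0^{v_i}x_i(z)\,dz$. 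Taking expectations over $v_i\sim F_i$ and applying Fubini to rewrite $\EE_{v_i\sim F_i}\bigl[\int_0^{v_i}x_i(z)\,dz\bigr]=\int_0^H x_i(z)\bigl(1-F_i(z)\bigr)\,dz$, the expected payment of bidder $i$ equals the virtual surplus $\EE_{v_i\sim F_i}\bigl[\varphi_i(v_i)\,x_i(v_i)\bigr]$, where $\varphi_i$ is the unironed virtual valuation of $F_i$; this holds for an arbitrary (possibly discrete or irregular) $F_i$. Unwinding the interim allocations back to $v\sim F$ and summing over $i$ then gives $\Rev^A(F)=\EE_{v\sim F}\bigl[\sum_{i=1}^n\varphi_i(v_i)\,x_i(v)\bigr]$, with no use of $X$ so far.

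Next I would iron, bidder by bidder, in quantile space. For each $i$, let $R_i$ be the revenue curve of $F_i$ (revenue as a function of sale probability), let $\bar R_i\ge R_i$ be its concave hull on $[0,1]$, and let $\phi_i$ be the value-space image of the nonincreasing-in-quantile derivative $\bar R_i'$; this is precisely the ironed virtual valuation furnished by \cref{myerson}, is nondecreasing, and depends only on $F_i$. Writing both virtual-surplus integrals in quantile space and integrating by parts against the (monotone, hence nonincreasing-in-quantile) allocation, the boundary terms vanish because $\bar R_i$ agrees with $R_i$ at $q=0$ (it is dominated by the chord $q\mapsto Hq$) and, symmetrically, at $q=1$, while the remaining integral pairs the nonnegative $\bigl(\bar R_i-R_i\bigr)$ against the nonnegative $-d(\text{nonincreasing})$; hence $\EE_{v_i\sim F_i}\bigl[\varphi_i(v_i)x_i(v_i)\bigr]\le\EE_{v_i\sim F_i}\bigl[\phi_i(v_i)x_i(v_i)\bigr]$ for every nondecreasing $x_i$, with equality exactly when $x_i$ is constant on each maximal ``ironed interval'' of values (the value-image of a linear piece of $\bar R_i$ strictly above $R_i$). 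Summing over $i$ and using pointwise feasibility $\bigl(x_1(v),\ldots,x_n(v)\bigr)\in X$,
\[
\Rev^A(F)\;\le\;\EE_{v\sim F}\Bigl[\textstyle\sum_{i=1}^n\phi_i(v_i)\,x_i(v)\Bigr]\;\le\;\EE_{v\sim F}\Bigl[\max_{(y_1,\ldots,y_n)\in X}\textstyle\sum_{i=1}^n\phi_i(v_i)\,y_i\Bigr].
\]

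Finally I would verify that $\OPT(F;X)=\bigl((\phi_i)_{i\in N};\MAX_X\bigr)$ is itself a legitimate auction for $X$ that attains this right-hand bound. Its allocation rule $x^*_i(v)$ picks the ironed-virtual-welfare-maximizing $y\in X$ with lexicographic tie-breaking; since each $\phi_i$ is nondecreasing, a short comparative-statics argument shows $x^*_i$ is nondecreasing in $v_i$ (if at $v_i<v_i'$ the rule selected $a,b\in X$ with $a_i>b_i$, subtracting the two optimality inequalities forces $\phi_i(v_i')=\phi_i(v_i)$, hence identical objectives, hence $a=b$ by the fixed tie-break, a contradiction), so by the single-parameter payment characterization there is a unique IR payment rule making it DSIC --- the threshold payments of \cref{myersonian-single-parameter} --- and $\OPT(F;X)$ is well-defined and feasible. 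Moreover $\phi_i$ is constant on each ironed interval (a linear piece of $\bar R_i$ has constant slope), so moving $v_i$ within an ironed interval leaves every $\phi_j(v_j)$ unchanged and hence, by the fixed tie-break, leaves the maximizing outcome unchanged; thus $x^*_i$ is constant on ironed intervals, which is exactly the equality condition above. Applying the revenue-equivalence identity of the first step to $x^*$ and then equality in the ironing inequality yields $\Rev^{\OPT(F;X)}(F)=\EE_{v\sim F}\bigl[\max_{(y_1,\ldots,y_n)\in X}\sum_{i=1}^n\phi_i(v_i)\,y_i\bigr]\ge\Rev^A(F)$ for every $A$, as claimed; ``$\phi_i$ depends only on $F_i$'' and its efficient computability for discrete $F_i$ are immediate from the construction and \cref{elkind}.

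The step I expect to be the main obstacle is the ironing inequality for \emph{arbitrary} $F_i$: the quantile-space change of variables, the revenue curve and its concave hull, the integration-by-parts identity, and the ``equality iff constant on ironed intervals'' characterization must all be phrased so as to remain valid in the presence of atoms and of flat pieces of $F_i$, where $\varphi_i$ exists only as a step function (subgradient) and $R_i$ is merely piecewise linear; one should either argue the discrete case directly and pass to the limit, or carry the bookkeeping in quantile space from the start. A secondary point is to state the tie-breaking rule as a fixed total order on $X$ (lexicographic works), since the monotonicity and constancy-on-ironed-intervals arguments for $x^*$ rely on \emph{equal} ironed-virtual-welfare vectors always selecting the same outcome. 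Everything else is routine given \cref{myerson} and \cref{elkind}.
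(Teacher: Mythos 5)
The paper does not prove \cref{myerson-single-parameter} at all: it is imported verbatim from \citet{Myerson} (with the discrete-distribution computation deferred to \cref{elkind}), and the paper's use of it is deliberately ``very weak.'' So there is no in-paper proof to compare against; what you have written is essentially the canonical proof from the literature, and it is correct in outline. Your three steps --- (i) the payment-identity/amortization step turning expected revenue into expected (unironed) virtual surplus $\EE_{v\sim F}\bigl[\sum_i \varphi_i(v_i)x_i(v)\bigr]$, (ii) the per-bidder ironing inequality $\EE\bigl[\varphi_i(v_i)x_i(v_i)\bigr]\le\EE\bigl[\phi_i(v_i)x_i(v_i)\bigr]$ for nondecreasing interim allocations, proved in quantile space via the concave hull of the revenue curve with equality when the allocation is constant on ironed intervals, and (iii) the verification that $\bigl((\phi_i)_{i\in N};\MAX_X\bigr)$ is monotone (your two-inequality comparative-statics argument, using that equal ironed-virtual-welfare objectives plus the fixed lexicographic tie-break yield the same outcome, is sound), hence admits the threshold payments of \cref{myersonian-single-parameter}, and attains equality because $\phi_i$ is constant on each ironed interval --- are exactly the ingredients of Myerson's argument as presented, e.g., in Hartline's treatment. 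Two small caveats: ex-post IR alone only gives $p_i(0,v_{-i})\le 0$, so for an arbitrary auction $A$ the first step yields $\Rev^A(F)\le\EE\bigl[\sum_i\varphi_i(v_i)x_i(v)\bigr]$ rather than equality (which is all you need; equality does hold for the threshold-payment auction $\OPT(F;X)$); and, as you yourself flag, the rigorous formulation of the payment identity and the ironing inequality for arbitrary bounded $F_i$ (atoms, gaps in support, $\varphi_i$ existing only as a subgradient of the revenue curve) is where the genuine technical work lies --- this is precisely what \citet{Myerson} and, for the discrete case, \citet{Elkind} supply, and the paper sidesteps it by citation rather than reproving it.
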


The reader may verify that the Myersonian auction $\bigl((\phi_i)_{i\in N};\MAX_{X_{\text{single-item}}}\bigr)$ coincides with the Myersonian auction $(\phi_i)_{i\in N}$ defined in \cref{myersonian}.

\subsection{Rounding Arbitrary Auctions}\label{existence-single-parameter}

Analysis similar to that of \cref{existence} can be used to prove the following generalization of \cref{randomized-rounding}.

\begin{theorem}\label{randomized-rounding-single-parameter}
For every $F=F_1\times\cdots\times F_n \in \Delta\bigl([0,H]\bigr)^n$, for every $n$-bidder auction $A$ for a single-parameter environment $X\subseteq[0,1]^n$, and for every $\varepsilon>0$, it is the case that
\[\left|\EE_{A'} \Rev^{A'}(F) - \Rev^A(F) \right| < W_A\cdot\varepsilon,\]
where $A'$ is the $F$-randomized $\varepsilon$-rounding of $A$, and where $W_A\eqdef\EE_{(x_i)}\sum_{i=1}^n x_i\le n$, where the expectation is over the outcome $(x_i)_{i\in N}\in X$ of $A$ when the profile of bids is distributed according to $F$. (E.g., for a deterministic environment $X\subseteq\{0,1\}^n$,\ \ $W_A$ is the expected number of winners in the outcome of~$A$.)\footnote{In fact, as is the case for \cref{randomized-rounding}, a slightly stronger statement also holds, where $W_A$ is replaced with, roughly speaking, the expectation over the sum of the fractions of the winnings whose threshold prices are not integer multiples of $\varepsilon$. This implies here as well, in a similar sense, that the revenue loss due to $\varepsilon$-rounding is smaller if $A$ already behaves similarly to an $\varepsilon$-coarse auction for a set of valuation profiles that has positive probability.}
\end{theorem}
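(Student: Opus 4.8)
The plan is to mirror the single-item argument of \cref{existence}, tracking the revenue bidder-by-bidder and $\varepsilon$-interval-by-$\varepsilon$-interval, but replacing the binary ``wins/loses'' dichotomy with the fractional allocation $x_i\in[0,1]$. The key observation is that \cref{myerson-single-parameter} expresses the payment of bidder $i$ as a threshold integral: since the allocation $x_i$ is a nondecreasing step function of $v_i$ (equivalently, of $\phi_i(v_i)$) when $v_{-i}$ is fixed, bidder $i$'s payment is $\sum_k (x^{(k)}-x^{(k-1)})\,w^{(k)}_i$, where the $w^{(k)}_i$ are the successive threshold bids at which the allocation jumps up, and the $x^{(k)}$ are the corresponding allocation levels. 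Applying an $F_i$-randomized $\varepsilon$-rounding action $(i,j,v^i_j)$ flattens $\phi_i$ on the $\varepsilon$-interval $j$; as in \cref{one-by-one}, thresholds outside that interval are unaffected, and any threshold strictly inside the open interval $\bigl(j\cdot\varepsilon,(j+1)\cdot\varepsilon\bigr)$ moves by less than $\varepsilon$. Hence each such threshold contributes a revenue error of at most $(x^{(k)}-x^{(k-1)})\cdot\varepsilon$, and summing over $k$ the total error from bidder $i$'s payment, conditioned on $v_i$ falling in interval $j$, is at most $\varepsilon$ times the jump in $x_i$ across that interval.

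Next I would replicate the analogue of \cref{one-by-one-others-amortized}: when $v_i\sim F_i|_j$, the expected allocation to every \emph{other} bidder $i'$ — and hence $i'$'s expected payment — is unchanged in expectation, because the randomized rounding replaces $v_i$ by a fresh sample from the same conditional distribution, so the joint distribution of $(\text{rounded }v_i, v_{-i})$ equals that of $(v_i,v_{-i})$; by the ironed-virtual-welfare characterization the allocation is a deterministic function of the (ironed virtual) bids, so its distribution, and thus the induced threshold prices of the other bidders, are preserved. This is exactly the fractional version of \crefpart{one-by-one}{others}, and the only subtlety is that ``preserving the allocation distribution'' now means preserving the distribution of the vector $(x_1,\ldots,x_n)\in X$, which still follows since $\MAX_X$ (with lexicographic tie-breaking) is a fixed deterministic map from ironed virtual bids to $X$. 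Combining, a single randomized rounding action on interval $j$ of bidder $i$ changes the overall expected revenue by less than $\varepsilon\cdot p^i_j$, where $p^i_j$ is (a suitable fractional refinement of) the expected increment in $x_i$ attributable to $v_i$ lying in interval $j$ and paying a non-$\varepsilon$-multiple threshold there — the exact analogue of \cref{coarse-pij}.

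Finally I would sum over all $\varepsilon$-intervals $j$ for a fixed bidder $i$: the increments $p^i_j$ telescope/add up to at most $\EE[x_i]$ (the expected fraction won by $i$, since as $v_i$ ranges over $[0,H]$ the allocation $x_i$ rises from $0$ to at most $1$ on each realization of $v_{-i}$, and averaging over $v_{-i}$ and the interval decomposition gives $\sum_j p^i_j\le\EE[x_i]$), so the cumulative error from rounding all of bidder $i$'s intervals is less than $\varepsilon\cdot\EE[x_i]$. Summing over $i\in N$ yields total error less than $\varepsilon\cdot\sum_{i}\EE[x_i]=W_A\cdot\varepsilon$, and $W_A=\EE_{(x_i)}\sum_i x_i\le n$ because $\sum_i x_i\le n$ trivially (indeed $\le 1$ in deterministic single-item-like environments, but the paper only claims $\le n$). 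One must be a little careful, exactly as in the single-item case, that when modifying bidder $i$'s rounding the quantities $\EE[x_{i'}]$ for $i'\ne i$ may shift; but because we are aggregating into a single expectation over the \emph{fully} randomized rounding $A'$ (not a greedy sequential derandomization as in \cref{efficient}), the triangle-inequality bookkeeping goes through with the clean $W_A\cdot\varepsilon$ bound rather than an $n$-fold loss.

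The main obstacle, as I see it, is the payment side: in a general single-parameter environment a small perturbation of $\phi_i$ on one $\varepsilon$-interval can, in principle, change \emph{which} outcome $(x_1,\ldots,x_n)\in X$ is the ironed-virtual-welfare maximizer for other profiles, thereby altering other bidders' allocations and thresholds in a way that is not obviously mean-preserving. The resolution — and the technical heart of the argument — is to condition on $v_i\sim F_i|_j$ and use that post-rounding $v_i$ is itself an independent draw from $F_i|_j$, so that the pair $(v_i^{\mathrm{rounded}},v_{-i})$ is equal in distribution to $(v_i,v_{-i})$; this makes the expected ironed virtual welfare maximizer, hence the expected allocation vector and all induced thresholds of bidders $i'\ne i$, invariant in expectation over this interval. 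Making this precise (including the handling of lexicographic tie-breaking, the open-versus-closed interval boundary issue flagged in footnote~\ref{open-interval}, and the fractional threshold-integral formula for payments) is where the real work lies; once it is in place, the rest is the same summation as in \cref{existence}.
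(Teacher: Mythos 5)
Your proposal is correct and follows essentially the same route as the paper, which itself gives only a proof sketch for this statement: mirror the single-item, interval-by-interval argument, replace the win/lose dichotomy by the threshold-sum decomposition of bidder $i$'s payment, observe that only thresholds strictly inside the open $\varepsilon$-interval move (by less than $\varepsilon$) while the rounded bid is an independent draw from the same conditional distribution $F_i|_j$ (so the allocation distribution, and hence the other bidders' expected revenue, is unchanged), and then sum the per-interval errors to $\varepsilon\cdot\sum_{i}\EE[x_i]=W_A\cdot\varepsilon$. One minor remark: your write-up frames things via the Myersonian form ($\phi_i$, $\MAX_X$), but nothing in the argument requires it --- the theorem is stated for arbitrary deterministic DSIC, ex-post IR auctions $A$ for $X$, and the only facts you use (the allocation is a deterministic map of the bid profile, nondecreasing in $v_i$ for fixed $v_{-i}$, with payments given by the threshold sums) hold for every such auction.
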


\noindent
By \cref{randomized-rounding-single-parameter}, we obtain the following generalization of \cref{exists}.

\begin{proposition}\label{exists-single-parameter}
For every product distribution $F=F_1\times\cdots\times F_n \in \Delta\bigl([0,H]\bigr)^n$ and for every auction $A$ for a single-parameter environment $X\subseteq[0,1]^n$, 
there exists a (deterministic, $\varepsilon$-coarse) $\varepsilon$-rounding $A'$ of $A$ whose revenue from $F$ is less than an additive $W_X\cdot\varepsilon$ smaller than that of $A$,
where $W_X\eqdef\Max_{(x_i)\in X} \sum_{i=1}^n x_i\le n$. (E.g., for a deterministic environment $X\subseteq\{0,1\}^n$,\ \ $W_X$ is the maximum number of winners in any outcome in~$X$.)
\end{proposition}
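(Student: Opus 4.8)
The plan is to follow the proof of \cref{exists} essentially verbatim, replacing its single invocation of \cref{randomized-rounding} with \cref{randomized-rounding-single-parameter}. I would take $A'$ to be the $F$-randomized $\varepsilon$-rounding of $A$; recall that this is a distribution supported entirely on deterministic $\varepsilon$-coarse $\varepsilon$-roundings of $A$, so it suffices to exhibit a single realization in this support whose revenue from $F$ strictly exceeds $\Rev^A(F) - W_X\cdot\varepsilon$.

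First I would apply \cref{randomized-rounding-single-parameter} to get $\EE_{A'}\Rev^{A'}(F) > \Rev^A(F) - W_A\cdot\varepsilon$, where $W_A = \EE_{(x_i)}\sum_{i=1}^n x_i$ and the expectation is over the (random, since bids are drawn from $F$) outcome $(x_i)_{i\in N}\in X$ of $A$. The only additional step is the trivial bound $W_A\le W_X$: every outcome $(x_i)\in X$ satisfies $\sum_{i=1}^n x_i\le\Max_{(x_i)\in X}\sum_{i=1}^n x_i = W_X$, and an expectation of quantities each at most $W_X$ is itself at most $W_X$. Combining the two gives $\EE_{A'}\Rev^{A'}(F) > \Rev^A(F) - W_X\cdot\varepsilon$.

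Finally, since the expected revenue of $A'$ over the randomized rounding strictly exceeds $\Rev^A(F) - W_X\cdot\varepsilon$, at least one deterministic realization of $A'$ --- which is by construction a deterministic $\varepsilon$-coarse $\varepsilon$-rounding of $A$ --- achieves revenue from $F$ strictly greater than $\Rev^A(F) - W_X\cdot\varepsilon$, i.e., less than an additive $W_X\cdot\varepsilon$ smaller than that of $A$. That realization is the auction the proposition asks for.

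Given \cref{randomized-rounding-single-parameter}, there is essentially no obstacle in this proof: it is a one-line averaging argument together with the trivial inequality $W_A\le W_X$. The genuine difficulty in this subsection lies instead in establishing \cref{randomized-rounding-single-parameter} itself, i.e., in adapting \cref{one-by-one} and the chain of lemmas of \cref{existence} to the single-parameter setting, where a single $F_i$-randomized rounding action can simultaneously perturb the fractional allocations, and hence the threshold payments, of several bidders at once; the per-bidder, per-$\varepsilon$-interval amortization then has to be carried out fraction-by-fraction and resummed, which is exactly what turns the error bound $\varepsilon$ of the single-item case into $W_A\cdot\varepsilon$.
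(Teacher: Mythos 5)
Your proposal is correct and is exactly the paper's proof: the paper also takes $A'$ to be the $F$-randomized $\varepsilon$-rounding of $A$, invokes \cref{randomized-rounding-single-parameter} in place of \cref{randomized-rounding}, and concludes by the same averaging argument, with the trivial bound $W_A\le W_X$ implicit. No differences worth noting.
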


\subsection{Efficiently Rounding Myersonian Auctions for\texorpdfstring{\\}{ }Empirical Distributions}\label{efficiency-single-parameter}

While the deterministic rounding algorithm of \cref{efficient} can be adapted to certain single-parameter environments beyond the single-item environment, such as matroid and position environments, its complexity even for these two environments becomes exponential in the maximum number of winners. For this reason, we now consider a different approach for efficiently searching for a revenue-approximating deterministic rounding of a given Myersonian auction $A$, for a given target distribution $F$. Recall that in \cref{efficiency}, we sketched the following outline for a simple \emph{randomized} polynomial-time algorithm for finding, with high probability, a revenue-approximating $\varepsilon$-rounding of $A$: draw a polynomial number of deterministic roundings from the $F$-randomized $\varepsilon$-rounding of $A$, and pick the one that gives the highest approximate revenue among all drawn roundings, where the revenue from a given rounding can be estimated by drawing a polynomial number of valuation profiles from $F$ and averaging the revenue from the given rounding over them. The approach that we now present uses precisely this algorithm, however deterministically obtains the necessary randomness from a polynomial number of samples that are drawn from the true distribution of valuations, thus succeeding with high probability over the drawn samples. Since our entire deterministic algorithm from \cref{empirical}/\cref{intro-empirical} is only guaranteed to succeed with high probability over samples drawn from the true distribution of valuations, therefore utilizing a randomized algorithm to choose a suitable rounding of the empirical revenue-maximizing auction $A$ in the above-described way (deterministically obtaining the necessary randomness from samples drawn from the true distribution) does not qualitatively change the nature of our overall result: a deterministic algorithm that runs in polynomial time and, given polynomially many samples from the true distribution, outputs a (deterministic) auction that approximately maximizes revenue with high probability, where the high probability is over the randomness of the given samples.

We start by formalizing the above randomized algorithm, to obtain an analogue of \cref{efficient}, which on one hand generalizes \cref{efficient} beyond single-item auctions, but on the other hand trades determinism for randomization.

\begin{sloppypar}
\begin{definition}[Tractable Single-Parameter Environment]
We say that a single-parameter environment $X\subseteq[0,1]^n$ is \emph{tractable} if there exists a deterministic algorithm that runs in time $\poly(H,n)$, such that given the (realized) ironed virtual values $\phi_i(v_i)$ of all bidders, outputs the (lexicographically first) outcome $(x_1,\ldots,x_n)\in X$ that maximizes the ironed virtual welfare ${\sum_{i=1}^n x_i\cdot \phi_i(v_i)}$.
\end{definition}
\end{sloppypar}

\begin{lemma}\label{efficient-randomized-single-parameter}
Fix a tractable single-parameter environment $X\subseteq[0,1]^n$.
Let $n,t\in\mathbb{N}$ and let $\varepsilon>0$ and $\delta>0$. There exists a randomized algorithm that runs in time $\poly(H,n,\nicefrac{1}{\varepsilon},\log\nicefrac{1}{\delta},t)$ and uses at most $b=\poly(H,n,\nicefrac{1}{\varepsilon},\log\nicefrac{1}{\delta},\log t)$ random bits, such that given
$n$ discrete distributions $\hat{F}_1,\ldots,\hat{F}_n$, each with support of size at most $t$, and for each $i\in N$ given a nondecreasing $\phi_i:\supp\hat{F}_i\rightarrow\mathbb{R}$ (so, $\phi_i$ is a nondecreasing sequence of at most $t$ real numbers), outputs for every $i\in N$ a function $\phi'_i:\supp\hat{F}_i\rightarrow\mathbb{R}$, such that $\bigl((\phi'_i)_{i\in N};\MAX_X\bigr)$ is an $\varepsilon$-rounding of $\bigl((\phi_i)_{i\in N};\MAX_X\bigr)$, and such that with probability at least $1\!-\!\delta$ it is the case that
\[\Rev^{((\phi'_i)_{i\in N};\MAX_X)}(\hat{F})>\Rev^{((\phi_i)_{i\in N};\MAX_X)}(\hat{F})-(W_X+1)\cdot\varepsilon,\]
where $\hat{F}=\hat{F}_1\times\cdots\times\hat{F}_n$.
\end{lemma}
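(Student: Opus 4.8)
The plan is to realize the randomized ``sample many candidate roundings, keep the empirically best one'' scheme sketched above, estimating revenues by empirical averaging over sampled valuation profiles, and to account for random bits so that $t$ enters the bit count only through a $\log t$ factor. Concretely, the algorithm maintains the Myersonian representation: it (i) draws $m$ independent deterministic roundings $A'_1,\dots,A'_m$ of $A\eqdef\bigl((\phi_i)_{i\in N};\MAX_X\bigr)$ from the $\hat F$-randomized $\varepsilon$-rounding of $A$ (so $A'_k=\bigl((\phi'^{(k)}_i)_{i\in N};\MAX_X\bigr)$ with each $\phi'^{(k)}_i$ obtained from $\phi_i$ by flattening it, on each $\varepsilon$-interval $j$, to its value at a point $v^i_j\sim\hat F_i|_j$); (ii) for each $k$, draws $\ell$ i.i.d.\ profiles from $\hat F=\hat F_1\times\cdots\times\hat F_n$ and sets $\widehat R_k$ to the average of $r^{A'_k}$ over them; and (iii) outputs the $A'_k$ maximizing $\widehat R_k$. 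Each conditional draw $v^i_j\sim\hat F_i|_j$ (uniform over a sub-multiset of $\supp\hat F_i$) and each coordinate of a profile draw can be produced to within exponentially small total variation distance using $O(\log t+\poly(H,n,\nicefrac1\varepsilon,\log\nicefrac1\delta))$ bits; since the algorithm makes $m\cdot n\cdot(\lfloor H/\varepsilon\rfloor+1)$ draws of the first kind and $m\ell n$ of the second, with $m,\ell=\poly(H,n,\nicefrac1\varepsilon,\log\nicefrac1\delta)$ the total is $b=\poly(H,n,\nicefrac1\varepsilon,\log\nicefrac1\delta,\log t)$, and the cumulative perturbation is negligible. Each $r^{A'_k}(v_1,\dots,v_n)$ is computable in $\poly(H,n,t)$ time: evaluate the $\phi'^{(k)}_i(v_i)$, solve $\MAX_X$ in $\poly(H,n)$ time by tractability, and recover each winner's threshold price by a search over $\supp\hat F_i$; so the running time is $\poly(H,n,\nicefrac1\varepsilon,\log\nicefrac1\delta,t)$, and the output is by construction an $\varepsilon$-rounding of $A$ of the required form.

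For correctness I would split the revenue loss into three budgeted pieces totaling $(W_X+1)\varepsilon$. First, applying \cref{randomized-rounding-single-parameter} with the product distribution taken to be $\hat F$, and using $W_A\le W_X$ (every outcome of $A$ lies in $X$), a single draw $A'$ satisfies $\EE_{A'}\bigl[\Rev^A(\hat F)-\Rev^{A'}(\hat F)\bigr]<W_X\varepsilon$. Since the revenue of any individually rational auction lies in $[0,nH]$, the loss $L\eqdef\Rev^A(\hat F)-\Rev^{A'}(\hat F)$ is at least $-nH$; applying Markov's inequality to $L+nH\ge0$ yields $\PP\bigl[L<(W_X+\tfrac12)\varepsilon\bigr]\ge q$ for some $q=1/\poly(H,n,\nicefrac1\varepsilon)$. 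Hence with $m=\Theta\bigl(\tfrac1q\log\tfrac1\delta\bigr)=\poly(H,n,\nicefrac1\varepsilon,\log\nicefrac1\delta)$ independent draws, with probability at least $1-\tfrac\delta2$ there is an index $k_0$ with $\Rev^{A'_{k_0}}(\hat F)>\Rev^A(\hat F)-(W_X+\tfrac12)\varepsilon$.

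Second, by the Chernoff--Hoeffding inequality with range $[0,nH]$, taking $\ell=\poly(H,n,\nicefrac1\varepsilon,\log\tfrac m\delta)=\poly(H,n,\nicefrac1\varepsilon,\log\nicefrac1\delta)$ profiles per candidate and a union bound over the $m$ candidates, with probability at least $1-\tfrac\delta2$ every estimate is accurate: $\bigl|\widehat R_k-\Rev^{A'_k}(\hat F)\bigr|<\tfrac\varepsilon4$ for all $k$. On the intersection of these two events (probability at least $1-\delta$), the output index $k^\star$ has $\widehat R_{k^\star}\ge\widehat R_{k_0}$, whence
\[
\Rev^{A'_{k^\star}}(\hat F)>\widehat R_{k^\star}-\tfrac\varepsilon4\ge\widehat R_{k_0}-\tfrac\varepsilon4>\Rev^{A'_{k_0}}(\hat F)-\tfrac\varepsilon2>\Rev^A(\hat F)-(W_X+1)\varepsilon,
\]
as required. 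The main obstacle here is the bit-budget bookkeeping rather than the probabilistic core: one must keep both the number of candidate roundings $m$ and the number of estimation profiles $\ell$ free of any polynomial dependence on $t$, so that the $\Theta(\log t)$ cost of each elementary sample keeps the total random-bit count at $\poly(H,n,\nicefrac1\varepsilon,\log\nicefrac1\delta,\log t)$. This is precisely what forces the use of Markov's inequality (with the crude but $\poly(H,n)$ lower bound $-nH$ on the per-draw loss) to extract a good candidate, since that loss is not a priori nonnegative and only its expectation is controlled; the remaining steps are routine.
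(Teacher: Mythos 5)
Your proposal is correct and follows essentially the same route as the paper's proof: draw polynomially many deterministic roundings from the $\hat F$-randomized $\varepsilon$-rounding of $\bigl((\phi_i)_{i\in N};\MAX_X\bigr)$, estimate each candidate's revenue by averaging over profiles sampled from $\hat F$, output the empirically best candidate, and charge the random bits at $O(\log t)$ per elementary draw so that $t$ enters only logarithmically. The one substantive difference is how the existence of a good candidate is argued: the paper applies the Chernoff--Hoeffding inequality to the average of the candidates' true revenues $\tfrac1D\sum_d\Rev^{A'_d}(\hat F)$, so that $D=\poly(H,\nicefrac1\varepsilon,\log\nicefrac1\delta)$ draws already guarantee some $A'_{d'}$ within $\nicefrac\varepsilon3$ of $\EE_{A'}\Rev^{A'}(\hat F)$ (which is within $W_X\varepsilon$ of $\Rev^A(\hat F)$ by \cref{randomized-rounding-single-parameter}), whereas you shift the loss by $nH$ and use Markov plus independent repetition, which needs a polynomially larger (but still polynomial) number of candidates; both analyses are valid, and your $(W_X+\tfrac12)\varepsilon+\tfrac\varepsilon4+\tfrac\varepsilon4$ budget plays the role of the paper's $\varepsilon/3$-splitting. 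One small imprecision: for non-deterministic environments (e.g.\ position environments) a winner's payment is not a single threshold price; it is given by Myerson's payment formula, but it is still computable in $\poly(H,n,t)$ time by evaluating the $\varepsilon$-coarse allocation at the points of $\supp\hat F_i$, which is exactly what the paper notes in a footnote.
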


While for some (true) distributions $F$ (with very low entropy), it is not possible to extract $b=\poly(H,n,\nicefrac{1}{\varepsilon},\log\nicefrac{1}{\delta},\log t)$ random bits from polynomially many random samples from $F$, we now show that such distributions can be easily identified, and an approximately optimal auction can easily be directly learned for them. We are therefore able to derandomize the algorithm from \cref{efficient-randomized-single-parameter} and obtain the following counterpart for \cref{efficient} for general tractable single-parameter environments.

\begin{proposition}\label{efficient-single-parameter}
Fix a tractable single-parameter environment $X\subseteq[0,1]^n$.
Let $n,t\in\mathbb{N}$, let $\varepsilon>0$ and $\delta>0$. There exists $s=\poly(H,n,\nicefrac{1}{\varepsilon},\log\nicefrac{1}{\delta},t)\ge t$ and a
deterministic algorithm that runs in time $\poly(H,n,\nicefrac{1}{\varepsilon},\log\nicefrac{1}{\delta},t)$, such that the following holds.
Let $F_1,\ldots,F_n$ be arbitrary distributions on $[0,H]$. For every $i\in N$, draw $s$ samples from~$F_i$, and let $\hat{F}_i$ be the empirical distribution over the first $t$ of the $s$ samples from $F_i$.
The algorithm, given the $s$ samples drawn from~$F=F_1\times\cdots\times F_n$, and for each $i\in N$ given a nondecreasing $\phi_i:\supp\hat{F}_i\rightarrow\mathbb{R}$ (so, $\phi_i$ is a nondecreasing sequence of at most~$t$ real numbers), outputs a bit $e$ and for every $i\in N$ a function $\phi'_i$, where if $e=1$ then $\bigl((\phi'_i)_{i\in N};\MAX_X\bigr)$ is an \mbox{$\varepsilon$-rounding} of $\bigl((\phi_i)_{i\in N};\MAX_X\bigr)$,
such that with probability at least $1\!-\!\delta$, one of the following holds:
\begin{itemize}
\item
$e=1$ \quad and \quad $\Rev^{((\phi'_i)_{i\in N};\MAX_X)}(\hat{F})>\Rev^{((\phi_i)_{i\in N};\MAX_X)}(\hat{F})-(W_X+1)\cdot\varepsilon$,\newline
\vphantom{a}\hfill where $\hat{F}=\hat{F}_1\times\cdots\times\hat{F}_n$.
\item
$e=0$ \quad and \quad $\Rev^{((\phi'_i)_{i\in N};\MAX_X)}(F)>\Rev^{OPT(F;X)}(F)-\varepsilon$.
\end{itemize}
\end{proposition}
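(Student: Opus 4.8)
The plan is to run the randomized algorithm of \cref{efficient-randomized-single-parameter}, but to supply its $b=\poly(H,n,\nicefrac1\varepsilon,\log\nicefrac1\delta,\log t)$ random bits by extracting them from the samples, and to detect and handle separately the only distributions from which this extraction can fail. I would set $s=t+m$ for an $m=\poly(H,n,\nicefrac1\varepsilon,\log\nicefrac1\delta,\log t)$ fixed below; form $\hat F_i$ from bidder $i$'s first $t$ samples exactly as in the statement, and treat bidder $i$'s remaining $m$ samples --- which are by construction independent of the input $(\hat F,\phi)$ --- as a randomness source. From this source I would extract uniform bits by von Neumann's trick applied per bidder: pair up bidder $i$'s $m$ reserved samples, emit $0$ (resp.\ $1$) from a pair whose first coordinate is the smaller (resp.\ larger) one, emit nothing from a tied pair, and concatenate the strings over $i=1,\dots,n$. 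The key property is that, writing $L$ for the total number of emitted bits, conditioned on $L$ the emitted bits are i.i.d.\ uniform and independent of $(\hat F,\phi)$ (a pair emits an unbiased bit exactly when its two coordinates differ, independently of the bit's value).

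The algorithm then branches. If $L\ge b$: set $e=1$, run the algorithm of \cref{efficient-randomized-single-parameter} with failure parameter $\nicefrac\delta3$ on the given $(\hat F,\phi)$ using the first $b$ extracted bits as its randomness, and output the resulting $(\phi'_i)_{i\in N}$. This is always an $\varepsilon$-rounding of $(\phi_i)_{i\in N}$, and since the supplied bits are uniform and independent of the input, the revenue guarantee of \cref{efficient-randomized-single-parameter} applies verbatim, so the first bullet holds with probability at least $1-\nicefrac\delta3$. If $L<b$: set $e=0$, let $\hat a_i$ be the most frequent value among all $s$ samples of bidder $i$ (ties broken arbitrarily), and output the ironed virtual valuations of $\OPT\bigl(\prod_i\delta_{\hat a_i};X\bigr)$, i.e.\ the take-it-or-leave-it auction at prices $\hat a_1,\dots,\hat a_n$. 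Each step runs in time $\poly(s)$ plus one call to \cref{efficient-randomized-single-parameter}, so the running time is $\poly(H,n,\nicefrac1\varepsilon,\log\nicefrac1\delta,t)$.

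For the analysis, split --- purely for the argument --- on whether some bidder $i$ has $p_i\eqdef\PP_{v,v'\sim F_i}[v\ne v']>\gamma$, where $\gamma$ is a suitable constant multiple of $b/m$. In the ``spread'' case (some such $i$), bidder $i$'s reserved samples already give $\EE[L]\ge\tfrac m2\,p_i\ge 2b$, so a Chernoff bound gives $\PP[L<b]\le\nicefrac\delta3$ once $b=\Omega(\log\nicefrac1\delta)$; hence $e=1$ except with probability $\nicefrac\delta3$, and then the previous paragraph gives the first bullet except with a further $\nicefrac\delta3$. In the ``concentrated'' case ($p_i\le\gamma$ for all $i$), from $1-\gamma\le\sum_x F_i(\{x\})^2\le\max_x F_i(\{x\})$ we obtain a value $a_i$ with $F_i(\{a_i\})\ge1-\gamma$, so $F$ is within total variation $n\gamma$ of $P\eqdef\prod_i\delta_{a_i}$; moreover $\hat a_i=a_i$ for every $i$ except with probability $ne^{-\Omega(m)}\le\nicefrac\delta3$ for a suitable $m=\Omega(\log\nicefrac n\delta)$. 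Condition on $\hat a_i=a_i$ for all $i$: if $e=1$ the first bullet still holds with probability $1-\nicefrac\delta3$ (\cref{efficient-randomized-single-parameter} assumes nothing about $F$), and if $e=0$ the output is $\OPT(P;X)$, whose revenue from any distribution changes by at most $d_{\mathrm{TV}}\cdot W_XH$ under a change of distribution; hence
\[\Rev^{\OPT(P;X)}(F)\ge\Rev^{\OPT(P;X)}(P)-n\gamma W_XH\ge\Rev^{\OPT(F;X)}(P)-n\gamma W_XH\ge\Rev^{\OPT(F;X)}(F)-2n\gamma W_XH,\]
using optimality of $\OPT(P;X)$ on $P$ and the total-variation estimate once more on $\OPT(F;X)$. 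Taking $m$ large enough --- still $\poly(H,n,\nicefrac1\varepsilon,\log\nicefrac1\delta,\log t)$, recalling $W_X\le n$ --- that $\gamma\le\nicefrac\varepsilon{2nW_XH}$ makes the right-hand side exceed $\Rev^{\OPT(F;X)}(F)-\varepsilon$, giving the second bullet. A union bound over the at most three bad events, each of probability $\le\nicefrac\delta3$, handles both cases, and $s=t+m=\poly(H,n,\nicefrac1\varepsilon,\log\nicefrac1\delta,t)$ as required.

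The step I expect to be the main obstacle is the concentrated case, specifically ruling out any polynomial (rather than logarithmic) dependence of $s$ on $\nicefrac1\delta$: naively learning $\hat F$ accurately and invoking empirical-revenue-maximization would cost either $\poly(\nicefrac1\delta)$ samples or a union bound over all possible empirical distributions, so the argument instead identifies the single heavy value of each $F_i$, whose misidentification has probability only exponentially small in $m$. A secondary subtlety is guaranteeing that the extracted bits are genuinely uniform and independent of $(\hat F,\phi)$, so that \cref{efficient-randomized-single-parameter} can be invoked as a black box --- this is what forces drawing them from the disjoint last $m$ samples and relying on the conditional-uniformity property of von Neumann extraction.
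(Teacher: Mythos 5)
Your proposal is correct and follows essentially the same route as the paper's proof: derandomize \cref{efficient-randomized-single-parameter} by extracting its $b$ random bits from comparisons of sample pairs (von Neumann style), and handle the low-entropy case, in which extraction can fail, by identifying each bidder's heavy atom from sample frequencies and outputting the welfare-extracting take-it-or-leave-it auction at those values. The deviations are only implementation details --- reserving $m$ fresh samples instead of pairing all $s$, splitting the analysis on the collision probability $p_i$ rather than directly on atom mass $1-\eta$, and bounding the concentrated-case loss via total variation ($2n\gamma W_X H$) rather than the paper's pointwise revenue-at-most-welfare argument --- and all of these are sound.
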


\subsection{Uniform Convergence over the Set of\texorpdfstring{\\}{ }Rounded Myersonian Auctions}

Recall that in the single-item environment, we obtained a uniform convergence result by noting that any $\varepsilon$-rounding of an optimal auction can be encoded by essentially only encoding the relative order of all ironed virtual values $\phi_i(v_i)$ for $i\in N$ and $v_i\in\supp \hat{F}_i$, and also of~$0$. While such an encoding suffices for single-parameter environments where the outcome that maximizes ironed virtual welfare can be found via a greedy-by-ironed-virtual-value algorithm that at each step adds the next compatible bidder with maximum nonnegative ironed virtual value~$\phi_i(v_i)$ to the set of winners, such as matroid and position environments, this encoding is unsuitable for more general environments. Indeed, already for the public project environment, knowledge of the order of all ironed virtual values $\phi_i(v_i)$ and $0$ does not suffice in order to determine for a given valuation profile $(v_1,\ldots,v_n)\in\supp \hat{F}$ whether the ironed virtual welfare $\sum_{i=1}^n\phi_i(v_i)$ for the outcome where all bidders win is greater than or less than~$0$, the ironed virtual welfare for the outcome where no bidder wins. We now show that nonetheless, for any deterministic environment (including the public project environment), if we encode every $\varepsilon$-rounding of a Myersonian auction for this environment using the relative order of the ironed virtual welfares for all possible outcomes and all valuation profiles, then the set of all possible encodings still has size at most $\exp\bigl(\poly(H,n,\nicefrac{1}{\varepsilon})\bigr)$, and so the analysis of \cref{convergence} may be used to obtain the required an appropriate uniform convergence result for arbitrary deterministic environments.

\begin{definition}[$S^X_{\varepsilon}$]
For every $\varepsilon>0$,
we denote the set of all $\varepsilon$-coarse Myersonian $n$-bidder auctions for a given single-parameter environment $X\subseteq[0,1]^n$ by $S^X_{\varepsilon}$.
\end{definition}

\begin{lemma}\label{small-single-parameter}
There exists $e\le\exp\bigl(\poly(H,n,\nicefrac{1}{\varepsilon})\bigr)$ such that $\bigl|S^X_{\varepsilon}\bigr|\le e$ for every single-parameter deterministic environment $X\subseteq\{0,1\}^n$.
\end{lemma}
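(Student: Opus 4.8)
The plan is to show that every $\varepsilon$-coarse Myersonian auction for $X$ is completely determined by a single combinatorial object of controlled size --- the sign pattern of a fixed finite family of linear forms in the parameters describing the auction --- and then to bound the number of such sign patterns by the classical estimate for hyperplane arrangements. This replaces the ``greedy/leveled'' encoding used for $S^n_\varepsilon$ in \cref{small}, which (as noted before the statement) fails already for the public project environment.

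First I would fix the parametrization. An $\varepsilon$-coarse Myersonian auction for $X$ is $\bigl((\phi_i)_{i\in N};\MAX_X\bigr)$ where each $\phi_i$ is constant on each $\varepsilon$-interval, hence is encoded by the vector $\vec\phi=(\phi_i^j)_{i\in N,\ 0\le j\le\lfloor\nicefrac H\varepsilon\rfloor}\in\mathbb R^d$ with $d=n\cdot(\lfloor\nicefrac H\varepsilon\rfloor+1)$, where $\phi_i^j$ is the value of $\phi_i$ on the $\varepsilon$-interval with index $j$. (The monotonicity requirement on each $\phi_i$ only restricts $\vec\phi$ to a subset of $\mathbb R^d$, which can only decrease the count, so I will ignore it.) Since $X\subseteq\{0,1\}^n$ is deterministic and the auction is $\varepsilon$-coarse, the outcome on a valuation profile depends only on the tuple $(j_1,\dots,j_n)$ of $\varepsilon$-interval indices into which the valuations fall --- of which there are $(\lfloor\nicefrac H\varepsilon\rfloor+1)^n$ --- and on such a profile the chosen outcome is the lexicographically-first $x\in X$ maximizing $\sum_i x_i\phi_i^{j_i}$. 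This lexicographically-first maximizer is a function of the signs (in $\{-,0,+\}$) of the linear forms $L_{x,x',(j_1,\dots,j_n)}(\vec\phi)\eqdef\sum_{i\in N}(x_i-x_i')\cdot\phi_i^{j_i}$, ranging over ordered pairs $(x,x')\in X\times X$, since these signs record every pairwise comparison of ironed virtual welfares. Moreover, in a deterministic environment the payment of a winning bidder $i$ is her minimal winning bid, which by monotonicity of the ($\varepsilon$-coarse) allocation in her own bid equals $j^*_i\cdot\varepsilon$ where $j^*_i$ is the lowest $\varepsilon$-interval index at which $i$ still belongs to the winning outcome against the fixed bids of the others; hence the payments too are a function of the allocation rule alone, and thus ultimately of the same sign data. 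Consequently, any two parameter vectors that induce the same sign on every form in the finite family
\[\mathcal L\eqdef\bigl\{L_{x,x',(j_1,\dots,j_n)}~\big|~x,x'\in X,\ (j_1,\dots,j_n)\in\{0,\dots,\lfloor\nicefrac H\varepsilon\rfloor\}^n\bigr\}\]
induce exactly the same auction, so $\bigl|S^X_\varepsilon\bigr|$ is at most the number of distinct sign vectors realized by $\mathcal L$ as $\vec\phi$ ranges over $\mathbb R^d$.

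Next I would bound the two relevant parameters. The number of forms is $m\eqdef|\mathcal L|\le|X|^2\cdot(\lfloor\nicefrac H\varepsilon\rfloor+1)^n\le 4^n\cdot(\lfloor\nicefrac H\varepsilon\rfloor+1)^n$, so $\ln m\le\poly(H,n,\nicefrac1\varepsilon)$, and crucially $m$ depends only on $H,n,\varepsilon$ (we used only $X\subseteq\{0,1\}^n$, uniformly over all deterministic $X$); the ambient dimension is $d=\poly(H,n,\nicefrac1\varepsilon)$. The number of sign vectors in $\{-,0,+\}^m$ realized by $m$ linear forms on $\mathbb R^d$ --- equivalently, the number of faces of the arrangement of $m$ hyperplanes through the origin in $\mathbb R^d$ --- is at most $(C\cdot m)^d$ for an absolute constant $C$; this is the standard hyperplane-arrangement estimate, and the cleanest self-contained route to it is via Cramer's rule, charging each realizable sign vector to a subset of at most $d$ of the forms whose common zero set, computed by Cramer's rule, witnesses it. Therefore $\bigl|S^X_\varepsilon\bigr|\le(Cm)^d=\exp\bigl(d\cdot\ln(Cm)\bigr)=\exp\bigl(\poly(H,n,\nicefrac1\varepsilon)\bigr)$, and since this bound depends only on $H,n,\varepsilon$ we may take $e$ to be it. (With this lemma in hand, the uniform-convergence machinery of \cref{convergence} applies verbatim to deterministic environments.)

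The main obstacle I expect is the verification that the auction --- including its payments, not just its allocation --- is genuinely a function of the sign pattern of $\mathcal L$ alone, with no residual numerical dependence on $\vec\phi$; this rests on the observation that in a deterministic environment the price charged to a winner is the left endpoint of a ``threshold'' $\varepsilon$-interval, which is read off from the allocation rule (itself determined by which virtual-welfare comparisons come out which way and by the lexicographic tie-break). A secondary, purely bookkeeping point is to state the hyperplane-arrangement bound with the exponent on $d$ (not on $m$), so that it composes correctly with $m=\exp(\poly)$ and $d=\poly$ to still yield $\exp(\poly)$.
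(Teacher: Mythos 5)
Your proposal is correct and follows essentially the same route as the paper: both reduce an $\varepsilon$-coarse Myersonian auction to the sign pattern of the linear forms given by pairwise ironed-virtual-welfare comparisons in the $n\cdot\bigl(\lfloor\nicefrac{H}{\varepsilon}\rfloor+1\bigr)$ interval-constant virtual values (the paper compares all pairs of subsets $N_1,N_2\subseteq N$, you compare pairs in $X\times X$ and bound $|X|\le 2^n$, either way giving a bound uniform over deterministic $X$), and then bound the number of realizable sign patterns by $\exp\bigl(\poly(H,n,\nicefrac{1}{\varepsilon})\bigr)$. The only difference is the final counting step: you invoke the standard hyperplane-arrangement/sign-pattern bound $(Cm)^d$ as a black box, whereas the paper derives the same conclusion self-containedly by using Cramer's rule to produce, for each pattern, a representative $(\psi'_{i,j})$ whose entries are ratios of integers of absolute value at most $M!$ --- exactly the threshold-circuit-counting technique your cited bound encapsulates.
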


\begin{proof}[Proof.\footnotemark]\footnotetext{This proof uses a technique in the spirit of the one popularly used to count weighted linear threshold circuits. We are not aware of the use of any similar argument in the literature on mechanism design, and hope that this type of argument may find additional uses in similar contexts in the future.}
Let $\bigl((\phi'_i)_{i\in N};\MAX_X\bigr)$ be an $\varepsilon$-coarse Myersonian $n$-bidder auction for $X$. Let $M\eqdef n\cdot\bigl(\lfloor\nicefrac{H}{\varepsilon}\rfloor+1\bigr)$. We will show that for every $i\in N$, there exists a function $\psi'_i:{[0,H]\rightarrow\RR}$ that is constant on each $\varepsilon$-interval and only attains values that are signed rational numbers with nominator and denominator both having absolute value at most $M!$, such that $\bigl((\psi'_i)_{i\in N};\MAX_X\bigr)=\bigl((\phi'_i)_{i\in N};\MAX_X\bigr)$, i.e., these two Myersonian auctions have the same outcome for every valuation profile. As this implies that
$\bigl|S^X_{\varepsilon}\bigr|\le(2\cdot M!^2+1)^M\le\exp\bigl(\poly(H,n,\nicefrac{1}{\varepsilon})\bigr)$, the \lcnamecref{small-single-parameter} will follow.

For every $i\in N$ and $j\in\bigl\{0,\ldots,\lfloor\nicefrac{H}{\varepsilon}\rfloor\bigr\}$, let $\phi'_{i,j}$ be the value that $\phi'_i$ attains on the \mbox{$\varepsilon$-interval} $\bigl[j\cdot\varepsilon,(j+1)\cdot\varepsilon\bigr)$. We will define the functions $\psi'_1,\ldots,\psi'_n$ by defining the corresponding sequence of $M$ values $\psi'_{i,j}$ via a set of linear constraints as follows: For every $j_1,\ldots,j_n\in\bigl\{0,\ldots,\lfloor\nicefrac{H}{\varepsilon}\rfloor\bigr\}$ and for every two distinct subsets $N_1,N_2\subseteq N$, we add the following constraint:
\begin{center}
\begin{tabular}{lcr}
$\sum_{i\in N_1}\psi'_{i,j_i} = \sum_{i\in N_2}\psi'_{i,j_i}$ & \qquad if & \qquad $\sum_{i\in N_1}\phi'_{i,j_i} = \sum_{i\in N_2}\phi'_{i,j_i}$, \\
$\sum_{i\in N_1}\psi'_{i,j_i} \le \sum_{i\in N_2}\psi'_{i,j_i} - 1$ & \qquad if & \qquad $\sum_{i\in N_1}\phi'_{i,j_i} < \sum_{i\in N_2}\phi'_{i,j_i}$, \\
$\sum_{i\in N_1}\psi'_{i,j_i} \ge \sum_{i\in N_2}\psi'_{i,j_i} + 1$ & \qquad if & \qquad $\sum_{i\in N_1}\phi'_{i,j_i} > \sum_{i\in N_2}\phi'_{i,j_i}$. \\
\end{tabular}
\end{center}
We note that the set of constraints that we have defined is satisfiable. Indeed, setting
\[d\eqdef\min\left\{\biggl|\sum_{i\in N_1}\phi'_{i,j_i} - \sum_{i\in N_2}\phi'_{i,j_i}\biggr| ~\middle|~\begin{matrix}j_1,\ldots,j_n\in\bigl\{0,\ldots,\lfloor\nicefrac{H}{\varepsilon}\rfloor\bigr\} \\ \And N_1,N_2\subseteq N\end{matrix} \And \sum_{i\in N_1}\phi'_{i,j_i} \ne \sum_{i\in N_2}\phi'_{i,j_i}\right\},\]
where by finiteness $d\!>\!0$, we note that the assignment $\psi'_{i,j}=\frac{\phi'_{i,j}}{d}$ for all $i,j$ satisfies this set of constraints. Since this set of (exponentially many) constraints in $M$ unknowns is satisfiable, there exists a solution defined by precisely $M$ of these constraints, where all $M$ constraints are binding (i.e., hold with an equality rather than with an inequality). Let~$(\psi'_{i,j})_{i,j}$ be this solution. By Cramer's Rule, each $\psi'_{i,j}$ is the quotient of the determinants of two $M\times M$ matrices, where in our case, by construction each of these matrices contains only values in $\{-1,0,1\}$, and therefore, each such determinant is an integer having absolute value at most $M!$. We conclude the proof as by construction, for every valuation profile, the order of the ironed virtual welfares of every two possible \mbox{deterministic outcomes} in $\{0,1\}^n$ is the same according to both $(\phi'_i)_{i\in N}$ and $(\psi'_i)_{i\in N}$, and therefore $\bigl((\psi'_i)_{i\in N};\MAX_X\bigr)=\bigl((\phi'_i)_{i\in N};\MAX_X\bigr)$, as required.
\end{proof}

Plugging in \cref{myerson-single-parameter,exists-single-parameter,efficient-single-parameter,small-single-parameter} into the analysis of the previous \lcnamecrefs{convergence}, we obtain the following analogue of \cref{empirical}, providing a polynomial-time algorithm for learning an approximately optimal auction for a class of single-parameter environments that includes deterministic environments (including matroid environments and the public project environment) and position environments, from samples from an arbitrary unknown bounded product distribution.

\begin{theorem}[Detailed version of \cref{intro-empirical-single-parameter}]\label{empirical-single-parameter}
Let $X\subseteq[0,1]^n$ be a tractable deterministic environment (such as a matroid environment or the public project environment) or a position environment.
There exist $t\le s=\poly(H,n,\nicefrac{1}{\varepsilon},\log\nicefrac{1}{\delta})$ such that the following holds.
Let $F_1,\ldots,F_n$ be arbitrary distributions on $[0,H]$.
For every $i\in N$, draw $s$~samples from $F_i$, and let $\hat{F}_i$ be the empirical distribution over the first $t$ of the $s$ samples from $F_i$.
Then, with probability at least~$1\!-\!\delta$, the auction output by the deterministic polynomial-time ``$\frac{\varepsilon}{W_X+3}$-rounding'' algorithm of \cref{efficient-single-parameter} given the $s$ samples from $F=F_1\times\cdots\times F_n$ and given the optimal auction for $\hat{F}=\hat{F}_1\times\cdots\times\hat{F}_n$ (which by \cref{myerson-single-parameter} can be deterministically computed in time $\poly(t)$ via \cref{elkind}),
approximates the maximum possible revenue from~$F$ up to less than an additive~$\varepsilon$.
\end{theorem}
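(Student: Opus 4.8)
The plan is to mirror the proof of \cref{empirical}, replacing each single-item ingredient by its single-parameter counterpart from this \lcnamecref{single-parameter}. Write $\varepsilon'\eqdef\frac{\varepsilon}{W_X+3}$. The first step is a uniform-convergence statement analogous to \cref{uniform}, over the finite set $S^X_{\varepsilon'}$. For this I need the bound $\bigl|S^X_{\varepsilon'}\bigr|\le\exp\bigl(\poly(H,n,\nicefrac{1}{\varepsilon})\bigr)$: for a tractable deterministic environment $X\subseteq\{0,1\}^n$ this is precisely \cref{small-single-parameter}, while for a position environment the ironed-virtual-welfare-maximizing outcome is produced by a greedy-by-ironed-virtual-value rule, so an $\varepsilon'$-coarse Myersonian auction for $X$ is determined by the relative order of the finitely many values $\phi_i(v_i)$ (one per bidder and per $\varepsilon'$-interval) together with $0$, and a counting argument in the spirit of \cref{small} gives the same bound. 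Combining this size bound with the single-auction concentration inequality \cref{converge-one} and a union bound over $S^X_{\varepsilon'}\cup\bigl\{\OPT(F;X)\bigr\}$ yields: there is $t=\poly(H,n,\nicefrac{1}{\varepsilon},\log\nicefrac{1}{\delta})$ such that, drawing $t$ samples from each $F_i$ and letting $\hat F_i$ be their empirical distribution, with probability at least $1-\nicefrac{\delta}{2}$ it holds that $\bigl|\Rev^A(F)-\Rev^A(\hat F)\bigr|<\varepsilon'$ simultaneously for every $A\in S^X_{\varepsilon'}\cup\bigl\{\OPT(F;X)\bigr\}$, where $F=F_1\times\cdots\times F_n$ and $\hat F=\hat F_1\times\cdots\times\hat F_n$.

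Next I would invoke \cref{efficient-single-parameter} with accuracy parameter $\varepsilon'$ and failure probability $\nicefrac{\delta}{2}$; let $s\ge t$ be as guaranteed there, build $\hat F_i$ from the first $t$ of the $s$ samples (so that $\hat F$ is exactly the empirical distribution to which the uniform-convergence event refers), and recall that by \cref{myerson-single-parameter} and \cref{elkind} the auction $\bigl((\phi_i)_{i\in N};\MAX_X\bigr)$ supplied to the algorithm is $\OPT(\hat F;X)$, the revenue maximizer over $\hat F$. Condition on the intersection of the two good events, which has probability at least $1-\delta$, and write $A'\eqdef\bigl((\phi'_i)_{i\in N};\MAX_X\bigr)$ for the output, with output bit $e$. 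If $e=0$, then \cref{efficient-single-parameter} gives directly $\Rev^{A'}(F)>\Rev^{\OPT(F;X)}(F)-\varepsilon'>\Rev^{\OPT(F;X)}(F)-\varepsilon$. If $e=1$, then $A'$ is an $\varepsilon'$-rounding of $\OPT(\hat F;X)$, hence an $\varepsilon'$-coarse Myersonian auction for $X$, so $A'\in S^X_{\varepsilon'}$, and
\begin{align*}
\Rev^{\OPT(F;X)}(F)
&< \Rev^{\OPT(F;X)}(\hat F) + \varepsilon' \le \\
&\le \Rev^{\OPT(\hat F;X)}(\hat F) + \varepsilon' < \\
&< \Rev^{A'}(\hat F) + (W_X+2)\cdot\varepsilon' < \\
&< \Rev^{A'}(F) + (W_X+3)\cdot\varepsilon' = \Rev^{A'}(F) + \varepsilon,
\end{align*}
where the first and last inequalities use the uniform-convergence event, the second uses optimality of $\OPT(\hat F;X)$ for $\hat F$, and the third uses the $e=1$ guarantee of \cref{efficient-single-parameter} (that $\Rev^{A'}(\hat F)>\Rev^{\OPT(\hat F;X)}(\hat F)-(W_X+1)\cdot\varepsilon'$) together with the $\varepsilon'$ slack carried from the previous line. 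In either case $\Rev^{A'}(F)>\Rev^{\OPT(F;X)}(F)-\varepsilon$, which is the assertion.

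Essentially all of the new content relative to \cref{empirical} is already isolated in the lemmas of this \lcnamecref{single-parameter} --- \cref{exists-single-parameter} and \cref{randomized-rounding-single-parameter} for the existence of a good rounding, \cref{efficient-single-parameter} for computing one while derandomizing via the sample order, and \cref{small-single-parameter} for the counting --- so the remaining work is the same inequality-chasing bookkeeping as in \cref{empirical}. The points that need care are: (i) verifying that $\bigl|S^X_{\varepsilon'}\bigr|\le\exp(\poly(H,n,\nicefrac{1}{\varepsilon}))$ genuinely covers position environments, where \cref{small-single-parameter} as stated does not apply but the greedy-by-ironed-virtual-value structure lets the \cref{small}-style argument go through; (ii) the two-branch output of \cref{efficient-single-parameter} --- in the $e=0$ branch $A'$ need not be a rounding of the empirical optimum at all, so the $<\varepsilon$ bound on $F$ must come from that lemma's direct guarantee (already stated with respect to $F$, with slack $\varepsilon'<\varepsilon$) rather than via a uniform-convergence hop; and (iii) keeping $s\ge t$ (and the two halves of $\delta$) straight so that the empirical distribution in the uniform-convergence event is exactly the one the rounding algorithm consumes. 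I expect (i) to be the only place a genuinely new, if short, argument is required; the rest is routine.
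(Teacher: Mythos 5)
Your proposal is correct and follows essentially the same route as the paper's proof: a uniform-convergence analogue of \cref{uniform} over $S^X_{\varepsilon/(W_X+3)}$ (using \cref{small-single-parameter} for deterministic environments and the order-based \cref{small}-style count for position/matroid environments), a $\nicefrac{\delta}{2}$--$\nicefrac{\delta}{2}$ union bound with \cref{efficient-single-parameter}, the $e=0$ branch handled by that proposition's direct guarantee on $F$, and the same $(W_X+3)$-term inequality chain for $e=1$. The details you flag as needing care (the position-environment counting, the two-branch output, and $s\ge t$) are exactly the points the paper's terse proof relies on, and you have filled them in correctly.
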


\begin{remark}\label{empirical-single-parameter-lt}
\cref{empirical-single-parameter} also holds, via the same proof, for every tractable single-parameter environment $X\subseteq[0,1]^n$  such that $\bigl|S^X_{\varepsilon}\bigr|\le\exp\bigl(\poly(H,n,\nicefrac{1}{\varepsilon})\bigr)$.
\end{remark}

\subsection{Computationally Hard Auction Environments}\label{intractable}

Up until now, we have assumed that the given single-parameter environment $X\subseteq[0,1]^n$ is tractable. Nonetheless, for many single-parameter environments, including deterministic environments, this is known not to be the case. We therefore now extend our analysis to such environments.

A notable example of an intractable single-parameter environment is that of knapsack environments (\citealp{MualemNisan}; see also \citealp{RoughgardenBook}), where each bidder~$i$ has a nonnegative ``size'' $w_i$, and a deterministic outcome is possible if the cumulative size of all winners does not exceed some threshold~$w$:
\[X_{(w_1,\ldots,w_n;w)}=\left\{(x_1,\ldots,x_n)\in\{0,1\}^n~\middle|~\sum_{i=1}^n w_i\cdot x_i \le w\right\}.\]
In such environments, to precisely maximize ironed virtual welfare (and therefore revenue), one must solve an instance of the well known KNAPSACK problem, which is known to be NP-complete.

A popular approach in the literature for designing auctions for intractable environments $X$ is to have the designed auction choose an outcome that approximately maximizes the ironed virtual welfare using some efficient constant-factor approximation algorithm $\APPROX_X$ instead of the intractable precise-maximization algorithm $\MAX_X$;
the payments are once again the unique nonnegative payments that ensure (precise) truthfulness\footnote{For this auction to be truthful, the approximation algorithm must also maintain a certain well understood monotonicity property. For details, see, e.g., \cite{HartlineBook,RoughgardenBook}.} for all valuation profiles in $[0,H]^n$. We denote this auction, which by tools developed by \cite{Myerson} turns out to approximate the revenue from the optimal auction $\OPT(F;X)=\bigl((\phi_i)_{i\in N};\MAX_X\bigr)$ up to the same constant multiplicative factor of the approximation algorithm, by $\bigl((\phi_i)_{i\in N};\APPROX_X\bigr)$.

Assume, therefore, that $X\subseteq[0,1]^n$ is an intractable environment and that $\APPROX_X$ is a deterministic algorithm that runs in time $\poly(H,n)$, such that given the (realized) ironed virtual values $\phi_i(v_i)$ of all bidders, $\APPROX_X$ outputs an outcome $(x_1,\ldots,x_n)\in X$ that maximizes the ironed virtual welfare ${\sum_{i=1}^n x_i\cdot \phi_i(v_i)}$ up to some multiplicative factor $C>1$. We will explore when our analysis can be applied to yield a polynomial-time algorithm for learning, from samples from an arbitrary unknown bounded product distribution, a tractable auction that approximates the optimal revenue up to the same multiplicative factor of $C$, plus less than an additive $\varepsilon$.

We start by noting that the analysis and results of \cref{existence-single-parameter} hold for all auctions, and in particular also auctions of the form $A=\bigl((\phi_i)_{i\in N};\APPROX_X\bigr)$.
Furthermore, we note that the analysis of \cref{efficiency-single-parameter} still holds when when replacing every occurrence of $\MAX_X$ with $\APPROX_X$ and every occurrence of $\Rev^{OPT(F;X)}(F)$ with $\frac{\Rev^{OPT(F;X)}(F)}{C}$.

\begin{proposition}\label{efficient-approx-single-parameter}
\cref{efficient-single-parameter} still holds when every occurrence of $\MAX_X$ is replaced with $\APPROX_X$, when $\Rev^{OPT(F;X)}(F)$ is replaced with $\frac{\Rev^{OPT(F;X)}(F)}{C}$, and when requiring only that $\APPROX_X$ runs in time $\poly(H,n)$ rather than requiring that $X$ is tractable.
\end{proposition}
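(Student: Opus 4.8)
The plan is to prove \cref{efficient-approx-single-parameter} by replaying the proof of \cref{efficient-single-parameter} essentially verbatim, with $\APPROX_X$ substituted for $\MAX_X$ throughout, and to flag the handful of places where the substitution must actually be checked. Recall that the proof of \cref{efficient-single-parameter} has two layers: first, the randomized rounding algorithm of \cref{efficient-randomized-single-parameter}; second, its derandomization, in which the $b=\poly(H,n,\nicefrac{1}{\varepsilon},\log\nicefrac{1}{\delta},\log t)$ random bits the algorithm needs are harvested from the order in which the $s$ samples are drawn, together with a fallback (the output bit $e=0$) that directly learns a good auction whenever the true distribution $F$ is too concentrated to furnish those bits. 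I would show that the first layer survives with the same additive loss $(W_X+1)\cdot\varepsilon$, and that the second layer survives with its fallback guarantee weakened from $\Rev^{((\phi'_i)_{i\in N};\MAX_X)}(F)>\Rev^{OPT(F;X)}(F)-\varepsilon$ to $\Rev^{((\phi'_i)_{i\in N};\APPROX_X)}(F)>\frac{\Rev^{OPT(F;X)}(F)}{C}-\varepsilon$.

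For the first layer: the algorithm of \cref{efficient-randomized-single-parameter} only ever evaluates empirical revenues of candidate $\varepsilon$-roundings of the input auction on sampled valuation profiles, and outputs one such rounding; the single point at which tractability of $X$ was invoked is the subroutine that, given realized ironed virtual values, returns the chosen outcome — and $\APPROX_X$ supplies exactly such a subroutine, in time $\poly(H,n)$. The revenue-loss accounting is unchanged: the $W_X\cdot\varepsilon$ term is furnished by \cref{exists-single-parameter} and \cref{randomized-rounding-single-parameter}, which bound the rounding loss for an \emph{arbitrary} auction $A$ — in particular for $A=\bigl((\phi_i)_{i\in N};\APPROX_X\bigr)$ — and $W_X=\Max_{(x_i)\in X}\sum_{i=1}^n x_i$ depends only on $X$, never on how outcomes within $X$ are selected; the remaining ``$+1$'' term is the sampling error in estimating empirical revenues, again independent of the substitution. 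The one routine check is that $\varepsilon$-rounding preserves the monotonicity of $\APPROX_X$ required for $\bigl((\phi'_i)_{i\in N};\APPROX_X\bigr)$ to be a well-defined truthful auction: $\varepsilon$-rounding precomposes each ironed virtual valuation with a fixed nondecreasing step map on that bidder's bid, so monotonicity is retained.

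For the second layer: the derandomization in \cref{efficient-single-parameter} uses the randomized algorithm as a black box depending only on its bit budget $b$ and on the combinatorics of extracting fair bits from a random ordering of $s$ samples — nothing there mentions $X$, so it carries over unchanged. When $F$ is too concentrated for the extraction to succeed, the same low-entropy condition as in \cref{efficient-single-parameter} is detected from the samples, and with probability at least $1-\delta$ the samples determine each $F_i$ accurately enough to reconstruct an (approximately) optimal ironed-virtual-valuation profile, exactly as in that proof. The only change is the last line: instead of composing this profile with $\MAX_X$ to get an auction losing less than $\varepsilon$ relative to $\Rev^{OPT(F;X)}(F)$, we compose it with $\APPROX_X$, which by the property of $\APPROX_X$ recalled earlier in this subsection (it approximates the revenue of the optimal auction up to the factor $C$) yields $\Rev^{((\phi'_i)_{i\in N};\APPROX_X)}(F)\ge\frac{\Rev^{OPT(F;X)}(F)}{C}-\varepsilon$, which is exactly the weakened $e=0$ guarantee; this output need not be an $\varepsilon$-rounding, consistently with the statement, which imposes that requirement only when $e=1$.

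I expect the main obstacle to be bookkeeping in the second layer: spelling out the ``low entropy $\Rightarrow$ the samples pin down $F$'' dichotomy so that it \emph{literally} coincides with the branch used inside the proof of \cref{efficient-single-parameter} — so that the fallback can be reused rather than reproven — and tracking how the small reconstruction error there interacts with the multiplicative factor $C$ (it should still absorb into the single additive $\varepsilon$). The first layer, and the derandomization machinery itself, are a mechanical transcription with $\APPROX_X$ in place of $\MAX_X$.
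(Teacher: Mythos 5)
Your proposal is correct and follows essentially the same route as the paper, whose own proof is just ``analogous to \cref{efficient-single-parameter,efficient-randomized-single-parameter} with $\MAX_X$ replaced by $\APPROX_X$'': the randomized layer and the bit-extraction/derandomization machinery carry over verbatim (as you note, \cref{randomized-rounding-single-parameter,exists-single-parameter} hold for arbitrary auctions, in particular $\bigl((\phi_i)_{i\in N};\APPROX_X\bigr)$, and tractability was only used to compute outcomes), and the sole substantive change is in the $e=0$ fallback. One caveat there: your justification cites ``the property recalled earlier in this subsection'' that $\bigl((\phi_i)_{i\in N};\APPROX_X\bigr)$ approximates the optimal revenue up to $C$, but that property is stated for Myerson's ironed virtual valuations of $F$, whereas the fallback uses the welfare-extractor functions $\phi'_i$ of \cref{x-welfare-extractor}, to which it does not directly apply. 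The correct (and easy) argument, which is exactly the paper's stated modification of \cref{efficient-single-parameter-x}, is pointwise: at the atom profile, $\APPROX_X$ run on the virtual values $\phi'_i(v_i)=v_i$ selects an outcome $(x_i)_{i\in N}$ whose welfare is within a factor $C$ of the maximum, and the payments extract the full $\sum_i x_i\cdot v_i$, so $r^{((\phi'_i)_{i\in N};\APPROX_X)}(v_1,\ldots,v_n)=\sum_{i=1}^n x_i\cdot v_i\ge r^{OPT(F;X)}(v_1,\ldots,v_n)/C$; averaging over the $1-n\eta$ atom mass then absorbs the rest into the additive $\varepsilon$, exactly as in \cref{efficient-single-parameter}. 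With that substitution your argument coincides with the paper's.
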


Let $\OPT(\hat{F};X)=\bigl((\phi_i)_{i\in N};\MAX_X\bigr)$ be the optimal auction for the empirical distribution $\hat{F}=\hat{F}_1\times\cdots\times\hat{F}_n$ in the environment $X$. By assumption, the truthful auction $A=\bigl((\phi_i)_{i\in N};\APPROX_X\bigr)$ approximates the optimal revenue from $\hat{F}$ up to a multiplicative factor of $C$. By \cref{efficient-approx-single-parameter}, there exists an efficient algorithm that given $(\phi_i)_{i\in N}$ and $\hat{F}$ (and given the ability to draw some additional samples from $F$, if the algorithm is to be deterministic rather than randomized), outputs an $\varepsilon$-rounding $A'=\bigl((\phi'_i)_{i\in N};\APPROX_X\bigr)$ of $A$ such that
\[\Rev^{A'}(\hat{F})>\Rev^{A}(\hat{F})-(W_X+1)\cdot\varepsilon\ge\frac{\Rev^{\OPT(\hat{F};X)}(\hat{F})}{C}-(W_X+1)\cdot\varepsilon.\]
Now, if we had a result analogous to \cref{uniform}, showing that with high probability, for each $\varepsilon$-coarse auctions that uses $\APPROX_X$ to maximize some ironed virtual welfare as well as for $\OPT(F;X)$, its revenues on $F$ and on $\hat{F}$ differ by less than an additive $\varepsilon$, then we would obtain the desired approximation result via a derivation similar to that of \cref{empirical-single-parameter} (and \cref{empirical}):
\begin{align*}
\Rev^{\OPT(F;X)}(F)&<\Rev^{\OPT(F;X)}(\hat{F}) + \varepsilon \le \\
&\le\Rev^{\OPT(\hat{F};X)}(\hat{F})+\varepsilon<\\
&< C\cdot\bigl(\Rev^{A'}(\hat{F})+(W_X+1)\cdot\varepsilon\bigr)+\varepsilon< \\
&< C\cdot\bigl(\Rev^{A'}(F)+(W_X+2)\cdot\varepsilon\bigr)+\varepsilon= \\
&= C\cdot\bigl(\Rev^{A'}(F)+(W_X+2+\nicefrac{1}{C})\cdot\varepsilon\bigr)\le \\
&\le C\cdot\bigl(\Rev^{A'}(F)+(W_X+3)\cdot\varepsilon\bigr).
\end{align*}
While a first glance may suggest that \cref{small-single-parameter} indeed gives the desired analogue of \cref{uniform}, a closer look shows that more care is required here. Indeed, the fact that, as in the proof of \cref{small-single-parameter}, $\bigl((\psi'_i)_{i\in N};\MAX_X\bigr)=\bigl((\phi'_i)_{i\in N};\MAX_X\bigr)$ by no means implies that $\bigl((\psi'_i)_{i\in N};\APPROX_X\bigr)=\bigl((\phi'_i)_{i\in N};\APPROX_X\bigr)$, as many approximation algorithms use more information from $(\phi'_i)_{i\in N}$ beyond merely the order of all possible ironed virtual welfares for all possible valuation profiles and outcomes. For instance, the $2$-approximation algorithm for the knapsack environment \citep{MualemNisan} also considers the order of the quotients $\bigl(\frac{\phi'_i(v_i)}{w_i}\bigr)_{i\in N}$, which cannot be inferred from the order of all possible ironed virtual welfares alone. In the specific case of this 2-approximation algorithm, this is not the end of the road since the set of all possible orders of these quotients is of size at most $\exp\bigl(\poly(H,n,\nicefrac{1}{\varepsilon})\bigr)$.
More generally, we can obtain our approximation result whenever, for every $\varepsilon$-coarse $(\phi'_i)_{i\in N}$, we can encode all of the information that $\APPROX_X$ uses from $(\phi'_i)_{i\in N}$, such that the set of all possible encodings (i.e., the set of distinct $\varepsilon$-coarse auctions that use $\APPROX_X$ to maximize some ironed virtual welfare) is of size at most $\exp\bigl(\poly(H,n,\nicefrac{1}{\varepsilon})\bigr)$. (This is the case, for example, if $\APPROX_X$ only reads the virtual valuations $\phi'_i(v_i)$ up to a precision of some polynomial number of bits.)

\begin{definition}[$S^{\APPROX_X}_{\varepsilon}$]
For every $\varepsilon>0$, we denote the set of all $\varepsilon$-coarse auctions of the form $\bigl((\phi_i)_{i\in N};\APPROX_X\bigr)$ for a given approximation algorithm $\APPROX_X$ for a given single-parameter environment $X\subseteq[0,1]^n$ by $S^{\APPROX_X}_{\varepsilon}$.
\end{definition}

\begin{theorem}[Generalization of \cref{empirical-single-parameter}/\cref{intro-empirical-single-parameter} for Intractable Auction Environments]\label{empirical-approx-single-parameter}
Let $X\subseteq[0,1]^n$ be a single-parameter environment and let $\APPROX_X$ be a monotone algorithm that runs in time $\poly(H,n)$ and finds an outcome in $X$ that maximizes ironed virtual welfare up to some multiplicative factor $C\ge1$. If $\bigl|S^{\APPROX_X}_{\varepsilon}\bigr|\le\exp\bigl(\poly(H,n,\nicefrac{1}{\varepsilon})\bigr)$, then there exist $t\le s=\poly(H,n,\nicefrac{1}{\varepsilon},\log\nicefrac{1}{\delta})$ such that the following holds.
Let $F_1,\ldots,F_n$ be arbitrary distributions on $[0,H]$.
For every $i\in N$, draw $s$~samples from $F_i$, and let $\hat{F}_i$ be the empirical distribution over the first $t$ of the $s$ samples from $F_i$.
Then, with probability at least~$1\!-\!\delta$, the auction output by the deterministic polynomial-time ``$\frac{\varepsilon}{W_X+3}$-rounding'' algorithm of \cref{efficient-approx-single-parameter} given the~$s$ samples from $F=F_1\times\cdots\times F_n$ and given
$\bigl((\phi_i)_{i\in N};\APPROX_X\bigr)$, for $(\phi_i)_{i\in N}$ that can be deterministically computed in time $\poly(t)$ via \cref{elkind} given $\hat{F}=\hat{F}_1\times\cdots\times\hat{F}_n$,
approximates the maximum possible revenue from~$F$ up to the same multiplicative factor of $C$, plus less than an additive~$\varepsilon$.
\end{theorem}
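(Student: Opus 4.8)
The plan is to follow the template of the proofs of \cref{empirical,empirical-single-parameter}, replacing the exact maximizer $\MAX_X$ by $\APPROX_X$ throughout and carrying the multiplicative factor $C$ along; the target inequality is precisely the telescoping chain displayed in the discussion preceding this \lcnamecref{empirical-approx-single-parameter}. Set $\varepsilon'\eqdef\frac{\varepsilon}{W_X+3}$, and let $A=\bigl((\phi_i)_{i\in N};\APPROX_X\bigr)$, where $(\phi_i)_{i\in N}$ is the ironed-virtual-valuation profile of $\OPT(\hat{F};X)$ produced by \cref{elkind} (cf.\ \cref{myerson-single-parameter}); by the assumed properties of $\APPROX_X$ together with the revenue-as-virtual-welfare tools of \cite{Myerson}, one has $\Rev^A(\hat{F})\ge\frac{1}{C}\Rev^{\OPT(\hat{F};X)}(\hat{F})$. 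Two ingredients are needed: (i) a uniform-convergence statement in the spirit of \cref{uniform}, but over $S^{\APPROX_X}_{\varepsilon'}\cup\bigl\{\OPT(F;X)\bigr\}$; and (ii) the efficient ``$\varepsilon'$-rounding'' algorithm of \cref{efficient-approx-single-parameter}, which on input $A$ (and enough samples from $F$) outputs a bit $e$ together with an auction $A'=\bigl((\phi'_i)_{i\in N};\APPROX_X\bigr)$ obeying the two-case guarantee stated there.

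For ingredient (i): every auction of the form $\bigl((\phi'_i)_{i\in N};\APPROX_X\bigr)$ is a genuine truthful single-parameter auction (truthfulness holds because $\APPROX_X$ is monotone), so the concentration inequality \cref{converge-one} applies to it, exactly as it does to $\OPT(F;X)$. Just as in the proof of \cref{uniform}, applying \cref{converge-one} at accuracy $\varepsilon'$ with a sufficiently small per-auction failure probability and union-bounding over the finite family $S^{\APPROX_X}_{\varepsilon'}\cup\bigl\{\OPT(F;X)\bigr\}$ --- whose cardinality is at most $\exp\bigl(\poly(H,n,\nicefrac{1}{\varepsilon})\bigr)$ by the hypothesis on $\bigl|S^{\APPROX_X}_{\varepsilon'}\bigr|$, so that its logarithm is $\poly(H,n,\nicefrac{1}{\varepsilon})$ --- will show that $t=\poly(H,n,\nicefrac{1}{\varepsilon},\log\nicefrac{1}{\delta})$ samples suffice for $\bigl|\Rev^B(F)-\Rev^B(\hat{F})\bigr|<\varepsilon'$ to hold simultaneously for all $B$ in this family with probability at least $1-\frac{\delta}{2}$. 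The one subtlety (identical to the one handled in the proof of \cref{empirical}) is that although the output auction $A'$ depends on the drawn samples, it always lies in the \emph{sample-independent} set $S^{\APPROX_X}_{\varepsilon'}$, so this high-probability event automatically covers $A'$.

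With both ingredients in hand, I would take $s=\poly(H,n,\nicefrac{1}{\varepsilon},\log\nicefrac{1}{\delta})$ large enough to meet both the sample requirement above and that of \cref{efficient-approx-single-parameter} (run at accuracy $\varepsilon'$ and failure probability $\frac{\delta}{2}$), and then condition on both good events, which together occur with probability at least $1-\delta$. If $e=0$, then \cref{efficient-approx-single-parameter} directly yields $\Rev^{A'}(F)>\frac{\Rev^{\OPT(F;X)}(F)}{C}-\varepsilon'\ge\frac{\Rev^{\OPT(F;X)}(F)}{C}-\varepsilon$, which is exactly the claimed guarantee. If $e=1$, then \cref{efficient-approx-single-parameter} gives $\Rev^{A'}(\hat{F})>\Rev^{A}(\hat{F})-(W_X+1)\varepsilon'$; combining this with $\Rev^A(\hat{F})\ge\frac{1}{C}\Rev^{\OPT(\hat{F};X)}(\hat{F})\ge\frac{1}{C}\Rev^{\OPT(F;X)}(\hat{F})$ (the last step by optimality of $\OPT(\hat{F};X)$ for $\hat{F}$), with the two uniform-convergence inequalities applied to $\OPT(F;X)$ and to $A'$, and with $\frac{1}{C}\le1$, one obtains, exactly via the displayed telescoping chain preceding this \lcnamecref{empirical-approx-single-parameter},
\[\Rev^{\OPT(F;X)}(F)<C\cdot\bigl(\Rev^{A'}(F)+(W_X+3)\cdot\varepsilon'\bigr)=C\cdot\Rev^{A'}(F)+C\varepsilon,\]
i.e.\ $\frac{\Rev^{\OPT(F;X)}(F)}{C}<\Rev^{A'}(F)+\varepsilon$. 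Either way $A'$ approximates the maximum possible revenue from $F$ up to the same multiplicative factor $C$ plus less than an additive $\varepsilon$, as required.

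The main difficulty is conceptual rather than computational, and it is the obstruction flagged in the discussion before this \lcnamecref{empirical-approx-single-parameter}: the Cramer's-rule argument of \cref{small-single-parameter} bounds $\bigl|S^X_{\varepsilon}\bigr|$ but \emph{not} $\bigl|S^{\APPROX_X}_{\varepsilon}\bigr|$, because $\APPROX_X$ may read finer information out of $(\phi'_i)_{i\in N}$ than just the order of the ironed virtual welfares of all outcomes over all valuation profiles --- for the knapsack $2$-approximation, for example, also the order of the ratios $\phi'_i(v_i)/w_i$. This is exactly why $\bigl|S^{\APPROX_X}_{\varepsilon}\bigr|\le\exp\bigl(\poly(H,n,\nicefrac{1}{\varepsilon})\bigr)$ enters as a hypothesis rather than being proved in general; once it is granted, all that remains is the bookkeeping sketched above --- in particular, making sure the union bound is over a sample-independent family and correctly separating the $e=0$ and $e=1$ output cases of \cref{efficient-approx-single-parameter} --- which mirrors the proof of \cref{empirical-single-parameter} essentially verbatim.
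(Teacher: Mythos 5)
Your proposal is correct and follows essentially the same route as the paper, which proves this theorem by declaring it analogous to \cref{empirical-single-parameter} with $S^{X}_{\varepsilon/(W_X+3)}$ replaced by $S^{\APPROX_X}_{\varepsilon/(W_X+3)}$ (its cardinality now a hypothesis rather than supplied by \cref{small-single-parameter}) and \cref{efficient-single-parameter} replaced by \cref{efficient-approx-single-parameter}, concluding via the telescoping chain displayed in \cref{intractable}. Your write-up simply spells out that analogy --- the union bound over the sample-independent family, the split into the $e=0$ and $e=1$ cases, and the factor-$C$ bookkeeping --- exactly as the paper intends.
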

\noindent
We note that \cref{empirical-approx-single-parameter} is a strict generalization of \cref{empirical-single-parameter}. Indeed, fixing ${C=1}$ in \cref{empirical-approx-single-parameter} yields \cref{empirical-single-parameter}, as a polynomial-time $\APPROX_X$ that guarantees revenue maximization ``up to a multiplicative factor of $1$'' is precisely a tractable~$\MAX_X$.

\section{Acknowledgments}

\begin{sloppypar}
Yannai Gonczarowski is supported by the Adams Fellowship Program of the Israel Academy of Sciences and Humanities.
The work of Noam Nisan was supported by ISF grant 1435/14 administered by the Israeli Academy of Sciences and by
Israel-USA Bi-national Science Foundation (BSF) grant number 2014389.
We thank Jamie Morgenstern and Tim Roughgarden for explaining to us some of the subtleties of the problem of efficient empirical revenue maximization in single-item auctions.
\end{sloppypar}

\bibliographystyle{abbrvnat}
\bibliography{empirical}

\appendix

\section{Examples Omitted from the Main Text}\label{examples}

\begin{example}\label{triangle}
Fix $H\eqdef2$. For every $\varepsilon>0$, there exists a product of regular distributions $F=F_1\times F_2\in\Delta\bigl([0,H]\bigr)^2$ such that for every $\varepsilon$-coarse $2$-bidder auction $A'$, there exists a valuation profile $(v_1,v_2)\in [0,H]^2$ such that $r^{A'}(v_1,v_2)<r^{\OPT}(v_1,v_2)-\nicefrac{1}{13}$, where $\OPT=\OPT(F)$ is the Myersonian auction that maximizes the revenue from $F$.

Indeed, let $\varepsilon>0$ and assume without loss of generality that $\varepsilon<\nicefrac{1}{3}$ (otherwise, divide~$\varepsilon$ by some large-enough integer).
Let $\epsfloor{1}$ denote the largest integer multiple of~$\varepsilon$ that is not greater than $1$.
Let $F_1=U\bigl([0,2]\bigr)$ and let $F_2\in\Delta\bigl(\bigl[\epsfloor{1}-\varepsilon,\epsfloor{1}\bigr]\bigr)$ be the continuous distribution with density function $f_2(x)=\nicefrac{2}{\varepsilon^2}\cdot\bigl(x-\epsfloor{1}+\varepsilon\bigr)$. By \cref{myerson}, $\OPT=(\phi_1,\phi_2)$ for the (ironed) virtual valuations $\phi_1,\phi_2$ corresponding to $F_1,F_2$, respectively.
It is well known for the uniform distribution $F_1$ that the corresponding (ironed) virtual valuation satisfies $\phi_1(v_1)=2\cdot v_1-2$ for every $v_1\in[0,2]$.
The crux of this example is that $F_2$ is defined so that despite the fact that it is tightly concentrated, the image of the corresponding (ironed) virtual valuation~$\phi_2$ is $\bigl(-\infty,\epsfloor{1}\bigr]$. In particular, since $\varepsilon<\nicefrac{1}{3}$, there exist $v_2,w_2\in\bigl[\epsfloor{1}-\varepsilon,\epsfloor{1}\bigr)$ such that $\phi_2(v_2)=\nicefrac{1}{3}$ and $\phi_2(w_2)=\nicefrac{2}{3}$. Note that $\phi_1(v_1)=\nicefrac{1}{2}$ for $v_1=\nicefrac{5}{4}$, and $\phi_1(w_1)=1$ for $w_1=\nicefrac{3}{2}$. Therefore $r^{\OPT}(v_1,v_2)=\nicefrac{7}{6}$ (bidder~$1$ wins and pays her minimal winning bid of $\phi_1^{-1}(\nicefrac{1}{3})=\nicefrac{7}{6}$) and $r^{\OPT}(w_1,w_2)=\nicefrac{4}{3}$ (bidder~$1$ wins and pays her minimal winning bid $\phi_1^{-1}(\nicefrac{2}{3})=\nicefrac{4}{3}$).

Let $A'$ be an $\varepsilon$-coarse $2$-bidder auction. We will show that the proposition is satisfied for either $(v_1,v_2)$ or $(w_1,w_2)$. Assume that $r^{A'}(v_1,v_2)\ge r^{\OPT}(v_1,v_2)-\nicefrac{1}{13}$. Therefore, we have that $r^{A'}(v_1,v_2)>1$.
Since $v_2,w_2\in\bigl[\epsfloor{1}-\varepsilon,\epsfloor{1}\bigr)$, we therefore have that $r^{A'}(v_1,w_2)=r^{A'}(v_1,v_2)>1$ as well. Therefore, for the valuation profile $(v_1,w_2)$, the winner is bidder $1$ (since bidder $2$ never pays more than her value). So, when bidder $2$ bids $w_2$, the minimal winning bid of bidder $1$ is at most $v_1=\nicefrac{5}{4}$, and so bidder $1$ does not pay more than $\nicefrac{5}{4}$ if she wins against $w_2$. Therefore,
\[r^{A'}(w_1,w_2)\le \nicefrac{5}{4}=\nicefrac{4}{3}-\nicefrac{1}{12}=r^{\OPT}(w_1,w_2)-\nicefrac{1}{12}<r^{\OPT}(w_1,w_2)-\nicefrac{1}{13},\]
as claimed.
\end{example}

\begin{remark}\label{triangle-round-down}
The construction underlying \cref{triangle} also demonstrates that for small enough $\varepsilon>0$, there exists a product of regular distribution $F=F_1\times F_2\in\Delta\bigl([0,H]\bigr)^2$ such that if $A'$ is the auction obtained by simply rounding-down each bid to the nearest integer multiple of~$\varepsilon$ and applying the allocation rule of $\OPT=\OPT(F)$ to the rounded-down bids (while adapting the payments to ensure truthfulness),\footnote{I.e., having each winning bidder pay her minimal winning bid.} then ${\Rev^{A'}(F)<\Rev^{\OPT}(F)-\nicefrac{1}{8}}$.

Indeed, let $\varepsilon>0$ and assume that $\varepsilon<\nicefrac{1}{4}$. For $F=F_1\times F_2$ as in \cref{triangle}, the ``rounded-down'' auction $A'$ is such that bidder~$1$ wins if and only if $v_1\ge\epsceil{1}$, where $\epsceil{1}$ is the smallest integer multiple of~$\varepsilon$ that is not less than $1$, and bidder $2$ almost surely\footnote{I.e., with probability $1$.} loses. Therefore, $\Rev^{A'}(F)=\frac{2-\epsceil{1}}{2}\cdot\epsceil{1}<\nicefrac{1}{2}\cdot(1+\varepsilon)$.
We lower-bound the revenue from $\OPT$ by considering an auction that sells to bidder~$1$ if $v_1\ge\nicefrac{3}{2}$, and otherwise to bidder~$2$.
The revenue from this auction is $\nicefrac{1}{4}\cdot\nicefrac{3}{2}+\nicefrac{3}{4}\cdot\bigl(\epsfloor{1}-\varepsilon\bigr)>\nicefrac{9}{8}-\nicefrac{3}{2}\cdot\varepsilon$.
Therefore,
\[
\Rev^{\OPT}(F)-\Rev^{A'}(F)>\nicefrac{9}{8}-\nicefrac{3}{2}\cdot\varepsilon-\nicefrac{1}{2}\cdot(1+\varepsilon)=\nicefrac{5}{8}-2\cdot\varepsilon>\nicefrac{1}{8},
\]
as claimed.
\end{remark}

\begin{example}\label{tight}
For every $H\ge\varepsilon>0$ and $\varepsilon>\eta>0$, there exists $F\in\Delta\bigl([0,H)\bigr)$ such that for every $\varepsilon$-coarse $1$-bidder auction $A'$, it is the case that $\Rev^{A'}(F)\le\Rev^{\OPT(F)}(F)-\eta$.

Indeed, let $v\eqdef\frac{\varepsilon+\eta}{2}>\eta>0$ and let $p\eqdef\frac{\eta}{v}\in(0,1)$. Let $F\in\Delta\bigl(\{0,v\}\bigr)$ that assigns probability $p$ to the value~$v$ and probability $1-p$ to the value $0$. Three truthful auctions (allocation rules) exist for~$F$:
\begin{enumerate}
\item
the bidder always wins (and pays her minimal winning bid of $0$),
\item
the bidder wins only if she bids $v$ (and pays her minimal winning bid of $v$),\footnote{In fact, any payment in $[0,v]$ can be charged in this case, but for our analysis we will be interested in the auction of this form that achieves maximum revenue, and this auction charges $v$ if the bidder wins.} and
\item
the bidder never wins.
\end{enumerate}
The revenue (from $F$) of the first and third auctions is $0$, while the revenue of the second auction is $p\cdot v = \eta>0$. Therefore, $\OPT(F)$ is the second of these three auctions. Note, however, that since $\{0,v\}\in[0,\varepsilon)$, only the first and third auctions are $\varepsilon$-coarse, so the revenue of any $\varepsilon$-coarse auction from $F$ is $0$, i.e., the loss in revenue of any $\varepsilon$-coarse auction, compared to $\OPT(F)$, is an additive $\eta$ (and in particular, at least an additive~$\eta$), as claimed.
\end{example}

\begin{example}\label{sharpp-complete}
Let $F_1=F_2=\cdots=F_n=U\bigl(\{0,1\}\bigr)$. For every $w_2,w_3,\ldots,w_n\in\NN$ and $w\in\NN$, we define the $n$-bidder (DSIC and ex-post IR) auction $A'(w_2,w_3,\ldots,w_n;w)$ to be the auction in which bidder~$1$ wins and pays $1$ if and only if both $v_1=1$ and $\sum_{i=2}^n w_i\cdot v_i=w$ (otherwise, no bidder wins). Computing the revenue $\Rev^{A'(w_2,\ldots,w_n;w)}(F_1\times\cdots\times F_n)$ is \mbox{\#P-complete}.

We first note that $A'(w_2,\ldots,w_n;w)$ is indeed DSIC and ex-post IR. Indeed, all bidders except bidder~$1$ have utility $0$ regardless of their reported bids, and so have no incentive to misreport their values or to not participate. Furthermore, bidder~$1$ has utility $0$ when bidding $0$ and when truthfully bidding $1$, and utility $\le0$ when untruthfully bidding $1$.

Now, by definition, the revenue $\Rev^{A'(w_2,\ldots,w_n;w)}(F_1\times\cdots\times F_n)$ equals half of the probability that $\sum_{i=2}^n w_i\cdot v_i=w$ for $v_2,v_3,\ldots,v_n$ that are drawn independently and uniformly from $\{0,1\}$. This probability, in turn, equals exactly $2^{-(n-1)}$ times the number of subsets of $\{w_2,\ldots,w_n\}$ whose sum equals $w$. To compute the latter, one must solve an instance of the counting analogue of the well known SUBSET-SUM problem, which is known to be \#P-complete.
\end{example}

\section{Proofs Omitted from the Main Text}\label{proofs}

\subsection{Proofs for Section~\refintitle{existence}}

\begin{proof}[Proof of \cref{one-by-one}]
Assume without loss of generality that $i=1$.
We prove each \lcnamecref{one-by-one-others} separately.

\vspace{.5em}\noindent\emph{Proof of \crefsubpartonly{one-by-one}{others}{always}:}
When $v_1\notin \bigl[j\cdot\varepsilon,(j+1)\cdot\varepsilon\bigr)$, the rounding of bids of bidder~$1$ in the $\varepsilon$-interval with index $j$ affects neither the determination of the winner, nor the minimal winning bid (which equals the payment) of bidders other than bidder $1$ (e.g., bidder~$i'$), as required.

\vspace{.5em}\noindent\emph{Proof of \crefsubpartonly{one-by-one}{others}{amortized}:}
\[\EE_{v_1\sim F_1|_j}\left[\EE_{A'}r_{i'}^{A'}(v_1,\ldots,v_n)\right]=
\EE_{v^1_j\sim F_1|_j}\left[r_{i'}^{A}(v^1_j,v_2,v_3,\ldots,v_n)\right]=
\EE_{v_1\sim F_1|_j}\left[r_{i'}^A(v_1,v_2,\ldots,v_n)\right].\]

%\vspace{.5em}
\noindent\emph{Proof of \crefsubpartonly{one-by-one}{i}{notin}:}
Assume that $w_1 \notin \bigl(j\cdot\varepsilon,(j+1)\cdot\varepsilon\bigr)$. Therefore, the minimal winning bid of bidder $1$ in $A'$, regardless of the realization of $v^1_j\in \bigl[j\cdot\varepsilon,(j+1)\cdot\varepsilon\bigr)$, remains $w_1$, and so for every $v_1\in [0,H]$, we surely have that $r_1^{A'}(v_1,\ldots,v_n)=r_1^A(v_1,\ldots,v_n)$.

\vspace{.5em}\noindent\emph{Proof of \crefsubpartonly{one-by-one}{i}{in}(\labelcref{one-by-one-i-in-always}):}
We reason by cases.
For every $v_1<j\cdot\varepsilon$, we have, regardless of the realization of $v^1_j\in \bigl[j\cdot\varepsilon,(j+1)\cdot\varepsilon\bigr)$, that bidder $1$ loses against $v_{-1}$ in both $A'$ and $A$.
For every $v_1\ge(j+1)\cdot\varepsilon$, we have, regardless of the realization of $v^1_j$, that bidder $1$ wins against $v_{-1}$ in both $A'$ and $A$.

\vspace{.5em}\noindent\emph{Proof of \crefsubpartonly{one-by-one}{i}{in}(\labelcref{one-by-one-i-in-amortized}):}
The winning probability of a bid $v_1\sim F_1|_j$ in $A$ is $\PP(v_1\ge w_1)=\PP_{v_1\sim F_1|_j}(v_1\ge w_1)$, and the winning probability of such a bid in $A'$ is $\PP(v^1_j\ge w_1)=\PP_{v^1_j\sim F_1|_j}(v^1_j\ge w_1)$,
i.e., the same probability, as required.

\vspace{.5em}\noindent\emph{Proof of \crefsubpartonly{one-by-one}{i}{in}(\labelcref{one-by-one-i-in-payment}):}
Since $w_1\in\bigl(j\cdot\varepsilon,(j+1)\cdot\varepsilon\bigr)$, we have that the minimal winning bid~$w'_1$ of bidder~$1$ in $A'$ is surely either $j\cdot\varepsilon$ or $(j+1)\cdot\varepsilon$.
Therefore, surely $|w'_1-w_1|<\varepsilon$.
\end{proof}

\begin{proof}[Proof of \cref{coarse-pij}]
By \crefpart{one-by-one}{others} the only revenue that is affected by the change from $A$ to $A'$ is the revenue from bidder $i$, and by \crefsubpart{one-by-one}{i}{notin}, this revenue is affected only when the minimal winning bid of bidder $i$ lies in $\bigl(j\cdot\varepsilon,(j+1)\cdot\varepsilon\bigr)$, and even then, by \crefsubpart{one-by-one}{i}{in}, this revenue is affected with the probability that bidder~$i$ wins, and even then it is changed by less than an additive $\varepsilon$. So, the change in the overall revenue is less than an additive~$\varepsilon\cdot p^i_j$, as required. 
\end{proof}

\begin{proof}[Proof of \cref{coarse-amortized}]
Assume without loss of generality that $i=1$.
Let $A''$ be the result of applying to $A$ (in arbitrary order) all $F_1$-randomized $\varepsilon$-rounding actions on all of bidder~$1$'s $\varepsilon$-intervals.
We start by noting that by \crefpart{one-by-one}{others},
\[\EE_{v_1\sim F_1|_j}\left[\EE_{A'} r_1^{A'}(v_1)\right]
=
\EE_{v_1\sim F_1|_j}\left[\EE_{A''} r_1^{A''}(v_1)\right].
\]
It therefore suffices to show that for every fixed $v_2,\ldots,v_n\in[0,H]$, it is the case that 
\[\left|\EE_{v_1\sim F_1|_j}\left[\EE_{A''} r_1^{A''}(v_1,\ldots,v_n)\right] - \EE_{v_1\sim F_1|_j} r_1^A(v_1,\ldots,v_n)\right|<\varepsilon\cdot\smashoperator[l]{\PP_{v_1\sim F_1|_j}}[\text{$1$ wins in $A$ against $v_{-1}$}].\]
Let $w_1$ be the minimal winning bid of bidder $1$ in $A$ when the other bidders bid $v_2,\ldots,v_n$. Let~$k$ be such that $w_1\in \bigl[k\cdot\varepsilon,(k+1)\cdot\varepsilon\bigr)$,
and let $A'''$ be the result of applying the \mbox{$F_1$-randomized} $\varepsilon$-rounding action on bidder $1$'s $\varepsilon$-interval $k$ to $A$. By
thinking of $A''$ as having been constructed from $A$ by first applying the randomized rounding action for~$k$ to yield $A'''$ (following which, the minimal winning bid of bidder $1$ surely becomes either $k\cdot\varepsilon$ or $(k+1)\cdot\varepsilon$) and then applying all randomized rounding actions for intervals other than $k$, we obtain by
 \crefsubpart{one-by-one}{i}{notin} that
\[
\EE_{v_1\sim F_1|_j} \left[\EE_{A''} r_1^{A''}(v_1,\ldots,v_n)\right]=
\EE_{v_1\sim F_1|_j} \left[\EE_{A'''} r_1^{A'''}(v_1,\ldots,v_n)\right].
\]
We conclude the proof as by \crefsubpart{one-by-one}{i}{in} we obtain that
\[
\left|\EE_{v_1\sim F_1|_j} \left[\EE_{A'''} r_1^{A'''}(v_1,\ldots,v_n)\right]-\EE_{v_1\sim F_1|_j} r_1^A(v_1,\ldots,v_n)\right|<\varepsilon\cdot\smashoperator[l]{\PP_{v_1\sim F_1|_j}}[\text{$1$ wins in $A$ against $v_{-1}$}].
\]

(We note that when $w_1=k\cdot\varepsilon$, \crefsubpart{one-by-one}{i}{notin} implies that there is surely no difference between the two revenues above, and so \cref{coarse-amortized} can in fact be slightly strengthened, by replacing $p$ in the statement of this \lcnamecref{coarse-amortized} with the probability that $i$ wins in~$A$ \emph{and pays a price that is not an integer multiple of $\varepsilon$} when bidding $v_i\sim F_i|_j$, when the remaining bids are distributed according to $F_{-i}$.)
\end{proof}

\begin{proof}[Proof of \cref{randomized-rounding}]
By the triangle inequality and by \cref{coarse-amortized},
\[\left|\EE_{A'}\Rev^{A'}(F)-\Rev^A(F)\right|\le
\sum_{i=1}^n\left|\EE_{v_i\sim F_i} \left[\EE_{A'} r_i^{A'}(v_i)-r_i^{A}(v_i)\right]\right|
<
\sum_{i=1}^n\varepsilon\cdot\PP_F[\text{$i$ wins}]
=
p\cdot\varepsilon,\]
as required.

Alternatively, we may prove the result via \cref{coarse-pij}:
We start with $A$ and gradually apply, for all $i$ and $j$, the $F_i$-randomized $\varepsilon$-rounding action on bidder $i$'s interval $j$, to obtain $A'$.
By \cref{coarse-pij}, when the randomized rounding action for a pair $(i,j)$ is applied, the revenue changes by less than an additive $p^i_j\cdot\varepsilon$, and so less than an additive~$\varepsilon\cdot \PP_{F}\left[\text{$i$ wins}\right]$ for all randomized rounding actions for each bidder $i$, for a total of less than an additive $p\cdot\varepsilon$ for the entire randomized rounding rule. (Note that~$p^i_j$, in expectation over all randomized rounding actions already applied to bidders other than~$i$, does not change during the application of the randomized rounding action for any bidder-index pair other than $(i,j)$.)

(We note that regardless of whether we prove \cref{randomized-rounding} via \cref{coarse-pij} or via \cref{coarse-amortized},\footnote{For the latter, by the note concluding its proof.} \cref{randomized-rounding} can in fact be slightly strengthened, by replacing $p$ in the statement of this \lcnamecref{randomized-rounding} with the probability that some bidder wins in $A$ \emph{and pays a price that is not an integer multiple of $\varepsilon$} when the profile of bids is distributed according to~$F$.)
\end{proof}

\subsection{Proofs for Section~\refintitle{convergence}}

\begin{proof}[Proof of \cref{converge-one}]
We will show that with probability at least $1\!-\!\delta$, it is the case that $\bigl|\EE_{\hat{F}}r - \EE_{F}r\bigr|<\varepsilon$ for any fixed random variable $r:[0,H]^n\rightarrow[0,H]$. The \lcnamecref{converge-one} will follow by setting $r\eqdef r^A$.

Let $(v^1_1,\ldots,v^1_n),\ldots,(v^t_1,\ldots,v^t_n)$ be $t$ independently drawn random tuples sampled from $F$.
Let $k_1\eqdef1$ and for every $i\in\{2,3,\ldots,n\}$, fix $k_i\in\{1,\ldots,t\}$. We note that $(v^{k_1+j}_1,\ldots,v^{k_n+j}_n)$ for $j=1,\ldots,t$ (where addition of $k_i$ and $j$ wraps around from $t$ to $1$) are $t$ \emph{independent} random samples from $F_1\times\cdots\times F_n=F$.
Therefore, letting $r_{k_2,\ldots,k_n}\eqdef \frac{1}{t}\sum_{j=1}^t r(v^{k_1+j}_1,\ldots,v^{k_n+j}_n)$, we have by the Chernoff-Hoeffding Inequality that
\[\PP\Bigl(\bigl|r_{k_2,\ldots,k_n}-\EE_{F}r\bigr|\ge \varepsilon\Bigr) \le 2\exp\left(-\frac{2t\varepsilon^2}{H^2}\right).\]
Noting that by definition $\EE_{\hat{F}}r=\frac{1}{t^{n-1}}\sum_{k_2,\ldots,k_n\in\{1,\ldots,t\}}r_{k_2,\ldots,k_n}$, taking the union bound, we have that
\[\PP\Bigl(\bigl|\EE_{\hat{F}}r-\EE_{F}r\bigr|\ge\varepsilon\Bigr) \le t^{n-1}\cdot2\exp\left(-\frac{2t\varepsilon^2}{H^2}\right).\]
We conclude the proof as there exists $t=\poly(H,n,\nicefrac{1}{\varepsilon},\log\nicefrac{1}{\delta})$ such that \[t^{n-1}\cdot2\exp\left(-\frac{2t\varepsilon^2}{H^2}\right)\le\delta.\qedhere\]
\end{proof}

\begin{proof}[Proof of \cref{uniform}]
Let $\delta'\eqdef\frac{\delta}{|S^n_{\varepsilon}|+1}$. By \cref{small}, $\bigl|S^n_{\varepsilon}\bigr|\le\exp\bigl(\poly(H,n,\nicefrac{1}{\varepsilon})\bigr)$. Therefore, by \cref{converge-one} there exists $t=\poly(H,n,\nicefrac{1}{\varepsilon},\log\nicefrac{1}{\delta'})=\poly(H,n,\nicefrac{1}{\varepsilon},\log\nicefrac{1}{\delta})$ such that for every $A\in S^n_{\varepsilon}\cup\{\OPT(F)\}$ separately, with probability at least~$1-\delta'$, we have that
\[\bigl|\Rev^A(F) - \Rev^A(\hat{F})\bigr|<\varepsilon.\]
By the union bound, we therefore have that with probability at least~$1-\bigl(|S^n_{\varepsilon}|+1\bigr)\cdot\delta'=1-\delta$, this holds for \emph{all} auctions $A \in S^n_{\varepsilon}\cup\bigl\{\OPT(F)\bigr\}$ simultaneously,
as required.
\end{proof}

\subsection{Proofs for Section~\refintitle{single-parameter}}

\begin{proof}[Proof sketch of \cref{randomized-rounding-single-parameter}]
Similar to the proof of \cref{randomized-rounding}. Roughly speaking, instead of having a revenue change of less than $\varepsilon$ whenever bidder $i$ wins and her minimal winning bid is inside the open $\varepsilon$-interval to which the rounding action is applied (as in \cref{one-by-one}), it is the case that if $f$ is the fraction of the winnings of bidder $i$ whose threshold prices are inside this $\varepsilon$-interval, then the revenue change is less than $f'\cdot\varepsilon$, where~$f'$ is the ``part of $f$'' that $i$ wins.
\end{proof}

\begin{proof}[Proof of \cref{exists-single-parameter}]
Analogous to the proof of \cref{exists}, with \cref{randomized-rounding} replaced with \cref{randomized-rounding-single-parameter}.
\end{proof}

\begin{proof}[Proof of \cref{efficient-randomized-single-parameter}]
Let $A\eqdef\bigl((\phi_i)_{i\in N};\MAX_X\bigr)$. The algorithm is as follows, for $D,E=\poly(H,\nicefrac{1}{\varepsilon},\log\nicefrac{1}{\delta})$ that we define below:
\begin{enumerate}
\item
For every $e\in\{1,\ldots,E\}$:
\begin{itemize}
\item
Draw a valuation profile $(v^e_1,\ldots,v^e_n)\sim\hat{F}$.
\end{itemize}
\item
For every $d\in\{1,\ldots,D\}$:
\begin{enumerate}
\item
Draw a (deterministic) $\varepsilon$-rounding rule from (the distribution specified by) the $\hat{F}$-randomized $\varepsilon$-rounding rule.
\item
Apply this $\varepsilon$-rounding rule to $A$ to obtain an ($\varepsilon$-coarse) auction $A'_d$.
\item
Initialize $r_d\gets 0$.
\item
For every $e\in\{1,\ldots,E\}$:
\begin{itemize}
\item
Update\footnote{The explicit formula of \cite{Myerson} for the (truthful) payments in a Myersonian auction yields that given an $\varepsilon$-coarse Myersonian auction $A$ for a tractable single-parameter environment, and given a valuation profile $(v_1,\ldots,v_n)\in\supp\hat{F}$, it is possible in our setting to deterministically compute not only the allocation of $A$ in time $\poly(H,n)$, but also the associated payments in time $\poly(H,n,t)$.}
$r_d \gets r_d + \frac{r^{A'_d}(v^e_1,\ldots,v^e_n)}{E}$.
\end{itemize}
\end{enumerate}
\item
Output (the ironed virtual valuations of) the auction $A'_d$ for $d=\arg\Max_{d\in\{1,\ldots,D\}} r_d$.
\end{enumerate}

We now choose $D$ and $E$. We would like to choose $D$ such that with high probability, at least one of the drawn auctions $A'_d$ has revenue not significantly lower than the expected revenue of the $\hat{F}$-randomized $\varepsilon$-rounding of $A$, which we denote by $A'$. More accurately, we would like $D$ to satisfy the following:
\begin{equation}\label{efficient-single-parameter-D}
\PP\left(\tfrac{1}{D}\sum_{d=1}^D\Rev^{A'_d}(\hat{F}) \le \EE_{A'}\Rev^{A'}(\hat{F}) -\nicefrac{\varepsilon}{3}\right) \le \frac{\delta}{2}.
\end{equation}
By the Chernoff-Hoeffding Inequality, we have that
\[\PP\left(\tfrac{1}{D}\sum_{d=1}^D\Rev^{A'_d}(\hat{F}) \le \EE_{A'}\Rev^{A'}(\hat{F})-\nicefrac{\varepsilon}{3}\right) \le \exp\left(-\frac{2D\varepsilon^2}{9H^2}\right).\]
Therefore, by choosing $D\eqdef\frac{9H^2}{2\varepsilon^2}\log{\frac{2}{\delta}}=\poly(H,\nicefrac{1}{\varepsilon},\log\nicefrac{1}{\delta})$, we have that \cref{efficient-single-parameter-D} is satisfied.

We would now like to choose $E$ such that with high probability, the estimated revenue~$r_d$ for each $A'_d$ is close to the true revenue of $A'_d$ from $\hat{F}$. More accurately, we would like~$E$ to satisfy the following, for every $d\in\{1,\ldots,D\}$:
\begin{equation}\label{efficient-single-parameter-E}
\PP\Bigl(\bigl|r_d-\Rev^{A'_d}(\hat{F})\bigr| \ge \nicefrac{\varepsilon}{3}\Bigr) \le \frac{\delta}{2D}.
\end{equation}
By the Chernoff-Hoeffding Inequality, we have that
\[\PP\Bigl(\bigl|r_d-\Rev^{A'_d}(\hat{F})\bigr| \ge \nicefrac{\varepsilon}{3}\Bigr) \le 2\exp\left(-\frac{2E\varepsilon^2}{9H^2}\right).\]
Therefore, by choosing $E\eqdef\frac{9H^2}{2\varepsilon^2}\log\frac{4D}{\delta}=\poly(H,\nicefrac{1}{\varepsilon},\log\nicefrac{1}{\delta})$, we have that \cref{efficient-single-parameter-E} is satisfied for every $d\in\{1,\ldots,D\}$.

Having chosen $D$ and $E$, we now prove the correctness of the algorithm.
By taking the union bound over \cref{efficient-single-parameter-D} and over \cref{efficient-single-parameter-E} for all $d\in\{1,\ldots,D\}$, we obtain that with probability at least $1-\delta$,
both of the following hold:
\begin{itemize}
\item
For every $d\in\{1,\ldots,D\}$ it is the case that $\bigl|r_d-\Rev^{A'_d}(\hat{F})\bigr| < \nicefrac{\varepsilon}{3}$, and
\item
There exists $d'\in\{1,\ldots,D\}$ such that $\Rev^{A'_{d'}}(\hat{F})>\EE_{A'}\Rev^{A'}(\hat{F})-\nicefrac{\varepsilon}{3}$.
\end{itemize}
Let $d\in\{1,\ldots,D\}$ be such that $r_d$ is highest (so the algorithm outputs $A'_d$). In particular, $r_d\ge r_{d'}$. Therefore,
\begin{multline*}
\Rev^{A'_d}(\hat{F}) > r_d - \nicefrac{\varepsilon}{3} \ge r_{d'} - \nicefrac{\varepsilon}{3} > \Rev^{A'_{d'}}(\hat{F})-2\cdot\nicefrac{\varepsilon}{3} > \\
> \EE_{A'}\Rev^{A'}(\hat{F})-3\cdot\nicefrac{\varepsilon}{3}=\EE_{A'}\Rev^{A'}(\hat{F})-\varepsilon > \Rev^A(\hat{F})-(W_X+1)\cdot\varepsilon,
\end{multline*}
where the last inequality is by \cref{randomized-rounding-single-parameter}.

Finally, we estimate the number of random bits used by the algorithm. To draw $E$~valuation profiles, the algorithm requires $O(E\cdot n\cdot\log t)$ random bits. To draw $D$~roundings of $A'$, the algorithm requires $O\bigl(D\cdot n\cdot \nicefrac{H}{\varepsilon}\cdot\log t\bigr)$ random bits. Therefore, in total the algorithm uses at most $\poly(H,n,\nicefrac{1}{\varepsilon},\log\nicefrac{1}{\delta},\log t)$ random bits, as required.
\end{proof}

\begin{proof}[Proof of \cref{efficient-single-parameter}]
Let $b$ be as in \cref{efficient-randomized-single-parameter} for (the given) $t$ and~$\varepsilon$, and for guaranteeing success with probability at least $1-\nicefrac{\delta}{2}$.
We start by finding a probability~$\eta$ such that both of the following hold:
\begin{itemize}
\item
On the one hand, if for every $i\in N$ the distribution $F_i$ has an atom $v_i\in[0,H]$ with probability $F_i(v_i)\ge1-\eta$, then the auction that extracts the entire social welfare of the outcome that maximizes the social welfare for $(v_1,\ldots,v_n)$, maximizes the overall revenue from~$F$ up to an additive $\varepsilon$. Furthermore, in this case with probability at least $1\!-\!\delta$ each such atom $v_i$ can be recovered by taking the most common sample value out of $s_v=\poly(\log n,\log\nicefrac{1}{\delta})$ (chosen below) independent samples from $F_i$.
\item
On the other hand, if the above does not hold (i.e., if there exists $i\in N$ such that for every $v_i\in[0,H]$ it is the case that $F_i(v_i)<1-\eta$), then one can with probability at least $1\!-\!\nicefrac{\delta}{2}$ extract $b$ random bits from $s_b=\poly(H,n,\nicefrac{1}{\varepsilon},\log\nicefrac{1}{\delta},\log t)$ (chosen below) independent pairs of (independent) samples from $F$.
\end{itemize}

Assume first that for every $i\in N$ there exists $v_i\in[0,H]$ such that $F_i(v_i)\ge1-\eta$. Taking the union bound, we have that $F(v_1,\ldots,v_n)\ge1-n\eta$. For every $i\in N$, let
\begin{equation}\label{x-welfare-extractor}
\phi'_i(v'_i)\eqdef\begin{cases} v_i & v'_i \ge v_i, \\ -\infty & v'_i< v_i.\end{cases}
\end{equation}
We note that
\begin{equation}\label{efficient-single-parameter-x}
r^{((\phi'_i)_{i\in N};\MAX_X)}(v_1,\ldots,v_n)=\sum_{i=1}^n x_i\cdot v_i\ge r^{OPT(F;X)}(v_1,\ldots,v_n),
\end{equation}
where $(x_1,\ldots,x_n)\in X$ is the outcome that maximizes the social welfare $\sum_{i=1}^n x_i\cdot v_i$ for $(v_1,\ldots,v_n)$. (The inequality in \cref{efficient-single-parameter-x} holds as the revenue can never exceed the social welfare.) Therefore, by definition of $(v_1,\ldots,v_n)$ we have that
\[
\Rev^{((\phi'_i)_{i\in N};\MAX_X)}(F) \ge \Rev^{OPT(F;X)}(F)-n\eta H.
\]
Therefore, choosing any $\eta<\frac{\varepsilon}{nH}$, we obtain that if for every $i\in N$ there exists $v_i\in[0,H]$ such that $F_i(v_i)\ge1-\eta$, then $\Rev^{((\phi'_i)_{i\in N};\MAX_X)}(F)>\Rev^{OPT(F;X)}(F)-\varepsilon$, for $(\phi'_i)_{i\in N}$ as defined in \cref{x-welfare-extractor}. We choose $\eta\eqdef\frac{\varepsilon}{(n+3)H}<\frac{\varepsilon}{nH}$ as this also conveniently guarantees that $\eta\le\nicefrac{1}{4}$, which we will use later. (If $\varepsilon\ge H$, then any revenue is at most $\varepsilon$ and the \lcnamecref{efficient-single-parameter} is trivially satisfied.)

Still under the assumption that for every $i\in N$ there exists $v_i\in[0,H]$ such that $F_i(v_i)\ge1-\eta$, let $\freq_{v_i}$ be (a random variable that equals) the fraction of samples that equal $v_i$ out of $s_v$ independent samples from~$F_i$. Recall that $\eta\le\nicefrac{1}{4}$. By the Chernoff-Hoeffding Inequality, we have that
\[\PP\bigl(\freq_{v_i}\le\nicefrac{1}{2})\le\PP\Bigl(\freq_{v_i} \le 1-\eta-\nicefrac{1}{4}\Bigr) \le \exp\left(-\frac{s_v}{8}\right),\]
and so taking $s_v\eqdef\lceil8\log\nicefrac{n}{\delta}\rceil=\poly(\log n,\log \nicefrac{1}{\delta})$ guarantees that $\PP\bigl(\freq_{v_i}>\nicefrac{1}{2})\ge1-\nicefrac{\delta}{n}$,
and by the union bound we have that with probability at least $1-\delta$, for every $i\in N$ the most common sample value out of $s_v$ independent samples from $F_i$ is $v_i$.

\medskip

Assume now that there exists $i\in N$ such that for every $v_i\in[0,H]$ it is the case that $F_i(v_i)<1-\eta$. We now show that there exists $s_b=\poly(H,n,\nicefrac{1}{\varepsilon},\log\nicefrac{1}{\delta},t,b)$ such that with probability at least $1-\nicefrac{\delta}{2}$, from $s_b$ independent pairs of samples from $F_i$, at least $b$ pairs are of two nonidentical samples.
Since for every $v_i\in[0,H]$ it is the case that $F_i(v_i)<1-\eta$, there exists a partition of $[0,H]$ into two disjoint sets~$V,W$ such that $F_i(V)<1-\eta$ and $F_i(W)<1-\eta$.
As we will choose $s_b\eqdef s'\cdot b$, by the union bound we upper bound the probability that from $s$ independent pairs of samples from~$F_i$ less than $b$ pairs are of two nonidentical samples, by $b$ times the probability that $s'$ independent pairs of samples from~$F_i$ are all of identical samples. We upper bound the latter probability by the probability that $2s'$ independent samples from $F_i$ lie entirely in one of the sets $V,W$, which is less than $2(1-\eta)^{2s'}$. Choosing the smallest integer $s'$ such that $2(1-\eta)^{2s'}\le\frac{\delta}{2b}$ therefore guarantees that the probability that from $s_b$ independent pairs of samples from $F$ less than $b$ pairs are of two nonidentical samples, is at most $\nicefrac{\delta}{2}$. It therefore remains to estimate $s'$ (and hence $s_b$). By definition, $s'$ is the smallest integer such that $2s'\cdot\log(1-\eta)\le\log\frac{\delta}{4b}$, and so
\begin{multline*}
s' = \left\lceil\frac{\log\frac{\delta}{4b}}{2\log(1-\eta)}\right\rceil\le
\left\lceil\frac{\log\frac{\delta}{4b}}{-2\eta}\right\rceil=
\left\lceil\frac{\log\frac{4b}{\delta}}{2\eta}\right\rceil=
\left\lceil\frac{\log(4b)+\log\nicefrac{1}{\delta}}{2\eta}\right\rceil=\\
=\left\lceil\frac{(n+3)H\cdot\bigl(\log(4b)+\log\nicefrac{1}{\delta}\bigr)}{2\varepsilon}\right\rceil=
\poly(H,n,\nicefrac{1}{\varepsilon},\log\nicefrac{1}{\delta},\log b),
\end{multline*}
and so $s_b=s'\cdot b=\poly(H,n,\nicefrac{1}{\varepsilon},\log\nicefrac{1}{\delta},\log t)$.

\medskip

We are now ready to present the deterministic ``$\varepsilon$-rounding'' algorithm. Let $s\eqdef\Max\{2s_b,s_v,t\}=\poly(H,n,\nicefrac{1}{\varepsilon},\log\nicefrac{1}{\delta},t)$. The algorithm is as follows:
\begin{steps}
\item
Split the given $s$ samples from $F$ into $\lfloor\nicefrac{s}{2}\rfloor$ pairs of samples.
\item
If less than $b$ of the pairs are of two nonidentical samples:
\begin{steps}
\item
For each $i\in N$, let $v_i$ be the most common out of the $i$th-coordinate values of the $s$ samples.
\item\label{return-x-welfare-extractor}
Output $e=0$ and the functions $(\phi'_i)_{i\in N}$ where for every $i$,\ \ $\phi'_i$ is defined only for $v_i$, with $\phi'_i(v_i)\eqdef v_i$.\footnote{Recall that by \cref{extend-beyond-support}, as an ironed virtual valuation we interpret this $\phi'_i$ as coinciding with $\phi'_i$ as defined in \cref{x-welfare-extractor}.}
\end{steps}
\item
Else:
\begin{steps}
\item\label{compute-random-bits}
From each of the first $b$ of the pairs of samples that are of two nonidentical samples, compute a single (random) bit as follows: let $i$ be the first index such that the $i$th coordinates of these two samples differ; the desired (random) bit is $1$ if the $i$th coordinate of the first sample is smaller than the $i$th coordinate of the second sample, and $0$ otherwise.
\item\label{run-derandomized}
Deterministically run the algorithm from \cref{efficient-randomized-single-parameter} on $(\phi_i)_{i\in N}$ using the $b$ (random) bits computed above instead of the (at most) $b$ random bits required by the algorithm.
\item\label{return-derandomized}
Output $e=1$ and the functions $(\phi'_i)_{i\in N}$ that were output by the algorithm from \cref{efficient-randomized-single-parameter} when run in \cref{run-derandomized}.
\end{steps}
\end{steps}

We start by noting that if the above algorithm outputs $e=1$, then by \cref{efficient-randomized-single-parameter}, $\bigl((\phi'_i)_{i\in N};\MAX_X\bigr)$ is an $\varepsilon$-rounding of $\bigl((\phi_i)_{i\in N};\MAX_X\bigr)$.
To prove that the algorithm is correct with probability at least $1-\delta$, we first note that the algorithm can terminate in precisely one of two ways:
\begin{enumerate}
\item
In \cref{return-x-welfare-extractor}, since less than $b$ of the $\lfloor\nicefrac{s}{2}\rfloor$ pairs of samples are of two nonidentical samples.
\item
In \cref{return-derandomized}, since at least $b$ of the $\lfloor\nicefrac{s}{2}\rfloor$ pairs of samples are of two nonidentical samples.
\end{enumerate}
If the algorithm terminates in \cref{return-derandomized}, then since the $b$ bits computed in \cref{compute-random-bits} are indeed random and independent \citep{vN51}, and so \cref{efficient-randomized-single-parameter} guarantees that with probability at least $1-\nicefrac{\delta}{2}$, we have
\[\Rev^{((\phi'_i)_{i\in N};\MAX_X)}(\hat{F}) > \Rev^{((\phi_i)_{i\in N};\MAX_X)}(\hat{F})-(W_X+1)\cdot\varepsilon,\]
and therefore if the algorithm terminates in \cref{return-derandomized}, then it succeeds with at least this probability.
To complete the proof of correctness, we reason by cases, according to whether or not for every $i\in N$ there exists $v_i\in[0,H]$ such that $F_i(v_i)\ge1-\eta$.

We start by analyzing the case in which for every $i\in N$ there exists $v_i\in[0,H]$ such that $F_i(v_i)\ge1-\eta$. In this case, if the algorithm terminates in \cref{return-x-welfare-extractor}, then as shown above, with probability at least $1-\delta$ we correctly identify $v_i$ using $s\ge s_v$ samples, and so $e=0$ and (as shown above) $\Rev^{((\phi'_i)_{i\in N};\MAX_X)}(F)>\Rev^{OPT(F;X)}(F)-\varepsilon$, as required. So, regardless of whether the algorithm terminates in \cref{return-x-welfare-extractor} or in \cref{return-derandomized}, with probability at least $1-\delta$ the algorithm succeeds, as required.

It remains to analyze the case in which there exists $i\in N$ such that for every $v_i\in[0,H]$ it is the case that $F_i(v_i)<1-\eta$. In this case, as shown above, the probability that less than $b$ of the $\lfloor\nicefrac{s}{2}\rfloor\ge s_b$ pairs of samples have two nonidentical $i$th coordinate-values is at most $\nicefrac{\delta}{2}$. Therefore, with probability at least $1-\nicefrac{\delta}{2}$ the algorithm terminates in \cref{return-derandomized}, and as shown above, if this happens, then with probability at least $1-\nicefrac{\delta}{2}$, the algorithm succeeds. Therefore, taking the union bound, in this case with probability at least $1-\delta$ we have that $e=1$ and $\Rev^{((\phi'_i)_{i\in N};\MAX_X)}(\hat{F}) > \Rev^{((\phi_i)_{i\in N};\MAX_X)}(\hat{F})-(W_X+1)\cdot\varepsilon$, as required.
\end{proof}

\begin{proof}[Proof of \cref{empirical-single-parameter}]
The proof is analogous to that of \cref{empirical}.
An analogous version of \cref{uniform} can be proven for $S^X_{\varepsilon}$ rather than $S^n_{\varepsilon}$ via an analogous proof, replacing \cref{small} with the assumption that $\bigl|S^X_{\varepsilon}\bigr|\le\exp\bigl(\poly(H,n,\nicefrac{1}{\varepsilon})\bigr)$. (For position and matroid environments, no replacement is needed, and for arbitrary deterministic environments the assumption holds by \cref{small-single-parameter}.) Let $t$ be as in this analogue of \cref{uniform} for $S^X_{\varepsilon/(W_X+3)}$, for guaranteeing success with probability at least $1-\nicefrac{\delta}{2}$. Let~$s$ be as in \cref{efficient-single-parameter} for guaranteeing success with probability at least $1-\nicefrac{\delta}{2}$.
By \cref{elkind,efficient-single-parameter} (and by taking a union bound over two failure probabilities of at most $\nicefrac{\delta}{2}$ each), all that we have to show is that if the algorithm of \cref{efficient-single-parameter} outputs $e=1$ and an $\frac{\varepsilon}{W_X+3}$-rounded auction $A'\in S^n_{\varepsilon/(W_X+3)}$ that maximizes the revenue from $\hat{F}$ up to less than an additive $(W_X+1)\cdot\frac{\varepsilon}{W_X+3}$, then this auction also maximizes the revenue from~$F$ up to less than an additive $\varepsilon$. Showing this is analogous to the conclusion of the proof of \cref{empirical}.
\end{proof}

\begin{proof}[Proof of \cref{efficient-approx-single-parameter}]
Analogous to the proof of \cref{efficient-single-parameter,efficient-randomized-single-parameter}, replacing every occurrence of $\MAX_X$ with $\APPROX_X$,
replacing every occurrence of $\Rev^{OPT(F;X)}(F)$ with $\frac{\Rev^{OPT(F;X)}(F)}{C}$, and with \cref{efficient-single-parameter-x} becoming
\[
r^{((\phi'_i)_{i\in N};\APPROX_X)}(v_1,\ldots,v_n)=\sum_{i=1}^n x_i\cdot v_i\ge \frac{r^{OPT(F;X)}(v_1,\ldots,v_n)}{C},
\]
where $(x_1,\ldots,x_n)\in X$ is the outcome that $\APPROX_X$ outputs given $(v_1,\ldots,v_n)$, which maximizes the social welfare $\sum_{i=1}^n x_i\cdot v_i$ for $(v_1,\ldots,v_n)$ up to a multiplicative factor of~$C$.
\end{proof}

\begin{proof}[Proof of \cref{empirical-approx-single-parameter}]
Analogous (see \cref{intractable} for details) to the proof of \cref{empirical-single-parameter},
with $S^X_{\varepsilon/(W_X+3)}$ replaced with $S^{\APPROX_X}_{\varepsilon/(W_X+3)}$,
and with \cref{efficient-single-parameter} replaced with \cref{efficient-approx-single-parameter}.
\end{proof}

\section{Simple Efficient Empirical Revenue Maximization for I.I.D.\ Bidders}\label{iid}

Possibly the most na\"{i}ve way to construct a small set of auctions is to consider only auctions for which all of the parameters of the auction are specified with some fixed precision, and to learn an auction from this set for some empirical distribution by computing the optimal auction for that distribution, and then rounding its parameters to obtain an auction from this set. The best possible hope in this case would be that on each valuation profile for the bidders, the rounded auction loses less than an additive~$\varepsilon$ in revenue.
While, as shown in \cref{triangle} in \cref{examples}, such an approach turns out too good to be true in the general case of non-i.i.d.\ product distributions,
in this \lcnamecref{iid} we show that remarkably, this approach succeeds for arbitrary (even irregular) i.i.d.\ product distributions, reproving the existence of a polynomial-time learning algorithm from \cite{ironing-in-the-dark} via a considerably simpler argument and a more natural (albeit not dissimilar) algorithm,
and highlighting the difference between the general case considered in the main text and the i.i.d.\ case.

\subsection{Optimal Auctions}

We once again make very weak use of \citeauthor{Myerson}'s (\citeyear{Myerson}) characterization of optimal auctions for i.i.d.\ product distributions, and once again only present this characterization, which is a special case of that presented in \cref{definitions}, to the extent required by our analysis.

\begin{definition}[Second-Price Auction with Reserve Price and Ironed Intervals, \citealp{Myerson}]
A \emph{second-price auction with reserve price and ironed intervals} is a pair $(p,I)$, where $p\in[0,H]$ is called the \emph{reserve price}, and $I\subseteq\bigl\{[\ell,h)~\big|~p\le\ell<h\le H\bigr\}$ are (possibly infinitely many) pairwise-disjoint intervals called \emph{ironed intervals}.
In this auction, there is a winner unless all bidders bid below the reserve price, and the winner is the bidder with lowest index among those with the highest bid, unless the highest bid lies in the same ironed interval as lower bids, in which case the winner is the bidder with lowest index among those whose bid lies in this ironed interval; the winner pays her minimal winning bid.\footnote{Once again, this auction can also be made symmetric, by choosing the winner uniformly at random among all bidders with highest bid / whose bids lie in the same ironed interval as the highest bid, and adapting the payments accordingly. The revenue is the same either way. As noted above, in this paper we use lexicographic ordering for simplicity.} 
\end{definition}

\cite{Myerson} proved that for every continuous i.i.d.\ product distribution, there exists a second-price auction with reserve price and ironed intervals that obtains the optimal revenue. \cite{Elkind} showed the same for discrete i.i.d.\ product distributions, giving an efficient algorithm for computing this optimal auction. These results are special cases of \cref{myerson,elkind}, respectively.

\begin{theorem}[\citealp{Myerson}]\label{myerson-iid}
For every distribution $F\in\Delta\bigl([0,H]\bigr)$, there exists a second-price auction with reserve price and ironed intervals that for every $n\in N$ achieves maximum revenue from $F^n$ among all possible auctions. 
\end{theorem}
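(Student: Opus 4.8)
The plan is to obtain \cref{myerson-iid} as a direct specialization of \cref{myerson} to the i.i.d.\ case; the only substantive observation is that a \emph{symmetric} Myersonian auction — one whose $n$ ironed virtual valuations are all equal to a single nondecreasing function $\phi\colon[0,H]\to\mathbb{R}$ — is literally the same object as a second-price auction with reserve price and ironed intervals. First I would fix an arbitrary $n$ and apply \cref{myerson} to $F^n=F_1\times\cdots\times F_n$ with $F_1=\cdots=F_n=F$. This yields an optimal Myersonian auction $(\phi_i)_{i\in N}$; since \cref{myerson} guarantees that each $\phi_i$ depends only on $F_i$, and the $F_i$ all coincide, all the $\phi_i$ coincide with one nondecreasing function $\phi$ that moreover does \emph{not} depend on $n$. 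So it suffices to produce, from $\phi$ alone, a reserve price $p$ and a family $I$ of ironed intervals such that the auction $(p,I)$ and the symmetric Myersonian auction $(\phi)_{i\in N}$ produce the same outcome on every profile in $[0,H]^n$: since this would hold for every $n$ with the \emph{same} $(p,I)$, and $(\phi)_{i\in N}$ is optimal for $F^n$ by \cref{myerson}, the theorem follows.

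Next I would set $p\eqdef\inf\{v\in[0,H]\mid\phi(v)\ge0\}$ (taking $p$ to be any number exceeding $H$ if this set is empty), so that, by monotonicity of $\phi$ together with the right-continuity convention of \cref{extend-beyond-support}, $\{v\mid\phi(v)\ge0\}=[p,H]$; and I would let $I$ be the collection of maximal intervals $[\ell,h)$ with $p\le\ell<h\le H$ on which $\phi$ is constant (where a constancy interval straddling $p$ is first intersected with $[p,H]$). Maximal constancy intervals of a nondecreasing function are automatically pairwise disjoint, so $(p,I)$ is a legitimate second-price auction with reserve price and ironed intervals.

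Then I would check, against the description of the Myersonian auction in \cref{myersonian}, the three ingredients of the outcome. (i)~There is a winner iff some $\phi(v_i)\ge0$, i.e.\ iff some $v_i\ge p$, matching ``there is a winner unless all bidders bid below the reserve price.'' (ii)~When the highest bid $m$ satisfies $m\ge p$: if $m$ lies in no ironed interval then, by maximality of the constancy intervals, $\phi(v_i)=\phi(m)$ exactly for the bidders bidding $m$, so the $\phi$-maximizers are precisely the highest bidders; if $m$ lies in an ironed interval $[\ell,h)$ then $\phi(v_i)=\phi(m)$ exactly for the bidders whose bid lies in $[\ell,h)$ — these are the two cases in the definition of $(p,I)$ — and in either case the winner, being the lowest-indexed $\phi$-maximizer, is the lowest-indexed bidder in the relevant set. (iii)~In both auctions the winner pays her minimal winning bid, so the payments agree as well. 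Hence the two auctions are identical, and $(p,I)$ is revenue-optimal for $F^n$ for every $n$, as required.

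I expect the only mildly delicate point — the ``main obstacle,'' such as it is — to be the bookkeeping around the boundary behaviour of $\phi$: pinning down that $\{v\mid\phi(v)\ge0\}$ is closed at $p$ (handled by \cref{extend-beyond-support}), making sure that a constancy interval on which $\phi$ equals exactly $0$ is nevertheless included in $I$ (it must be, lest two below-reserve-adjacent bidders be treated incorrectly), correctly splitting a constancy interval that straddles $p$, and the degenerate case in which $\phi<0$ everywhere (no winner ever). None of these is substantive; they are simply the cases one must not skip when asserting that the two descriptions of the auction agree pointwise.
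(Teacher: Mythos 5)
Your proposal is correct and follows essentially the same route as the paper, which treats \cref{myerson-iid} as a cited special case of \cref{myerson} together with the observation (left to the reader in the text) that a symmetric Myersonian auction $(\phi)_{i\in N}$ coincides with a second-price auction with reserve price $\phi^{-1}(0)$ and ironed intervals equal to the nonnegative constancy intervals of $\phi$; your additional points — that $\phi$ depends only on $F$ and hence not on $n$, and the boundary/tie-breaking bookkeeping — are exactly the details the paper leaves implicit.
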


\begin{theorem}[\citealp{Elkind}]\label{elkind-iid}
Let $t\in\mathbb{N}$. There exists an algorithm that runs in time $\poly(t)$, such that given
a discrete distribution $\hat{F}\in\Delta\bigl([0,H]\bigr)$ with support of size at most~$t$, outputs a second-price auction with reserve price and ironed intervals that for every $n\in N$ achieves maximum revenue from $\hat{F}^n$ among all
possible auctions. 
\end{theorem}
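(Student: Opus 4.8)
The plan is to obtain \cref{elkind-iid} as the i.i.d.\ specialization of \cref{elkind}. First I would feed the single given distribution $\hat{F}$ to the algorithm of \cref{elkind}, obtaining a nondecreasing $\phi:\supp\hat{F}\to\mathbb{R}$. Since the function output by that algorithm depends only on the distribution it is given, the same $\phi$ results no matter with how many copies we intend to use it; and for each $n$, applying \cref{elkind} to the product $\hat{F}_1\times\cdots\times\hat{F}_n$ with $\hat{F}_1=\cdots=\hat{F}_n=\hat{F}$ shows that the $n$-bidder Myersonian auction $(\phi,\ldots,\phi)$ is revenue-optimal for $\hat{F}^n$ among all auctions. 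Thus a single $\phi$, computed in time $\poly(t)$, simultaneously yields an optimal auction for every $n$.

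It then remains to recast this Myersonian auction with all ironed virtual valuations equal to $\phi$ as a second-price auction with reserve price and ironed intervals whose parameters are computable in time $\poly(t)$ from $\phi$. Interpreting $\phi$ as a right-continuous step function as in \cref{extend-beyond-support}, I would set the reserve price $p$ to be the smallest element of $\supp\hat{F}$ at which $\phi$ is nonnegative, and take the ironed intervals $I$ to be the maximal intervals on which $\phi$ is constant and nonnegative; both are read off from the nondecreasing sequence $\phi$ in linear time, and they satisfy the requirements of the definition (every left endpoint lies at or above $p$ and distinct such intervals are disjoint).

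The core of the argument is then to check that the two mechanisms induce the same outcome, allocation and payment, on every bid profile in $\supp\hat{F}^n$, which suffices since that is all that affects the revenue from $\hat{F}^n$. All the needed facts follow from monotonicity of $\phi$: (i) for $v_i\in\supp\hat{F}$ we have $v_i<p$ if and only if $\phi(v_i)<0$, so the two auctions sell on exactly the same profiles; (ii) the set of bidders whose bid maximizes $\phi(v_i)$ is exactly the set of bidders whose bid lies in the highest occupied plateau of $\phi$; that plateau is one of the ironed intervals precisely when it meets $\supp\hat{F}$ in more than one point, and otherwise all those bidders share the single highest bid, so in both cases the winner, namely the lowest-indexed such bidder, agrees with the second-price-with-ironed-intervals rule; and (iii) the winner's minimal winning bid against the others' fixed bids, i.e.\ the infimum of bids that keep her winning, is in both descriptions the smallest value in $\supp\hat{F}$ whose $\phi$-value clears the relevant runner-up threshold, or the reserve $p$ if all competing bids fall below reserve. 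Combining (i)--(iii) shows the two auctions coincide on $\supp\hat{F}^n$, hence have equal revenue from $\hat{F}^n$, which completes the proof.

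I expect the main obstacle to be the bookkeeping inside (ii) and (iii): getting the lexicographic tie-breaking exactly right when several bidders submit identical bids inside an ironed interval, and confirming that the minimal winning bid defined through the $\phi$-ordering (with its tie-break) really matches the minimal winning bid of the second-price auction with ironed intervals. Once the correspondence between plateaus of $\phi$ and ironed intervals is set up this is routine, but one must be careful about which inequalities are strict; everything else is immediate.
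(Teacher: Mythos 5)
Your proposal is correct and matches the paper's own treatment: the paper states \cref{elkind-iid} as a special case of \cref{elkind} (with all $\hat{F}_i=\hat{F}$, hence all $\phi_i=\phi$), combined with the observation---left to the reader---that a symmetric Myersonian auction coincides with a second-price auction whose reserve is the smallest point where $\phi$ is nonnegative and whose ironed intervals are the nonnegative constancy plateaus of $\phi$, which is exactly the correspondence you set up and verify. The bookkeeping you flag (tie-breaking and minimal winning bids on $\supp\hat{F}^n$) indeed goes through, so no gap remains.
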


The reader may verify that a Myersonian auction $(\phi_i)_{\in N}$ where $\phi_i=\phi$ for some function $\phi$ for all $i\in N$, coincides with a second-price auction with reserve price and ironed intervals, where the reserve price is $\phi^{-1}(0)$, and the ironed intervals are the intervals on which $\phi$ is constant and nonnegative.

\subsection{Natural Efficient Distribution-Independent Rounding of\texorpdfstring{\\}{ }Second-Price Auctions with Reserve Prices}\label{existence-and-efficiency-iid}

As already noted above, we will now show that rounding the parameters that define a second-price auction with reserve price and ironed intervals to some fixed precision results in a loss in revenue of at most that precision on each valuation profile.

\begin{definition}[$\varepsilon$-Rounded-Down Auction]
Let $\varepsilon>0$.
\begin{itemize}
\item
For every $x\in\RR$, let $\epsfloor{x}\eqdef\varepsilon\cdot\left\lfloor\nicefrac{x}{\varepsilon}\right\rfloor$ be the value of $x$, rounded down to the nearest integer multiple of $\varepsilon$.
\item
For every second-price auction with reserve price and ironing intervals $(p,I)$, we define its \emph{$\varepsilon$-rounded-down} counterpart $\epsfloor{(p,I)}$ as follows:
\[
\epsfloor{(p,I)} \eqdef \Bigl(\epsfloor{p},\bigl\{\bigl[\epsfloor{\ell},\epsfloor{h}\bigr) ~\big|~ [\ell,h)\in I \And \epsfloor{\ell}<\epsfloor{h}\bigr\}\Bigr).
\]
\end{itemize}
\end{definition}

\begin{lemma}\label{distribution-independent}
For every $(p,I)$, for every $\varepsilon>0$, for every $n\in\mathbb{N}$, and for every $v_1,\ldots,v_n\in [0,H]$, it is the case that \[r^{\epsfloor{(p,I)}}(v_1,\ldots,v_n)>r^{(p,I)}(v_1,\ldots,v_n)-\varepsilon.\]
\end{lemma}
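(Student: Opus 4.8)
The plan is to show that for every fixed profile $(v_1,\dots,v_n)$ the winner's payment in $\epsfloor{(p,I)}$ is at least $\epsfloor{w}$, where $w\eqdef r^{(p,I)}(v_1,\dots,v_n)$; since $\epsfloor{w}>w-\varepsilon$, this gives the \lcnamecref{distribution-independent}. (One first checks that $\epsfloor{(p,I)}$ is again a legitimate second-price auction with reserve price and ironed intervals: rounding down preserves disjointness of the intervals, and each survives inside $\bigl[\epsfloor{p},H\bigr]$.) If $w=0$ --- no winner, or a winner paying $0$ --- the inequality is immediate, since $r^{\epsfloor{(p,I)}}(v_1,\dots,v_n)\ge0>-\varepsilon$. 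Otherwise let $i^*$ be the winner in $(p,I)$; then $i^*$ pays $w>0$ and $v_{i^*}\ge w\ge p\ge\epsfloor{p}$, so $\epsfloor{(p,I)}$ also has a winner, say $k^*$ paying $w^*$, and it remains to prove $w^*\ge\epsfloor{w}$.

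The first ingredient I would set up is a structural description of the winner and the winner's payment in any second-price auction with reserve price and ironed intervals. Call the \emph{level} of a bid the ironed interval containing it, if any, and otherwise the singleton consisting of that bid; levels are linearly ordered left to right. The winner is the lowest-indexed bidder whose bid lies in the highest occupied level, provided some bid is at least $p$ (otherwise there is no winner), and that bidder $i^*$'s payment equals $\max\bigl(p,\,B,\,C\bigr)$, where $B=\max\{h(v_j):j<i^*,\ v_j\ge p\}$ and $C=\max\{\ell(v_j):j>i^*,\ v_j\ge p\}$, writing $\ell(v)$ and $h(v)$ for the left and right endpoints of the level of $v$ and setting $\max\emptyset\eqdef-\infty$. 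This is a routine threshold-payment computation from the (lexicographic) tie-breaking rule.

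The second, and main, ingredient is a monotonicity property of rounding-down: for every bid value $v\ge p$, one has $\ell^*(v)\ge\epsfloor{\ell(v)}$ and $h^*(v)\ge\epsfloor{h(v)}$, where $\ell^*,h^*$ denote the level endpoints under $\epsfloor{(p,I)}$. I would prove this by cases on whether $v$ lies in a surviving rounded interval $\bigl[\epsfloor{\ell},\epsfloor{h}\bigr)$. If it does and $v\notin[\ell,h)$, then $v$ sits in the lower gap $\bigl[\epsfloor{\ell},\ell\bigr)$, so the relevant original endpoint is $\le\ell$ and thus rounds down to at most $\epsfloor{\ell}\le\ell^*(v)$; if it does and $v\in[\ell,h)$, the two sides agree. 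If $v$ lies in no surviving rounded interval, then in particular $v$ lies in no original ironed interval below height $\epsfloor{h}$ of any $[\ell,h)\in I$, which forces $v$ itself to be at least the rounded-down original right endpoint. Disjointness of the intervals in $I$ is used throughout to exclude the remaining configurations (for instance, $v$ dropping out of the top of its rounded interval into a neighbouring one).

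Finally I would combine the two ingredients; recall every bidder relevant below has bid $\ge p\ge\epsfloor{p}$, hence also counts toward $w^*=\max\bigl(\epsfloor{p},\,B^{*},\,C^{*}\bigr)$ (the analogue of the formula above for $k^*$). Fix a witness attaining $w=\max(p,B,C)$. If it is $p$, then $w^*\ge\epsfloor{p}=\epsfloor{w}$. If it is $h(v_{j_0})$ for some $j_0<i^*$: when $j_0<k^*$, bidder $j_0$ contributes $h^*(v_{j_0})\ge\epsfloor{h(v_{j_0})}=\epsfloor{w}$ to $w^*$; when $k^*\le j_0<i^*$, then $i^*>k^*$, so bidder $i^*$ contributes $\ell^*(v_{i^*})\ge\epsfloor{\ell(v_{i^*})}\ge\epsfloor{w}$ to $w^*$, using $\ell(v_{i^*})\ge h(v_{j_0})=w$ since $i^*$ beat the lower-indexed $j_0$ in $(p,I)$ and hence sits in a strictly higher level. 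The remaining case, where the witness is $\ell(v_{j_0})$ for some $j_0>i^*$, is symmetric: when $j_0>k^*$, bidder $j_0$ contributes $\ell^*(v_{j_0})\ge\epsfloor{\ell(v_{j_0})}=\epsfloor{w}$; otherwise $i^*<k^*$, and bidder $i^*$ contributes $h^*(v_{i^*})\ge\epsfloor{h(v_{i^*})}\ge\epsfloor{\ell(v_{i^*})}\ge\epsfloor{\ell(v_{j_0})}=\epsfloor{w}$, using $\ell(v_{i^*})\ge\ell(v_{j_0})$ since $i^*$ won over the higher-indexed $j_0$ in $(p,I)$ and hence sits in a weakly higher level. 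In every case $w^*\ge\epsfloor{w}>w-\varepsilon$, as required. The hard part is the second ingredient --- tracking how a bid's level is deformed when ironed intervals are rounded down (an interval may vanish, a bid may fall out of its rounded interval or into a neighbour) --- compounded by the bookkeeping caused by the fact, already visible in the examples in the body, that the identity of the winner can genuinely change under rounding.
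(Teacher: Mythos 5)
Your plan is correct, and it reaches the \lcnamecref{distribution-independent} by a different decomposition than the paper's own proof. The paper exploits the second-price structure head-on: writing $\ell_{(p,I)}(v)$ and $h_{(p,I)}(v)$ for the endpoints of the level of $v$, it notes that on any profile with a winner the revenue equals either $\ell_{(p,I)}(v^{(2)})$ or $h_{(p,I)}(v^{(2)})$ for the \emph{second-highest bid} $v^{(2)}$, and then runs a short case analysis on whether the winning bid shares an ironed interval with another bid before and after rounding, using inline inequalities such as $\ell_{\epsfloor{(p,I)}}(v^{(2)})>\ell_{(p,I)}(v^{(2)})-\varepsilon$ and the observation that a bid lying in a rounded interval that did not originally contain it must sit in the lower gap $\bigl[\epsfloor{\ell},\ell\bigr)$. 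You instead isolate two reusable ingredients --- the exact threshold formula $\max(p,B,C)$ for the winner's payment and the per-bid monotonicity $\ell^*(v)\ge\epsfloor{\ell(v)}$, $h^*(v)\ge\epsfloor{h(v)}$ for $v\ge p$ --- and compare the witness attaining the original maximum against the (possibly different) winner $k^*$ of the rounded auction by index bookkeeping; your sub-cases $k^*\le j_0<i^*$ and $i^*<j_0\le k^*$ are precisely where the possible change of winner, which the paper absorbs through its $v^{(2)}$-based sub-cases, is handled, and the inequalities $\ell(v_{i^*})\ge h(v_{j_0})$ (resp.\ $\ell(v_{i^*})\ge\ell(v_{j_0})$) you invoke there are exactly what the winner-determination rule gives. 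Both arguments ultimately rest on the same facts (levels are disjoint intervals ordered on the line, and rounding down each relevant endpoint loses less than $\varepsilon$; note also that your preliminary check that $\epsfloor{(p,I)}$ is again legitimate reduces to $h_1\le\ell_2\Rightarrow\epsfloor{h_1}\le\epsfloor{\ell_2}$), so neither is more general in substance. The paper's route is shorter because conditioning on $v^{(2)}$ collapses the winner-identity bookkeeping; yours is more modular, and it delivers the slightly sharper pointwise statement $r^{\epsfloor{(p,I)}}(v_1,\ldots,v_n)\ge\epsfloor{r^{(p,I)}(v_1,\ldots,v_n)}$, which immediately gives the stated strict bound.
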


The proof of \cref{distribution-independent} and of other results from this \lcnamecref{iid} are relegated to \cref{proofs-iid}.
\cref{distribution-independent} implies that the loss in revenue is less than an additive $\varepsilon$ also on every distribution.

\begin{proposition}\label{exists-and-efficient-iid}
For every $(p,I)$, for every $\varepsilon>0$, for every $F\in\Delta\bigl([0,H]\bigr)$ and for every $n\in\mathbb{N}$, it is the case that \[\Rev^{\epsfloor{(p,I)}}(F^n)>\Rev^{(p,I)}(F^n)-p\cdot\varepsilon,\]
where $p$ is the probability that some bidder wins in $(p,I)$ when the profile of bids is distributed according to $F^n$.\footnote{Interestingly, similarly (at least conceptually) to \cref{exists} (and \cref{exists-single-parameter}), a slightly stronger statement than that of \cref{exists-and-efficient-iid} also holds, where $p$ is replaced with the probability that some bidder wins in $A$ \emph{and pays a price that is not an integer multiple of $\varepsilon$} when the profile of bids is distributed according to~$F$. The case analysis is similar to the one in the proof of \cref{distribution-independent}, and is left to the reader.}
\end{proposition}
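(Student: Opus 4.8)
The plan is to obtain \cref{exists-and-efficient-iid} from the pointwise estimate \cref{distribution-independent} by a single averaging step, recovering the factor $p$ from the observation that $\varepsilon$-rounding-down can lose revenue only on valuation profiles on which $(p,I)$ itself produces a winner. Write $A\eqdef(p,I)$ and $A'\eqdef\epsfloor{(p,I)}$, and let $B\subseteq[0,H]^n$ be the set of valuation profiles on which some bidder wins in $A$, so that (overloading $p$ as in the statement) $p=\PP_{v\sim F^n}(v\in B)$.

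First I would record two pointwise facts about the loss $r^A-r^{A'}$. By \cref{distribution-independent}, $r^{A'}(v_1,\dots,v_n)>r^A(v_1,\dots,v_n)-\varepsilon$ for every profile, so $r^A-r^{A'}<\varepsilon$ everywhere. And for $v\notin B$ no bidder wins in $A$, so $r^A(v)=0$; since $r^{A'}\ge0$ always (its payments are nonnegative), $r^A-r^{A'}\le0$ on $[0,H]^n\setminus B$. Now take expectations over $v\sim F^n$ and split on $B$:
\[
\Rev^A(F^n)-\Rev^{A'}(F^n)=\EE_{v\sim F^n}\bigl[(r^A-r^{A'})\mathbf 1_{B}\bigr]+\EE_{v\sim F^n}\bigl[(r^A-r^{A'})\mathbf 1_{[0,H]^n\setminus B}\bigr].
\]
The second term is $\le0$. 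For the first, the bounded function $g\eqdef\varepsilon-(r^A-r^{A'})$ is strictly positive everywhere (in particular on $B$), so $\EE[(r^A-r^{A'})\mathbf 1_B]=\varepsilon\,\PP(B)-\EE[g\,\mathbf 1_B]<\varepsilon\,\PP(B)=p\cdot\varepsilon$, the strict inequality because a nonnegative integrable function that is positive on a set of positive probability has positive integral (and when $\PP(B)=0$ the first term is simply $0$ and $\Rev^A(F^n)=0$). Combining the two bounds, $\Rev^A(F^n)-\Rev^{A'}(F^n)<p\cdot\varepsilon$, i.e.\ $\Rev^{A'}(F^n)>\Rev^{(p,I)}(F^n)-p\cdot\varepsilon$, as claimed.

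There is essentially no obstacle here beyond invoking \cref{distribution-independent}; the rest is bookkeeping. Two remarks are worth making in the write-up: (i) unlike in the non-i.i.d.\ development, $\epsfloor{(p,I)}$ is \emph{not} itself an $\varepsilon$-coarse auction — it still orders two bids lying in the same non-ironed $\varepsilon$-interval by their exact values — so \cref{randomized-rounding}/\cref{exists} do not apply directly and the pointwise i.i.d.\ argument via \cref{distribution-independent} is genuinely needed; and (ii) the strengthening stated in the footnote follows with no extra work by taking $B$ to be instead the set of profiles on which some bidder wins \emph{and} pays a price that is not an integer multiple of $\varepsilon$, since the case analysis behind \cref{distribution-independent} yields a loss of exactly $0$ whenever the winner's minimal winning bid already lies on the $\varepsilon$-grid, which upgrades the second pointwise fact above to hold off this smaller set.
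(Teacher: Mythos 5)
Your argument is correct and is essentially the paper's own proof: the paper derives \cref{exists-and-efficient-iid} as immediate from \cref{distribution-independent} together with exactly your observation that no revenue is lost on profiles where $(p,I)$ selects no winner, and your write-up merely makes the averaging/indicator bookkeeping (and the footnote variant) explicit. No gap to report.
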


We note that \cref{distribution-independent,exists-and-efficient-iid} (and therefore also all of the results of \cref{convergence-iid}) would still hold (via similar analysis) even if we were to $\varepsilon$-round $\epsfloor{(p,I)}$ into an $\varepsilon$-coarse auction by rounding its inputs (in addition to the parameters of the auction) down to the nearest integer multiple of $\varepsilon$ before applying its allocation rule (while adapting the payments accordingly).
We further note that a construction similar to that of \cref{tight} in \cref{examples} shows that the bound of ``less than~$\varepsilon$'' on the revenue loss in \cref{exists-and-efficient-iid} (and hence also in \cref{distribution-independent}) cannot be unconditionally tightened any further, even if $\epsfloor{(p,I)}$ is replaced with an arbitrary $\varepsilon$-coarse auction, and already for $n\!=\!1$~bidder (more bidders with the same valuation distribution may be added to show the same result for $n>1$ bidders).

\subsection{Uniform Convergence over the Set of\texorpdfstring{\\}{ }Rounded Second-Price Auctions with Reserve Prices}\label{convergence-iid}

Analogously to the main text, we obtain a uniform convergence result through showing that the set of $\varepsilon$-rounded-down second-price auctions with reserve price and ironing intervals is small, and then utilize this and the previous results of this \lcnamecref{iid} to reprove the analogous learning result for i.i.d.\ produce distributions.

\begin{definition}[$\epsfloor{S}$]
For every $\varepsilon>0$,
we denote the set of all $\varepsilon$-rounded-down second-price auctions with reserve price and ironing intervals with parameters in $[0,H]$ by $\epsfloor{S}$.
\end{definition}

\begin{lemma}\label{small-iid}
$\bigl|\epsfloor{S}\bigr|\le\exp\bigl(\poly(H,\nicefrac{1}{\varepsilon})\bigr)$.
\end{lemma}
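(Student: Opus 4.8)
The plan, analogously to the proof of \cref{small}, is to argue that every auction in $\epsfloor{S}$ is completely specified by parameters drawn from a fixed finite grid, and then to count the number of such specifications.

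\textbf{Step 1: rounding down lands on a finite grid.} First I would observe that in $\epsfloor{(p,I)}$ every parameter --- the reserve price $\epsfloor{p}$ and both endpoints $\epsfloor{\ell},\epsfloor{h}$ of each surviving ironed interval --- is an integer multiple of $\varepsilon$ lying in $[0,H]$, hence belongs to the finite set $G\eqdef\{0,\varepsilon,2\varepsilon,\ldots,\epsfloor{H}\}$ of size $m\eqdef\lfloor\nicefrac{H}{\varepsilon}\rfloor+1$. I would also check that $\epsfloor{(p,I)}$ is again a legal second-price auction with reserve price and ironed intervals: since $x\mapsto\epsfloor{x}$ is monotone nondecreasing, $p\le\ell$ forces $\epsfloor{p}\le\epsfloor{\ell}$, and for two disjoint intervals $[\ell_1,h_1),[\ell_2,h_2)\in I$ with $h_1\le\ell_2$ one gets $\epsfloor{h_1}\le\epsfloor{\ell_2}$, so the rounded-down intervals remain pairwise disjoint (the explicit condition $\epsfloor{\ell}<\epsfloor{h}$ retained in the definition handles nonemptiness). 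Consequently every element of $\epsfloor{S}$ is a pair consisting of a reserve price in $G$ together with a pairwise-disjoint collection of intervals whose endpoints lie in $G$.

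\textbf{Step 2: counting.} There are at most $m$ choices for the reserve price $\epsfloor{p}\in G$. For the collection of ironed intervals, the crudest bound already suffices: the number of candidate intervals $[a,b)$ with $a<b$ and $a,b\in G$ is $\binom{m}{2}$, and any admissible collection of ironed intervals is some subset of these, so there are at most $2^{\binom{m}{2}}$ collections. (One can do far better --- for instance by encoding a collection via an assignment to each of the $m$ grid points of one of a constant number of states ``left endpoint / interior-or-right endpoint / outside'', giving at most $c^{m}$ collections for an absolute constant $c$ --- but this refinement is not needed.) Multiplying, $|\epsfloor{S}|\le m\cdot 2^{\binom{m}{2}}$, and since $m=\lfloor\nicefrac{H}{\varepsilon}\rfloor+1=\poly(H,\nicefrac{1}{\varepsilon})$ this is $\exp\bigl(\poly(H,\nicefrac{1}{\varepsilon})\bigr)$, as required.

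The argument is entirely elementary and I do not expect a genuine obstacle; the only two points needing a moment's care are verifying that rounding down preserves legality of the auction (monotonicity of the floor preserving both disjointness of the ironed intervals and the reserve-price inequality $p\le\ell$), and organizing the final count so that it is transparently $\exp\bigl(\poly(H,\nicefrac{1}{\varepsilon})\bigr)$ rather than larger --- the $2^{\binom{m}{2}}$ estimate makes the latter immediate with room to spare.
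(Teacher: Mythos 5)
Your proposal is correct and follows essentially the same approach as the paper: both arguments rest on the observation that all parameters of an $\varepsilon$-rounded-down auction lie on the grid of integer multiples of $\varepsilon$ in $[0,H]$, and then count coarsely (the paper encodes, for each grid point, one of four states ``interval starts/ends/both/neither,'' which is exactly the refinement you mention parenthetically, while your $2^{\binom{m}{2}}$ subset bound is an even cruder but equally sufficient estimate). Your additional verification that rounding down preserves disjointness is harmless but not needed for the upper bound.
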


\begin{proof}
To specify the ironing intervals, it is more than sufficient to merely specify, for each multiple of $\varepsilon$ in $[p,H]$, whether an ironed interval ends at that point, starts at that point, neither, or both.
\end{proof}

To prove an analogous result to \cref{empirical}/\cref{intro-empirical} for i.i.d.\ distributions using the results of \cref{existence-and-efficiency-iid}, we require the following analogue of \cref{uniform}.

\begin{lemma}\label{uniform-iid}
For every $\varepsilon>0$ and $\delta>0$, there
exists $t=\poly(H,n,\nicefrac{1}{\varepsilon},\log\nicefrac{1}{\delta})$ such that the following holds.
Fix $F\in\Delta\bigl([0,H]\bigr)$,
draw $t$ samples from $F$, and let $\hat{F}$ be the empirical
uniform distribution over these $t$ samples. With probability at least~$1\!-\!\delta$, it is the case that
\[
\bigl|\Rev^A(F^n) - \Rev^A(\hat{F}^n)\bigr|<\varepsilon
\]
holds simultaneously for every $A \in \epsfloor{S}\cup\bigl\{\OPT(F^n)\bigr\}$.
\end{lemma}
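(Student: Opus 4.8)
The plan is to mimic the proof of \cref{uniform}: reduce this uniform-over-a-set-of-auctions convergence statement to a \emph{single-auction} concentration bound, and then take a union bound, exploiting that the relevant set of auctions $\epsfloor{S}$ is small by \cref{small-iid}. Concretely, I would first isolate the i.i.d.\ counterpart of \cref{converge-one} (the statement the paper records as \cref{converge-one-iid}): for every $\varepsilon,\delta>0$ there is $t=\poly(H,n,\nicefrac{1}{\varepsilon},\log\nicefrac{1}{\delta})$ such that, drawing $t$ samples from $F$ and letting $\hat F$ be their empirical distribution, any \emph{fixed} auction $A$ satisfies $\bigl|\Rev^A(F^n)-\Rev^A(\hat F^n)\bigr|<\varepsilon$ with probability at least $1-\delta$. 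Given this, \cref{uniform-iid} follows exactly as \cref{uniform} follows from \cref{converge-one}: set $\delta'\eqdef\delta/\bigl(|\epsfloor{S}|+1\bigr)$; by \cref{small-iid} we have $|\epsfloor{S}|\le\exp\bigl(\poly(H,\nicefrac{1}{\varepsilon})\bigr)$, so $\log\nicefrac{1}{\delta'}=\poly(H,\nicefrac{1}{\varepsilon})+\log\nicefrac{1}{\delta}$ and hence a choice of $t=\poly(H,n,\nicefrac{1}{\varepsilon},\log\nicefrac{1}{\delta'})=\poly(H,n,\nicefrac{1}{\varepsilon},\log\nicefrac{1}{\delta})$ makes the single-auction bound hold with failure probability at most $\delta'$ for each individual auction; applying it to every member of the finite set $\epsfloor{S}\cup\{\OPT(F^n)\}$ (of size at most $|\epsfloor{S}|+1$) and union-bounding yields the claim with total failure probability at most $\bigl(|\epsfloor{S}|+1\bigr)\cdot\delta'=\delta$.

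The real work — the part that genuinely differs from \cref{convergence} — is the single-auction bound \cref{converge-one-iid}, since the proof of \cref{converge-one} (and the inequality of \citealp{Babichenko-Barman-Peretz}) crucially uses that the $n$ coordinate-distributions are sampled \emph{independently}, whereas here the same empirical distribution $\hat F$ is reused for all $n$ bidders, so $\hat F^n$ is not a product of independently drawn empirical distributions and the index-shifting (``wrap-around'') trick of the proof of \cref{converge-one} breaks down (the shifted tuples overlap). I would instead argue directly via bounded differences. View $\Rev^A(\hat F^n)=\EE_{j_1,\ldots,j_n\sim U(\{1,\ldots,t\})}r^A(v^{j_1},\ldots,v^{j_n})$ as a symmetric function $g(v^1,\ldots,v^t)$ of the $t$ i.i.d.\ samples; since $r^A\in[0,H]$ and the fraction of index tuples $(j_1,\ldots,j_n)$ using any fixed coordinate is $1-(1-\nicefrac{1}{t})^n\le\nicefrac{n}{t}$, changing a single sample changes $g$ by at most $\nicefrac{nH}{t}$. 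McDiarmid's inequality then gives $\PP\bigl(|\Rev^A(\hat F^n)-\EE\,\Rev^A(\hat F^n)|\ge\nicefrac{\varepsilon}{2}\bigr)\le 2\exp\bigl(-t\varepsilon^2/(2n^2H^2)\bigr)$. Finally, $\EE\,\Rev^A(\hat F^n)$ is not exactly $\Rev^A(F^n)$, but the discrepancy comes only from index tuples with a repeated coordinate, which occur with probability at most $\binom{n}{2}/t$, so $\bigl|\EE\,\Rev^A(\hat F^n)-\Rev^A(F^n)\bigr|\le H\binom{n}{2}/t$. Choosing $t=\poly(H,n,\nicefrac{1}{\varepsilon},\log\nicefrac{1}{\delta})$ large enough makes both this bias term and the McDiarmid deviation below $\nicefrac{\varepsilon}{2}$ except with probability $\delta$, giving \cref{converge-one-iid}.

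I expect the bounded-differences step to be the only genuine obstacle: verifying the $\nicefrac{nH}{t}$ difference bound and the $O(n^2H/t)$ bias bound, and then carrying the polynomial bookkeeping through the union bound so that the final $t$ is still $\poly(H,n,\nicefrac{1}{\varepsilon},\log\nicefrac{1}{\delta})$ (here it helps that $\log|\epsfloor S|$ has \emph{no} dependence on $n$). Everything downstream of \cref{converge-one-iid} is a verbatim adaptation of the proof of \cref{uniform}; the one small point worth noting is that $\OPT(F^n)$ need not lie in $\epsfloor{S}$ but, once $F$ is fixed, it is a single fixed auction, so \cref{converge-one-iid} applies to it too — which is precisely why it is singled out in the statement, exactly as in \cref{uniform}.
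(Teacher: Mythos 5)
Your proposal is correct. The scaffolding for \cref{uniform-iid} itself is exactly the paper's: set $\delta'=\delta/(|\epsfloor{S}|+1)$, invoke \cref{small-iid} (whose bound indeed has no $n$-dependence), apply the single-auction bound to each member of $\epsfloor{S}\cup\{\OPT(F^n)\}$ --- noting, as you do, that once $F$ is fixed $\OPT(F^n)$ is a single fixed auction --- and union-bound. Where you genuinely depart from the paper is in how you establish the single-auction concentration statement (\cref{converge-one-iid}). The paper keeps to the Chernoff--Hoeffding inequality: it forms the $t$ index-shifted $n$-tuples $(v^{k_1+j},\ldots,v^{k_n+j})$ for distinct $k_1,\ldots,k_n$, greedily partitions them into at most $n(n-1)+1$ classes within which all coordinates are distinct (hence i.i.d.\ draws from $F^n$), applies Chernoff to each sufficiently large class, and separately accounts for the leftover tuples, including those with repeated sample indices. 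You instead treat $\Rev^A(\hat F^n)$ directly as a function of the $t$ i.i.d.\ samples and apply McDiarmid: your bounded-difference constant $H\bigl(1-(1-\nicefrac1t)^n\bigr)\le\nicefrac{nH}{t}$ is right, the resulting tail $2\exp\bigl(-t\varepsilon^2/(2n^2H^2)\bigr)$ is right, and the bias term is handled correctly --- conditioning on the index tuple, distinct-index tuples contribute exactly $\Rev^A(F^n)$ and repeated-index tuples (probability at most $\binom{n}{2}/t$) contribute at most $H$, so $\bigl|\EE\,\Rev^A(\hat F^n)-\Rev^A(F^n)\bigr|\le H\binom{n}{2}/t$. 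Both routes yield $t=\poly(H,n,\nicefrac1\varepsilon,\log\nicefrac1\delta)$; your McDiarmid argument is appreciably shorter and dispenses with the combinatorial partitioning, at the price of invoking the bounded-differences inequality rather than only Chernoff--Hoeffding, which is the ``basic tools'' standard the paper holds itself to. Also note the paper states \cref{converge-one-iid} for an arbitrary $r:[0,H]^n\to[0,H]$, and your argument covers that generality verbatim.
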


The proof of \cref{uniform-iid} is analogous to that of \cref{uniform} (taking a union bound due to the exponential size of $\epsfloor{S}$), with \cref{converge-one} replaced with the following analogous \lcnamecref{converge-one-iid} for i.i.d.\ distributions. We phrase and prove the following \lcnamecref{converge-one-iid} as a general probabilistic concentration inequality, which we also find interesting in its own right. (When reading this \lcnamecref{converge-one-iid}, the reader is encouraged to think of the random variable $r$ as $r^A$ for some fixed auction $A$.)

\begin{proposition}\label{converge-one-iid}
For every $\varepsilon>0$ and $\delta>0$, there
exists $t=\poly(H,n,\nicefrac{1}{\varepsilon},\log\nicefrac{1}{\delta})$ such that the following holds.
Fix $F\in\Delta\bigl([0,H]\bigr)$ and fix $r:[0,H]^n\rightarrow[0,H]$.
Draw $t$~samples from $F$, and let $\hat{F}$ be the empirical
uniform distribution over these $t$ samples. Then, with probability
at least~$1\!-\!\delta$ it is the case that
\[
\bigl|\EE_{\hat{F}^n} r - \EE_{F^n} r\bigr|<\varepsilon.
\]
\end{proposition}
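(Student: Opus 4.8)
The plan is to follow the same skeleton as the proof of \cref{converge-one}, but to handle the i.i.d.\ structure via a more careful decoupling argument, since here there is only a \emph{single} sample set of size $t$, and the $n$ arguments of $r$ in $\EE_{\hat F^n}r$ are each drawn (with repetition) from this one set, so naively the $n$ coordinates are not independent. First I would recall that $\EE_{\hat F^n}r = \EE_{j_1,\ldots,j_n\sim U(\{1,\ldots,t\})} r(v^{j_1},\ldots,v^{j_n})$, where $v^1,\ldots,v^t$ are the $t$ i.i.d.\ samples from $F$. The key observation is that if we restrict attention to tuples $(j_1,\ldots,j_n)$ whose coordinates are \emph{pairwise distinct}, then $(v^{j_1},\ldots,v^{j_n})$ is an honest i.i.d.\ draw from $F^n$, so averaging $r$ over a well-chosen family of such tuples gives something that concentrates around $\EE_{F^n}r$ by Chernoff-Hoeffding; then I would bound the error incurred by (a) the tuples with a repeated coordinate, which have probability $O(n^2/t)$ under $U(\{1,\ldots,t\})^n$ and hence contribute at most $O(n^2 H/t)$ to $\EE_{\hat F^n}r$, and (b) the gap between the average over the chosen family of distinct-coordinate tuples and $\EE$ over \emph{all} distinct-coordinate tuples.

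Concretely, mimicking the cyclic-shift trick in the proof of \cref{converge-one}: fix offsets and consider, for $j=1,\ldots,t$, the tuple $(v^{j},v^{j+1},\ldots,v^{j+n-1})$ with indices taken mod $t$ (wrapping $t\to 1$). When $t\ge n$ these $n$ coordinates within each tuple are pairwise distinct, so each such tuple is a genuine sample from $F^n$; moreover, for $j$ spaced apart by at least $n$ (e.g., $j\in\{1,n+1,2n+1,\ldots\}$, giving $\lfloor t/n\rfloor$ tuples) these tuples are \emph{mutually independent} samples from $F^n$. Call their empirical average $\bar r$. By Chernoff-Hoeffding, $\PP(|\bar r - \EE_{F^n}r|\ge \varepsilon/2)\le 2\exp(-2\lfloor t/n\rfloor(\varepsilon/2)^2/H^2)$, which is at most $\delta$ for $t=\poly(H,n,\nicefrac1\varepsilon,\log\nicefrac1\delta)$. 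It then remains to argue $|\EE_{\hat F^n}r - \bar r|<\varepsilon/2$ \emph{deterministically} (once $t$ is large enough), by splitting $\EE_{\hat F^n}r$ into the contribution of distinct-coordinate tuples and the contribution of tuples with a repeat: the latter is bounded by $H\cdot\PP_{U^n}(\text{some }j_a=j_b)\le H\cdot\binom{n}{2}/t$, and the former, being an average of $r$ over all ordered $n$-tuples of distinct indices, equals $\EE$ over a uniformly random such tuple, which by symmetry equals $\EE$ of the average of $r$ over a random cyclic arrangement — and this can be related to $\bar r$ (an average of a particular deterministic subfamily of cyclic windows) up to $O(nH/t)$ by a standard averaging/exchange argument. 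Choosing $t\ge \poly(H,n,\nicefrac1\varepsilon)$ makes all these $O(n^2H/t)$-type terms below $\varepsilon/2$, and a union bound over the two events finishes the proof.

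The main obstacle I expect is step (b): cleanly showing that the average of $r$ over the deterministic family of spaced cyclic windows $\{(v^j,\ldots,v^{j+n-1}): j\equiv 1 \pmod n\}$ is within $O(nH/t)$ of the average of $r$ over \emph{all} $t$ cyclic windows $j=1,\ldots,t$ (and that the latter, in turn, is within $O(nH/t)$ of $\EE_{\hat F^n}r$ restricted to distinct tuples). The first of these is just that dropping a $1-1/n$ fraction of the windows changes a bounded-range average by $O(H)$ — wait, that's too lossy; instead one must average over \emph{all} $n$ choices of residue class and use that each residue class gives an independent-within-itself family, or alternatively run the Chernoff bound directly on the full set of $t$ cyclic windows treating them as $t$ identically-distributed (but not independent) bounded random variables and invoking a Hoeffding-type bound for $m$-dependent sequences (the windows are $n$-dependent). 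I would likely take the latter route — cite or reprove a Hoeffding inequality for $n$-dependent bounded sequences, giving concentration of the full cyclic-window average with effective sample size $\Theta(t/n)$ — since that simultaneously handles both the concentration and avoids the awkward subfamily bookkeeping, and it is presumably why the authors flag this as "somewhat more delicate" and "not following from \citeauthor{Babichenko-Barman-Peretz}."
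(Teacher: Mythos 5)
There is a genuine gap, and it sits exactly where you flagged it, but it is more serious than a bookkeeping issue. The quantity $\EE_{\hat F^n}r$ is the average of $r(v^{j_1},\ldots,v^{j_n})$ over \emph{all} $t^n$ index tuples, and after discarding the repeated-index tuples (your bound of $H\cdot\binom{n}{2}/t$ for those is fine) you still must control the average over all $\approx t^n$ distinct-index tuples. Your argument only ever controls averages over the \emph{consecutive cyclic windows} $(v^j,\ldots,v^{j+n-1})$ --- a family of merely $t$ (or $\lfloor t/n\rfloor$) tuples. The claimed ``symmetry/exchange argument'' that the all-distinct-tuples average is deterministically within $O(nH/t)$ of the consecutive-window average is false: both are random variables depending on the realized samples, and for a fixed realization there is no exchangeability to invoke (e.g., for $n=2$ the average of $r(v^j,v^{j+1})$ over consecutive pairs can differ by $\Omega(H)$ from the average of $r(v^{j_1},v^{j_2})$ over all distinct pairs). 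Your fallback --- a Hoeffding bound for $n$-dependent sequences applied to the full set of $t$ cyclic windows --- does not repair this, because it again only yields concentration of the cyclic-window average; it never touches the $\sim t^n$ distinct tuples that are not consecutive windows, which dominate $\EE_{\hat F^n}r$.

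The paper closes this gap by covering \emph{every} distinct-index tuple: it considers all index patterns $(k_1,\ldots,k_n)$ with $k_1=1$ and distinct coordinates (there are $\frac{(t-1)!}{(t-n)!}$ of them), and for each pattern the family of its $t$ cyclic shifts; the union of these families is exactly the set of all distinct-index tuples. Within one family the shifts are identically distributed draws from $F^n$ but not independent, so each family is greedily partitioned into at most $n\cdot(n-1)+1$ subfamilies in which all indices occurring are pairwise distinct, hence the corresponding tuples are mutually independent samples from $F^n$; Chernoff--Hoeffding is applied to every subfamily of size at least $\sqrt{t}$, and a union bound is taken over the $\frac{(t-1)!}{(t-n)!}\cdot\bigl(n\cdot(n-1)+1\bigr)\le t^{n-1}\cdot\poly(n)$ resulting events, which the $\exp\bigl(-\nicefrac{\sqrt{t}\varepsilon^2}{2H^2}\bigr)$ tail absorbs for $t=\poly(H,n,\nicefrac{1}{\varepsilon},\log\nicefrac{1}{\delta})$. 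The leftover mass (small subfamilies plus repeated-index tuples) is a deterministically $O\bigl(\nicefrac{n^2}{\sqrt{t}}\bigr)$ fraction of $t^n$, each term bounded by $H$. So the missing idea in your proposal is this covering of the whole index cube by polynomially-in-$t$ many shift families together with the bounded-size partition of each family into independent blocks; without it, concentration is established only for a negligible subfamily of the tuples that define $\EE_{\hat F^n}r$.
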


\begin{remark}
We note that \cref{converge-one-iid} does not follow from \cref{converge-one} (or from the more general concentration inequality of \citealp{Babichenko-Barman-Peretz} / \citealp{Devanur-Huang-Psomas}). Indeed, while in the latter a random tuple from $\hat{F}_1\times\cdots\times \hat{F}_n$ distributes according to $F_1\times\cdots\times F_n$, in the former a random tuple from $\hat{F}^n$ does not necessarily distribute according to $F^n$. Indeed, a random tuple from $\hat{F}^n$ has, e.g., positive probability for containing duplicate values, while a random tuple from $F^n$ may have (depending on $F$) zero probability for containing duplicate values.
\end{remark}

\begin{remark}
Concentration inequalities similar to \cref{converge-one,converge-one-iid} can be similarly shown to hold for cases in which the $n$ distributions (e.g., over valuations of bidders) are divided into subsets, where in each subset the distributions are i.i.d.
\end{remark}

Combining \cref{uniform-iid,exists-and-efficient-iid,elkind-iid}, we obtain the following analogue of \cref{empirical}/\cref{intro-empirical}, providing a natural polynomial-time algorithm for learning an approximately optimal auction from samples from an arbitrary unknown bounded i.i.d.\ product distribution.

\begin{theorem}\label{empirical-iid}
There exists $t=\poly(H,n,\nicefrac{1}{\varepsilon},\log\nicefrac{1}{\delta})$ such that the following holds.
Let $F$ be an arbitrary distribution on $[0,H]$.
Draw $t$ samples from $F$, and let $\hat{F}$ be the empirical distribution over these $t$ samples. Then, with probability at least~$1\!-\!\delta$, the optimal auction for $\hat{F}^n$ (which can be deterministically computed in time $\poly(t)$ via \cref{elkind-iid}), where the reserve price as well as the bounds of each ironed interval are all rounded down to the nearest multiple of $\nicefrac{\varepsilon}{3}$, approximates the maximum possible revenue from $F^n$ up to less than an additive~$\varepsilon$.
\end{theorem}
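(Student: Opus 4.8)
The plan is to assemble the theorem from the appendix ingredients exactly as \cref{empirical} was assembled from \cref{uniform,efficient,elkind}. First I would let $t=\poly(H,n,\nicefrac{1}{\varepsilon},\log\nicefrac{1}{\delta})$ be the sample size furnished by \cref{uniform-iid} with $\nicefrac{\varepsilon}{3}$ in place of $\varepsilon$, so that, writing $\hat F$ for the empirical distribution over the $t$ samples, with probability at least $1-\delta$ it holds simultaneously for every $A\in\bigl\lfloor S\bigr\rfloor_{\varepsilon/3}\cup\{\OPT(F^n)\}$ that $\bigl|\Rev^A(F^n)-\Rev^A(\hat F^n)\bigr|<\nicefrac{\varepsilon}{3}$. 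By \cref{elkind-iid}, the optimal auction $(p,I)$ for $\hat F^n$ is computable deterministically in time $\poly(t)$; let $A'\eqdef\bigl\lfloor(p,I)\bigr\rfloor_{\varepsilon/3}$ be its $\nicefrac{\varepsilon}{3}$-rounded-down counterpart, a second-price auction with reserve price and ironed intervals all of whose parameters are integer multiples of $\nicefrac{\varepsilon}{3}$ in $[0,H]$, so $A'\in\bigl\lfloor S\bigr\rfloor_{\varepsilon/3}$. This $A'$ is the claimed output auction.

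Next I would condition on the good event of \cref{uniform-iid} and run a four-step inequality chain: \cref{uniform-iid} on $\OPT(F^n)$, then optimality of $(p,I)$ for $\hat F^n$, then \cref{exists-and-efficient-iid} (which gives $\Rev^{A'}(\hat F^n)>\Rev^{(p,I)}(\hat F^n)-q\cdot\nicefrac{\varepsilon}{3}\ge\Rev^{(p,I)}(\hat F^n)-\nicefrac{\varepsilon}{3}$, since the winning probability $q\le1$), and finally \cref{uniform-iid} again on $A'$:
\begin{align*}
\Rev^{\OPT(F^n)}(F^n)
&< \Rev^{\OPT(F^n)}(\hat F^n) + \tfrac{\varepsilon}{3} \le\\
&\le \Rev^{(p,I)}(\hat F^n) + \tfrac{\varepsilon}{3} <\\
&< \Rev^{A'}(\hat F^n) + 2\cdot\tfrac{\varepsilon}{3} <\\
&< \Rev^{A'}(F^n) + 3\cdot\tfrac{\varepsilon}{3} = \Rev^{A'}(F^n) + \varepsilon.
\end{align*}
Since $A'$ is exactly ``the optimal auction for $\hat F^n$ with reserve price and each ironed-interval bound rounded down to the nearest multiple of $\nicefrac{\varepsilon}{3}$'', this is precisely the asserted guarantee, holding on the event of \cref{uniform-iid}, i.e.\ with probability at least $1-\delta$.

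I expect the difficulty to be entirely front-loaded into the prerequisites, so this theorem itself should present no additional obstacle. The substantive content lies in: \cref{distribution-independent}/\cref{exists-and-efficient-iid}, whose point is that naive coordinatewise rounding-down of the parameters $(p,I)$ already suffices in the i.i.d.\ case (in sharp contrast to the general case, cf.\ \cref{triangle}); \cref{small-iid}, the easy observation that recording, at each multiple of $\varepsilon$ in $[p,H]$, whether an ironed interval starts/ends there bounds $\bigl|\lfloor S\rfloor_\varepsilon\bigr|$ by $\exp(\poly(H,\nicefrac{1}{\varepsilon}))$; and---the single most delicate ingredient---\cref{converge-one-iid}, the i.i.d.\ concentration inequality powering \cref{uniform-iid}. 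As the accompanying remark stresses, this last inequality does \emph{not} reduce to \cref{converge-one} or to the bound of \cite{Babichenko-Barman-Peretz}, since a tuple drawn from $\hat F^n$ need not be distributed as $F^n$ (it may, e.g., repeat a value), so it requires its own argument; once it is established and pushed through a union bound over the small set $\lfloor S\rfloor_\varepsilon$ to yield \cref{uniform-iid}, the assembly above finishes the proof.
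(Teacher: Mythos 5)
Your proposal is correct and is essentially the paper's own proof: the paper proves \cref{empirical-iid} by declaring it ``analogous to the proof of \cref{empirical}'' with \cref{elkind-iid,exists-and-efficient-iid,uniform-iid} substituted for the single-item ingredients, and your four-step chain with $\nicefrac{\varepsilon}{3}$ losses (uniform convergence, optimality of the empirical Myersonian auction, rounding-down loss bounded by the winning probability times $\nicefrac{\varepsilon}{3}$, uniform convergence again) is exactly that analogy spelled out.
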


\subsection{Proofs for Appendix~\refintitle{iid}}\label{proofs-iid}

\begin{proof}[Proof \cref{distribution-independent}]
For a value $v\in[0,H]$, define
\[
h_{(p,I)}(v)\eqdef\begin{cases} p & v\le p \\ h & \exists [\ell,h)\in I: v\in[\ell,h) \\ v & \mbox{otherwise,}\end{cases}
\]
and
\[
\ell_{(p,I)}(v)\eqdef\begin{cases} p & v\le p \\ \ell & \exists [\ell,h)\in I: v\in[\ell,h) \\ v & \mbox{otherwise,}\end{cases}
\]
and note that for a valuation profile with second-highest bid $v^{(2)}$, the revenue from this profile, assuming that not all bids are lower than the reserve price $p$, is either $h_{(p,I)}(v^{(2)})$ or $\ell_{(p,I)}(v^{(2)})$.

Let $(v_1,\ldots,v_n)\in[0,H]^n$ be a valuation profile. To show that $r^{(p,I)}(v_1,\ldots,v_n)<r^{\epsfloor{(p,I)}}(v_1,\ldots,v_n)+\varepsilon$, we reason by cases. We note that if the revenue from $(p,I)$ is zero, then there is nothing to prove. We assume therefore that the highest bid is no less than~$p$, and so also no less than $\epsfloor{p}$.

\begin{itemize}
\item
If the winning bid in $(p,I)$ lies in the same ironed interval (in $(p,I)$) as another bid:

Let $v^{(2)}$ be the second-highest bid (which may or may not be the winning bid). In this case, $r^{\epsfloor{(p,I)}}(v_1,\ldots,v_n)\ge \ell_{\epsfloor{(p,I)}}(v^{(2)})>\ell_{(p,I)}(v^{(2)})-\varepsilon= r^{(p,I)}(v_1,\ldots,v_n)-\varepsilon$.

\item
If the winning bid in $(p,I)$ does not lie in the same ironed interval (in $(p,I)$) as another bid:

Let $v^{(2)}$ be the second-highest bid.
\begin{itemize}

\item
If the winning bid in $\epsfloor{(p,I)}$ lies in the same ironed interval (in $\epsfloor{(p,I)}$) as another bid:

This means that $v^{(2)}$ lies inside an ironed interval that ``before rounding'' did not contain it.
Therefore (regardless of whether or not the winning bidder changed), $r^{\epsfloor{(p,I)}}(v_1,\ldots,v_n)\ge \ell_{\epsfloor{(p,I)}}(v^{(2)})=\epsfloor{v^{(2)}} > h_{(p,I)}(v^{(2)})-\varepsilon\ge r^{(p,I)}(v_1,\ldots,v_n)-\varepsilon$, as required.

\item
If the winning bid in $\epsfloor{(p,I)}$ does not lie in the same ironed interval (in $\epsfloor{(p,I)}$) as another bid:

Recall that the revenue from $(p,I)$ is
$\ell_{(p,I)}(v^{(2)})$ if $v^{(2)}$ lies in an ironed interval (in $(p,I)$) containing bids only from bidders with indices higher than that of the winner, and otherwise $h_{(p,I)}(v^{(2)})$.
Similarly, the revenue from $\epsfloor{(p,I)}$ is
$\ell_{\epsfloor{(p,I)}}(v^{(2)})$ if $v^{(2)}$ lies in an ironed interval (in $\epsfloor{(p,I)}$) containing bids only from bidders with indices higher than that of the winner, and otherwise $h_{\epsfloor{(p,I)}}(v^{(2)})$.

Since $h_{\epsfloor{(p,I)}}(v^{(2)})>h_{(p,I)}(v^{(2)})-\varepsilon$ and $\ell_{\epsfloor{(p,I)}}(v^{(2)})>\ell_{(p,I)}(v^{(2)})-\varepsilon$ (and since $h_{(p,I)}(v^{(2)})\ge \ell_{(p,I)}(v^{(2)})$ and $h_{\epsfloor{(p,I)}}(v^{(2)})\ge \ell_{\epsfloor{(p,I)}}(v^{(2)})$), we need only verify that no significant revenue loss occurred if the winner pays $h_{(p,I)}(v^{(2)})>\ell_{(p,I)}(v^{(2)})$ in $(p,I)$ but pays $\ell_{\epsfloor{(p,I)}}(v^{(2)})<h_{\epsfloor{(p,I)}}(v^{(2)})$ in $\epsfloor{(p,I)}$. In this case, some bid $v'<v^{(2)}$ (of a bidder with low index) that lies in the same ironed interval in $(p,I)$ as $v^{(2)}$ does not lie in the same ironed interval as $v^{(2)}$ in $\epsfloor{(p,I)}$. Furthermore, in this case $v^{(2)}$ lies inside some ironed interval in $\epsfloor{(p,I)}$. This means that $v^{(2)}$ lies inside an ironed interval in $\epsfloor{(p,I)}$ that ``before rounding'' did not contain it. Therefore, $r^{\epsfloor{(p,I)}}(v_1,\ldots,v_n)=\ell_{\epsfloor{(p,I)}}(v^{(2)})=\epsfloor{v^{(2)}} >h_{(p,I)}(v^{(2)})-\varepsilon=r^{(p,I)}(v_1,\ldots,v_n)-\varepsilon$, as required.\qedhere
\end{itemize}
\end{itemize}
\end{proof}

\begin{proof}[Proof of \cref{exists-and-efficient-iid}]
Immediate from \cref{distribution-independent}, noting that no revenue loss occurs for valuation profiles for which the auction $(p,I)$ chooses no winner.
\end{proof}

\begin{proof}[Proof of \cref{converge-one-iid}]
The main idea behind the proof is similar to the one behind that of \cref{converge-one}, but keeping the samples independent is somewhat more delicate and involved.
Let $v^1,\ldots,v^t$ be $t$ independently drawn random values sampled from $F$.
Let \[K \eqdef \bigl\{(k_1,\ldots,k_n) ~\big|~ k_1=1 \And \forall i\in\{2,\ldots,n\}: k_i \in \{1,\ldots,t\}\setminus\{k_1,\ldots,k_{i-1}\}\bigr\},\]
and note that $|K|=\frac{(t-1)!}{(t-n)!}$.
Fix $(k_1,\ldots,k_n)\in K$.
Since $k_1,\ldots,k_n$ are distinct, we note that for every $j\in\{1,\ldots,t\}$, the tuple $(v^{k_1+j},\ldots,v^{k_n+j})$ (where addition of $k_i$ and $j$ wraps around from $t$ to $1$) is a random sample from $F^n$. We now greedily partition the set of $t$ $n$-tuples of indices $I=I(k_1,\ldots,k_n)\eqdef\{k_1+j,k_2+j,\ldots,k_n+j\}_{j=1}^t$ into sets such that in each such set, the indices in all coordinates of all $n$-tuples are distinct. Let $I_0$ be a maximum subset of $I$ where all indices are distinct, let $I_1$ be a maximum subset of $I\setminus I_0$ where all indices are distinct, let $I_2$ be a maximum subset of $I\setminus (I_0\cup I_1)$ where all indices are distinct, etc.

We first claim that $I_0\cup I_1\cup\cdots\cup I_{n\cdot(n-1)} = I$, i.e., that this process concludes after at most $n\!\cdot(n\!-\!1)\!+\!1$ greedy steps.
To show this, it is enough to show that $|I_i| \ge \frac{|I \setminus \{I_0,\ldots, I_{i-1}\}|}{n\cdot(n-1)+1-i}$ for every $i=0,\ldots,n\!\cdot\!(n\!-\!1)$, as this implies that $I_{n\cdot(n-1)}=I\setminus\{I_0,\ldots,I_{n\cdot(n-1)-1}\}$. Indeed, for each such $i$, each $n$-tuple in $I_i$ ``intersects'' at most $n\!\cdot\!(n\!-\!1)$ other $n$-tuples in~$I$ (for every $j$, the $j$th coordinate of this $n$-tuple ``intersects'' the $k$th coordinate of precisely one other $n$-tuple in $I$, for $k\ne j$), at least one of which is contained in each of $I_0,\ldots,I_{i-1}$ (otherwise, this $n$-tuple could have been added to one of these sets), and so each $n$-tuple in $I_i$ ``blocks'' at most $n\!\cdot\!(n\!-\!1)\!-\!i$ $n$-tuples from $I \setminus \{I_1,\ldots, I_{i-1}\}$ from being added to $I_i$, obtaining that $|I_i| \ge \frac{|I \setminus \{I_0,\ldots, I_{i-1}\}|}{n\cdot(n-1)+1-i}$ (under a worst-case scenario where the sets of ``blocked'' $n$-tuples are disjoint for every two ``blocking'' $n$-tuples), as required.

By construction, for every $i$, we have that $(v^{k_1+j},\ldots,v^{k_n+j})$ for all $j$ such that $({k_1+j},\ldots,{k_n+j})\in I_i$, are $|I_i|$ \emph{independent} random samples from $F^n$. Therefore, for every $i$ with $|I_i|\ge\sqrt{t}$, letting $r^i_{k_1,\ldots,k_n}\eqdef \frac{1}{|I_i|}\sum_{(k_1+j,\ldots,k_n+j)\in I_i} r(v^{k_1+j},\ldots,v^{k_n+j})$, we have by the Chernoff-Hoeffding Inequality that
\[\PP\Bigl(\bigl|r^i_{k_1,\ldots,k_n}-\EE_{F^n}r\bigr|\ge \nicefrac{\varepsilon}{2}\Bigr) \le 2\exp\left(-\frac{|I_i|\varepsilon^2}{2H^2}\right)\le 2\exp\left(-\frac{\sqrt{t}\varepsilon^2}{2H^2}\right).\]
Let $U(k_1,\ldots,k_n)\eqdef\bigcup_{i:|I_i|\ge\sqrt{t}} I_i$. By definition, for $t\ge\bigl(n\cdot(n-1)+1\bigr)\cdot\sqrt{t}$, there exists $i$ such that $|I_i|\ge\sqrt{t}$ and so we have that $\bigl|U(k_1,\ldots,k_n)\bigr|=|I|-\bigl|\bigcup_{i:|I|<\sqrt{t}}I_i\bigr|>t-n\cdot(n-1)\cdot\sqrt{t}$.

Let now $U\eqdef\bigcup_{(k_1,\ldots,k_n)\in K} U(k_1,\ldots,k_n)$.
Taking the union bound, we have that
\[\PP\Biggl(\biggl|\Bigl(\tfrac{1}{|U|}\cdot\smashoperator{\sum_{\substack{(k_1,\ldots,k_n)\in K \\ i:|I_i(k_1,\ldots,k_n)|\ge\sqrt{t}}}}\bigl|I_i|\cdot r^i_{k_1,\ldots,k_n}\Bigr)-\EE_{F^n}r\biggr|\ge\nicefrac{\varepsilon}{2}\Biggr) \le \frac{(t-1)!}{(t-n)!}\cdot\bigl(n\cdot(n-1)+1\bigr)\cdot2\exp\left(-\frac{\sqrt{t}\varepsilon^2}{2H^2}\right).\]
Noting that by definition,
\[
\EE_{\hat{F}^n}r=\frac{|U|}{t^n}\cdot\LaTeXoverbrace{\tfrac{1}{|U|}\cdot\smashoperator{\sum_{\substack{(k_1,\ldots,k_n)\in K \\ i:|I_i(k_1,\ldots,k_n)|\ge\sqrt{t}}}}|I_i|\cdot r^i_{k_1,\ldots,k_n}}^{\approx\EE_{F^n}r}
+
\left(1-\frac{|U|}{t^n}\right)\cdot\LaTeXoverbrace{\tfrac{1}{t^n-|U|}\cdot\smashoperator{\sum_{\substack{(k_1,\ldots,k_n) \in \\ \{1,\ldots,t\}^n\setminus U}}} r(v^{k_1},\ldots,v_n^{k})}^{\le H},
\]
and noting that $|U|>|K|\cdot \bigl(t-n\cdot(n-1)\cdot\sqrt{t}\bigr)=\frac{(t-1)!}{(t-n)!}\cdot \bigl(t-n\cdot(n-1)\cdot\sqrt{t}\bigr)$,
we conclude the proof as there exists $t=\poly(H,n,\nicefrac{1}{\varepsilon},\log\nicefrac{1}{\delta})$ such that both 
\[\frac{(t-1)!}{(t-n)!}\cdot\bigl(n\cdot(n-1)+1\bigr)\cdot2\exp\left(-\frac{\sqrt{t}\varepsilon^2}{2H^2}\right)\le\delta\qquad\text{and}\]
\[\frac{\frac{(t-1)!}{(t-n)!}\cdot\bigl(t-n\cdot(n-1)\cdot\sqrt{t}\bigr)}{t^n}\ge1-\frac{\varepsilon}{2H}.\qedhere\]
\end{proof}

\begin{proof}[Proof of \cref{uniform-iid}]
Analogous to the proof of \cref{uniform}, with $S^n_{\varepsilon}$ replaced with $\epsfloor{S}$, and with \cref{small,converge-one} replaced with \cref{small-iid,converge-one-iid}, respectively.
\end{proof}

\begin{proof}[Proof of \cref{empirical-iid}]
Analogous to the proof of \cref{empirical},
with $S^n_{\varepsilon/(n+2)}$ replaced with $\lfloor S\rfloor_{\varepsilon/3}$,
and with \cref{elkind,efficient,uniform} replaced with \cref{elkind-iid,exists-and-efficient-iid,uniform-iid}, respectively.
\end{proof}

\end{document}